\newcommand{\R}{\mathbb{R}}
\newcommand{\C}{\mathbb{C}}
\newcommand{\A}{\mathcal{A}}
\newcommand{\CN}{\mathcal{CN}}
\newcommand{\As}{\mathcal{A}^*}
\newcommand{\frob}[1]{ \left\| #1 \right\|_{F} }
\newcommand{\nucl}[1]{ \left\| #1 \right\|_{*}}
\newcommand{\twonorm}[1]{\left\| #1 \right\|_{2}}
\newcommand{\spec}[1]{\left\| #1 \right\|}
\newcommand{\scal}[1]{\left\langle #1 \right\rangle }
\newcommand{\cone}[1]{\mathcal{K}_{*}( #1 )}
\newcommand{\tr}{\text{tr \,}}
\newcommand{\Tr}{\text{Tr}\,}
\newcommand{\tangent}[1]{\mathcal{T}_{#1}}
\newcommand{\tangentperp}[1]{\mathcal{T}^{\perp}_{#1}}
\newcommand{\PT}[1]{\mathcal{P}_{\tangent{#1}}}
\newcommand{\PTperp}[1]{\mathcal{P}_{\tangentperp{#1}}}
\numberwithin{equation}{section}
\theoremstyle{theorem}
\newtheorem{theorem}{Theorem}[section]
\theoremstyle{theorem}
\newtheorem{lemma}[theorem]{Lemma}
\theoremstyle{definition}
\newtheorem{definition}{Definition}[section]
\theoremstyle{theorem}
\newtheorem{proposition}{Proposition}[section]
\newtheorem*{remark}{Remark}
\title{How robust is randomized blind deconvolution via nuclear norm minimization against adversarial noise?}
\author[1,2]{Julia Kostin}
\author[1,2,3]{Felix Krahmer}
\author[4]{Dominik Stöger}
\affil[1]{Technical University of Munich, Department of Mathematics}
\affil[2]{Munich Center for Machine Learning}
\affil[3]{Technical University of Munich, Munich Data Science Institute}
\affil[4]{KU Eichst\"att-Ingolstadt, Mathematical Institute for Machine Learning and Data Science (MIDS)}
\date{March 16, 2022}
\begin{document}

\maketitle

\begin{abstract}

In this paper, we study the problem of recovering two unknown signals from their convolution, which is commonly referred to as blind deconvolution. Reformulation of blind deconvolution as a low-rank recovery problem has led to multiple theoretical recovery guarantees in the past decade due to the success of the nuclear norm minimization heuristic. In particular, in the absence of noise, exact recovery has been established for sufficiently incoherent signals contained in lower-dimensional subspaces. However, if the convolution is corrupted by additive bounded noise, the stability of the recovery problem remains much less understood. In particular, existing reconstruction bounds involve large dimension factors and therefore fail to explain the empirical evidence for dimension-independent robustness of nuclear norm minimization. Recently, theoretical evidence has emerged for ill-posed behavior of low-rank matrix recovery for sufficiently small noise levels. In this work, we develop improved recovery guarantees for blind deconvolution with adversarial noise which exhibit square-root scaling in the noise level. Hence, our results are consistent with existing counterexamples which speak against linear scaling in the noise level as demonstrated for related low-rank matrix recovery problems. 

\end{abstract}

\textbf{Keywords:} blind deconvolution, nuclear norm minimization, convex relaxation, adversarial noise, low-rank matrix recovery.

\section{Introduction}

Blind deconvolution refers to the problem of recovering two unknown signals $x$ and $w$ from their convolution
\begin{equation}\label{equ:circconv}
    y= w \ast x.
\end{equation}
This problem and its generalization, blind demixing, arise from various applications in signal processing, including wireless communications \cite{wirelesscomm1998,wirelesscomm2019} and imaging \cite{imaging1996,imaging2009,imaging2017}. 
For arbitrary pairs of signals $(w, x)$ the convolution map is not invertible and thus, the problem is inherently ill-posed. 
In this paper, we focus on the problem of randomized blind deconvolution as introduced in \cite{ahmedBD}, which appears, e.g., in the context of channel estimation.
As in \cite{ahmedBD}, we assume that convolution the in \eqref{equ:circconv} is circular and that $w$ and $x$ are contained in some known subspaces.
\\[5pt]
One approach for solving the randomized blind deconvolution problem, introduced by \cite{ahmedBD}, consists of recasting this bilinear inverse problem as a linear rank-1 matrix recovery problem. 
Indeed, since the signals can be only recovered up to a scaling factor, reconstructing $w$ and $x$ is equivalent to recovering their rank-$1$ outer product. \\[5pt]
Following \cite{ahmedBD}, we assume in the following that $x, w \in \C^L$ are contained in some known lower-dimensional subspaces

\begin{align*}
    w &= B h_0, \quad h_0 \in \C^{K}, \\
    x &= C m_0, \quad m_0 \in \C^{N},
\end{align*}
where $B \in \C^{L \times K}$ is a deterministic matrix and $C \in \C^{L \times N}$ is a random matrix. Then the blind deconvolution problem can be reformulated as the problem of recovering the rank-$1$ matrix $h_0 m_0^*$ from a set of $L$ measurements $\A(h_0 m_0^*)$ described by a linear operator $\A: \C^{K \times N} \to \C^L$, see Section \ref{background:BlindDeconvolution} for details.
\\[5pt]
A natural approach to recover the rank-1 ground truth is to find a lowest-rank matrix which results in the given measurements.
However, since this is an NP-hard problem \cite{nphard} one often considers the nuclear norm minimization approach instead, a convex heuristic \cite{RechtFazelGuaranteed2010}, where one instead aims to minimize the nuclear norm of the matrix subject to the measurement constraints:
\begin{equation}\label{noiselessSDP}
\begin{split}
    &\text{minimize } \nucl{X} \\
    &\text{subject to } \A(X) = y.
\end{split}
\end{equation}
In the absence of noise, \cite{ahmedBD} established exact recovery of the ground truth with high probability given a near-optimal number of measurements $\frac{L}{\log^3 L} \gtrsim \max \left( \mu^2_{\max} K, \mu^2_h N \right)$, where $\mu_{\max}$ and $\mu_{h}$ are coherence parameters, describing roughly the spread of rows of $B$ in the Fourier space and the alignment of $h_0$ with respect to $B$ respectively (see Section \ref{background:BlindDeconvolution}). 
\\[5pt]
However, establishing near-optimal recovery guarantees becomes much more challenging when the measurements are corrupted with noise, i.e., the observations are given by
\begin{equation*}
y = \A(h_0 m_0^*) + e,
\end{equation*}
where the entries of the noise vector $e \in \C^L$ are often either assumed to be random (such as i.i.d. Gaussian \cite{LiStrohmerRapidDBNonconvex2016,HuangHandBDRiemannian2018} or sub-Gaussian \cite{chenBDminimaxoptimal2021}) or  the noise vector $e$ is assumed to be deterministic and bounded with respect to the  $\ell_2$-norm \cite{ahmedBD,strohmer2017demixing,jung2017optimal}.
The latter is sometimes referred to the adversarial noise scenario, as it allows for noise specifically designed to be produce maximal possible reconstruction error.
\\[5pt]
In the case where the noise vector is deterministic and bounded in $\ell_2$-norm, i.e., satisfies $\twonorm{e} \leq \tau$ for some noise level $\tau > 0$, noisy blind deconvolution can be tackled via the constrained nuclear minimization program 
\begin{equation}\label{noisySDPintro}
\begin{split}
    &\text{minimize } \nucl{X} \\
    &\text{subject to } \twonorm{\A(X) - y} \leq \tau.
\end{split}
\end{equation}
Despite the exact recovery guarantees for noiseless measurements, robustness of blind deconvolution via the program \eqref{noisySDPintro} remains much less understood. In particular, existing recovery guarantees \cite{ahmedBD,strohmer2017demixing,jung2017optimal}, which rely on techniques based on dual certificates (see Section \ref{background:DCandtheGolfingScheme}), take the form 
\begin{equation}\label{prevGuarantees}
    \frob{X^* - X_0} \lesssim \sqrt{\min\left(K, N \right)} \tau,
\end{equation}
where $X^*$ denotes a minimizer of \eqref{noisySDPintro}. 
Crucially, in \eqref{prevGuarantees} an additional multiplicative factor $\sqrt{\min\left(K, N \right)} $ appears.
This is in stark contrast to comparable results for low-rank matrix recovery problems involving Gaussian measurement ensembles \cite{CandesTightOracle} and Phase Retrieval \cite{candesli_2014,kueng2014phase}, where this additional dimension factor does not appear. 
\\[5pt]
This raises the natural question whether this additional dimension factor can be removed.
Recently, \cite{krahmer2020} have shown that this additional dimension factor is indeed not an artifact of the proof. 
Namely, there exist incoherent isometric embedding maps $B$, such that for sufficiently small noise levels, the problem admits an alternative solution $\tilde{X}$ which is feasible, preferred and far from the true solution in the sense that
\begin{equation*}
    \frob{\tilde{X}-X_0} \gtrsim \tau \sqrt{\frac{KN}{L}},
\end{equation*}
see Theorem \ref{instability} below.
\\[5pt]
Utilizing the fact that the unstable behaviour only arises for very small noise levels, \cite{krahmer2020} provide a near-optimal error bound for blind deconvolution, which, however, only holds for sufficiently large noise levels. Hence, it remains unclear whether a near-optimal recovery bound holds uniformly for all noise levels and whether the existing guarantees can be improved. 
\\[5pt]
In this article, we make a step towards understanding the noise-dependent robustness of blind deconvolution in the adversarial setting by providing improved recovery guarantees which continuously depend on the noise level. For small noise levels, our result allows for quadratic scaling of the reconstruction error which is consistent with the unstable behaviour demonstrated in \cite{krahmer2020}. For larger noise levels, we provide linear scaling in the noise level given a near-optimal number of measurements. \\
\textbf{Theorem 3.1 (informal).} \textit{If the number of measurements $L$ exceeds $\mathcal{O}(\left( K + N \right) \log^3 L)$, then it holds with high probability for any minimizer $X^*$ of the convex program \eqref{noisySDPintro} that }
\begin{equation*}
    \frob{X^* - X_0} \lesssim \max \left\{ (\log( L))^{1/4} \sqrt{\frob{X_0} \tau }, \sqrt{\log( L)} \tau \right\}.
\end{equation*}

In our analysis of the problem, we combine refined descent cone analysis of the nuclear norm (see Section \ref{background:DescentConeAnalysis}) with the classical proof methods based on so-called approximate dual certificates introduced in \cite{noisyMC,gross11} for matrix completion (see also Section \ref{background:DCandtheGolfingScheme}).
\\[5pt]
The dimension of the lifted nuclear norm minimization problem greatly exceeds the number of degrees of freedom of the original problem, and thus solving \eqref{noisySDPintro} in practice is computationally demanding. Recently, memory-efficient algorithms have been suggested to speed up solution of constrained semidefinite programs \cite{troppSDP}. 
On the other hand, blind deconvolution and related low-rank matrix recovery problems have been successfully tackled via nonconvex optimization \cite{LiStrohmerRapidDBNonconvex2016,HuangHandBDRiemannian2018,chenBDminimaxoptimal2021}, which is faster and more computationally efficient. \cite{LiStrohmerRapidDBNonconvex2016} and \cite{HuangHandBDRiemannian2018} established robustness guarantees for blind deconvolution via regularized Wirtinger gradient descent with spectral initialization for a near-optimal number of measurements in the case that the noise $e$ is i.i.d. complex Gaussian. \cite{chenBDminimaxoptimal2021} significantly improved existing statistical guarantees for blind deconvolution under sub-Gaussian noise by utilizing the observation that the solutions of the convex and nonconvex problems are very close. This allowed the authors to utilize nonconvex techniques to establish robustness of the convex program: for a noise vector with entries obeying $\spec{e_i}_{\psi_2} \leq \sigma$,\footnote{Here, $\spec{\cdot}_{\psi_2}$ is the sub-Gaussian norm, see, e.g., \cite{vershynin_2018}.} and a number of measurements $L \gtrsim \mu^2 \max \{K, N\} \log^9 L$, it holds with high probability that
\begin{equation}\label{chenminimaxbound}
    \spec{X^* - X_0} \lesssim \sqrt{K \log L} \sigma.
\end{equation}
This improves previous recovery bounds for convex relaxation such as \cite{ahmedBD} by a factor $\sqrt{\frac{L}{\log L}}$. However, the result assumes an upper bound on the noise level $\sigma \leq c \frac{\frob{h_0 m_0^*}}{\sqrt{K \log^5 L}}$. Furthermore, establishing closeness of the approximate nonconvex and the convex solutions in \cite{chenBDminimaxoptimal2021} requires sub-Gaussianity of the noise entries. Therefore, the employed proof methods do not readily extend to the adversarial noise setting. Since the evidence in \cite{krahmer2020} suggests significant instability of blind deconvolution with adversarial noise, recovery guarantees of form \eqref{chenminimaxbound} cannot hold with high probability, and thus, a different analysis is required for the adversarial setting. The proof techniques presented in this work do not rely on randomness of the noise, and hence, they could be potentially interesting for establishing adversarial robustness of both convex and nonconvex algorithms for blind deconvolution and related low-rank matrix recovery problems such as matrix completion and phase retrieval.

\subsection{Organisation of the paper and our contribution}
In Section \ref{background}, we review the formulation of blind deconvolution as a low-rank matrix recovery problem and summarize previous results establishing exact and robust recovery for blind deconvolution via convex programming as well as their key proof techniques.  
    In Section \ref{mainresult}, we present our main result, a noise-dependent reconstruction error bound for blind deconvolution via nuclear norm minimization, and formulate the key lemmas necessary for its proof. Section \ref{proofoftheauxiliaryresults} provides proofs of the these lemmas. We discuss implications of our findings and remaining open questions in Section \ref{outlook}. 
\\[5pt]
Our result contributes to the understanding of (in)stability phenomena in blind deconvolution, and more generally, low-rank matrix recovery problems corrupted with adversarial noise. Our proof is based on a novel, more refined descent cone analysis of rank-$1$ matrices.
To the best of our knowledge, this is the first result establishing recovery guarantees for arbitrary noise levels without amplification factors which scale polynomially in the dimension of the problem.

\section{Background and related work}\label{background}
\subsection{Blind Deconvolution}\label{background:BlindDeconvolution}
In this section, we will summarize the reformulation of blind deconvolution as a low-rank matrix recovery problem, including the subspace constraints on the ground truth signals and the incoherence assumptions. 
\\[5pt]
We want to recover two unknown signals $w \in \C^L$ and $x \in \C^L$ from their circular convolution 
\begin{equation*}
    w \circledast x := \left( \sum_{j=1}^{L} w_j x_{k-j} \right)_{k=1}^{L}.
\end{equation*}
We assume that $w$ is constrained in a $K$-dimensional subspace of $\C^L$ defined by a deterministic isometry $B \in \C^{L \times K}$, $B^* B = Id_K$:
\begin{equation*}
    w = B h_0, \quad h_0 \in \C^{K} \setminus \{ 0 \}.
\end{equation*}
This assumption was originally motivated by applications in wireless communications \cite{ahmedBD}, where $B$ consists of a subset of columns of the identity matrix. 
The second signal $x \in \C^L$ is assumed to be an element of an $N$-dimensional random subspace generated by a matrix $C \in \C^{L \times N}$ with i.i.d. complex Gaussian entries $C_{ij} \sim \CN(0,\frac{1}{\sqrt{L}}) $:
\begin{equation*}
    x = C \overline{m}_0, \quad m_0 \in \C^{N} \setminus \{ 0 \},
\end{equation*}
where the complex conjugate is used for a more convenient presentation below. This assumption can be seen as encoding the original message $m_0$ using random waveforms \cite{randomchannelcoding2009}.  \\[5pt]
Although the convolution map is bilinear in $w$ and $x$, the discrete Fourier transform of $w \circledast x$ can be rewritten as a map linear in the outer product $h_0 m_0^*$ of the original signals. Namely, by the convolution theorem we have
\begin{equation}\label{FourierConv}
    F(w \ast x) = \sqrt{L} Fw \odot Fx = \sqrt{L} (F B h_0) \odot (F C \overline{m_0}),
\end{equation}
where $F$ denotes the normalized, unitary discrete Fourier matrix.
We denote by $b_\ell$ the $\ell$th row of the matrix $\overline{FB}$ and by $c_{\ell}$ the $\ell$th row of $\sqrt{L} F C$. With this notation, the $\ell$th entry of the Fourier transform of the convolution can be written as 
\begin{equation}\label{FourierToLinear}
    F(w \ast x)_\ell = \scal{b_\ell, h_0} \scal{m_0, c_\ell} = b_{\ell}^* h_0 m_0^* c_{\ell} = \tr (h_0 m_0^* c_{\ell} b_{\ell}^*) = \scal{b_{\ell} c_{\ell}^*, h_0 m_0^* }_F.
\end{equation}
Thus, observing the circular convolution of $w$ and $x$ is equivalent to observing the set of linear rank-$1$ measurements $\left( \scal{b_{\ell} c_{\ell}^*, h_0 m_0^* }_F \right)_{\ell=1}^{L}$. Since $w$ and $x$ can only be recovered up to an inherent scaling ambiguity, recovering $(w, x)$ is equivalent to reconstructing $X_0 \coloneqq h_0 m_0^*$. 
Thus, we define the measurement operator as introduced in \cite{ahmedBD}:
\begin{align}\label{LinearOperator}
\begin{split}
    &\A:\, \C^{K \times N} \longrightarrow \C^{L} \\   
    &(\A(X))_\ell \coloneqq \scal{b_\ell c_\ell^*, X}_F.
\end{split}
\end{align}
The following two quantities have been shown in \cite{ahmedBD} to be crucial for the success of the reconstruction of the ground truth signals:
\begin{align*}
    &\mu^2_{\max} \coloneqq \frac{L}{K} \max_{\ell \in [L]} \twonorm{b_\ell}^2, \\
    &\mu^2_{h_0} \coloneqq \frac{L}{\twonorm{h_0}^2} \max_{\ell \in [L]} \left| \scal{b_{\ell}, h_0}\right|^2.
\end{align*}
It follows from these definitions that $1 \leq \mu^2_{\max} \leq L/K$ and $1 \leq \mu^2_{h_0} \leq K \mu^2_{\max}$. 

\subsection{Nuclear norm minimization}\label{background:nuclearnormminimization}

A natural approach to recover $X_0$ is to search for a minimum-rank matrix $X^*$ satisfying $\A(X^*) = \A(X_0)$. However, rank minimization is NP-hard \cite{nphard} and thus not computationally tractable. A common convex heuristic for low-rank matrix recovery problems, introduced by \cite{fazel2002}, is minimizing the nuclear norm of a matrix (i.e., the sum of its singular values) subject to a set of measurement constraints:
\begin{equation}\label{noiselessSDP}
\begin{split}
    &\text{minimize } \nucl{X}  \\
    &\text{subject to } \A(X) = y.
\end{split}
\end{equation}
In the absence of noise, the semidefinite program \eqref{noiselessSDP} recovers $X_0$ exactly with high probability given a near-optimal number of observations.
We include the more recent version of this result, which originally was derived in the context of the more general framework of blind demixing \cite{strohmer2017demixing,jung2017optimal} – reconstruction of multiple pairs of signals $\left( (w_i, x_i) \right)_{i=1}^{r}$ from a sum of their circular convolutions.

\begin{theorem}[
\cite{jung2017optimal}]\label{NoiselessRecovery}
    Let $\omega \geq 1$, $h_0 \in \C^K$, $m_0 \in \C^N$ and $y = \A(h_0 m_0^*) \in \C^L$. Assume that
    \begin{equation*}
        L \geq C \omega \left( K \mu_{\max}^2 \log(K \mu_{\max}^2) + N \mu_{h_0,\omega}^2 \right) \log^3 L,
    \end{equation*}
    where $C > 0$ is an absolute constant and $\mu_{h_0,\omega}$ is a coherence parameter arising from the Golfing scheme (see Section \ref{appendix:golfingscheme}, Definition \eqref{muhomegadefinition}).  Then, with probability $1 - \mathcal{O}(L^{-\omega})$, the matrix $X_0 = h_0 m_0^*$ is the unique minimizer of the convex program \eqref{noiselessSDP}.
\end{theorem}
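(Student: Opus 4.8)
The plan is to follow the standard route for exact recovery via nuclear norm minimization: reduce the uniqueness assertion to two events --- a restricted near-isometry of $\A$ on the tangent space $\tangent{X_0}$ at the rank-one matrix $X_0 = h_0 m_0^*$, and the existence of an approximate dual certificate $Y$ in the range of $\A^*$ --- and then verify both with the claimed probability. Throughout, write $u = h_0/\twonorm{h_0}$, $v = m_0/\twonorm{m_0}$, so that $\mathrm{sgn}(X_0) = u v^*$ and $\tangent{X_0} = \{ u y^* + x v^* : x \in \C^K,\ y \in \C^N \}$, and note that $\A^*(z) = \sum_{\ell=1}^L z_\ell\, b_\ell c_\ell^*$ with $\E{\A^*\A} = \mathrm{Id}$, since $\sum_\ell b_\ell b_\ell^* = \mathrm{Id}_K$ (the columns of $\overline{FB}$ are orthonormal because $B$ is an isometry and $F$ unitary) and $\E{c_\ell c_\ell^*} = \mathrm{Id}_N$.

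First I would record the deterministic optimality condition: if $\spec{\PT{X_0}\A^*\A\PT{X_0} - \PT{X_0}} \le \tfrac12$ (as an operator restricted to $\tangent{X_0}$, which in particular makes $\A$ injective there) and there exists $Y = \A^*(z)$ with $\frob{\PT{X_0}(Y) - uv^*}$ sufficiently small and $\spec{\PTperp{X_0}(Y)} \le \tfrac12$, then $X_0$ is the unique minimizer of \eqref{noiselessSDP}. This is the usual convex-analytic argument: for any feasible $X_0 + H$ with $H \ne 0$, split $H = \PT{X_0}(H) + \PTperp{X_0}(H)$, apply the subgradient inequality for $\nucl{\cdot}$, use feasibility $\A(H)=0$ (which annihilates $\scal{Y, H}$), and invoke the near-isometry to bound $\frob{\PT{X_0}(H)}$ by $\frob{\PTperp{X_0}(H)}$; together these force $H=0$. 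This reduces the theorem to two probabilistic statements.

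Next comes the probabilistic core. The near-isometry $\spec{\PT{X_0}\A^*\A\PT{X_0} - \PT{X_0}} \le \tfrac12$ I would obtain by a matrix Bernstein / concentration argument adapted to the mixed deterministic--Gaussian structure of the rank-one measurement matrices $b_\ell c_\ell^*$: the variance proxy and the uniform bound on the summands produce exactly the quantities $\max_\ell \twonorm{b_\ell}^2 = \tfrac{K}{L}\mu_{\max}^2$ and $\max_\ell |\scal{b_\ell, h_0}|^2 = \tfrac{\twonorm{h_0}^2}{L}\mu_{h_0}^2$, which is why the sampling rate scales with $K\mu_{\max}^2$ and with $N$ times a coherence factor; the $\log^3 L$ and the $\mu_{\max}^2\log(K\mu_{\max}^2)$ refinement come from a truncation step taming the heavy Gaussian tails of the $c_\ell$. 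The dual certificate I would build by the golfing scheme: partition $[L]$ into $p = \Theta(\log L)$ disjoint batches $\Gamma_1,\dots,\Gamma_p$ of comparable size, let $\A_k$ be the (rescaled) restriction of $\A$ to $\Gamma_k$, set $W_0 = uv^*$, and iterate $Y_k = Y_{k-1} + \A_k^*\A_k(W_{k-1})$, $W_k = uv^* - \PT{X_0}(Y_k)$. Batch-wise near-isometry on $\tangent{X_0}$ makes $\twonorm{W_k}$ contract by a constant factor per step, so $\frob{W_p}$ is exponentially small in $p$ and $\PT{X_0}(Y_p) \approx uv^*$; meanwhile $\spec{\PTperp{X_0}(Y_p)} \le \sum_k \spec{\PTperp{X_0}\A_k^*\A_k(W_{k-1})}$ is controlled via an $\ell_\infty$-to-operator-norm estimate for each $\A_k^*\A_k$ together with the geometric decay. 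Combining the two events, the hypotheses of Step 1 hold with probability $1 - \mathcal{O}(L^{-\omega})$.

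The main obstacle, as usual, is the certificate construction: because only $C$ is random while $B$ is deterministic and possibly spiky in the Fourier domain, the intermediate residuals $W_k$ need not stay incoherent on their own, yet the $\ell_\infty$-to-operator-norm bounds require control of $\max_\ell \twonorm{b_\ell^* W_k}$ and $\max_\ell \twonorm{W_k c_\ell}$ at every iteration. The golfing scheme must therefore be run so that a coherence-type norm of $W_k$ --- not just its Frobenius norm --- decays (or at least remains bounded) through all $\Theta(\log L)$ steps; the parameter $\mu_{h_0,\omega}$ is precisely the data-dependent quantity that emerges from tracking this coherence across the iterations, and $\omega$ governs both the number of batches and the tail exponents. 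Making the three estimates --- Frobenius contraction, coherence control, and the $\ell_\infty \to$ operator-norm bound --- close simultaneously with only logarithmically many batches is what yields the near-optimal sampling rate.
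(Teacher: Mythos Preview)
Your outline is correct and follows essentially the same route as the paper (more precisely, as the cited source \cite{jung2017optimal}; the present paper does not give its own proof of Theorem~\ref{NoiselessRecovery} but recapitulates the relevant Golfing-scheme construction in Appendix~A for the purpose of Proposition~\ref{exactDCexistence}). One refinement is worth pointing out: in the version of the scheme the paper presents (Section~\ref{appendix:golfingscheme}), the Golfing update reads $Y_p = Y_{p-1} + \tfrac{L}{Q}\,\A_p^*\A_p\, S_p(W_{p-1})$ with a \emph{corrector} $S_p = T_p^{-1}$, $T_p = \tfrac{L}{Q}\sum_{k\in\Gamma_p} b_k b_k^*$, inserted so that $\mathbb{E}\bigl[\tfrac{L}{Q}\A_p^*\A_p S_p\bigr]$ equals the identity exactly (the batches are a fixed deterministic partition of $[L]$, so the partial Gram matrices $T_p$ need not be the identity). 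Consequently the coherence parameter $\mu_{h_0,\omega}$ in \eqref{muhomegadefinition} does not arise from propagating an incoherence norm of the iterates $W_k$ as you describe, but from the quantities $\max_{\ell,p} |b_\ell^* S_p h_0|$, i.e., the incoherence of $h_0$ after distortion by the correctors, minimized over all $\omega$-admissible partitions. Your variant --- the corrector-free iteration together with iterate-level coherence tracking --- is the original \cite{ahmedBD} approach and also works, but it yields the slightly rougher coherence $\mu_{h_0}$ and a somewhat larger constant; the corrector buys the cleaner dependence stated in the theorem. Apart from this technical twist, your plan and the paper's coincide.
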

For generic $h_0$ and $B$, $\mu_{\max}$ and $\mu_{h_0,\omega}$ are reasonably small, i.e., at the order of small constant. In this case, the number of samples needed for exact recovery in Theorem \ref{NoiselessRecovery} is near-optimal up to logarithmic factors, since the number of degrees of freedom of the problem at the order of $K+N$.
\\[5pt]
Now, we consider recovery of $X_0$ in the case where the measurements are corrupted by bounded, possibly adversarial noise:
\begin{equation*}
    y = \A(X_0) + e, \; \twonorm{e} \leq \tau, \; \tau \in \R_{+}.
\end{equation*}
The convex recovery program becomes 
\begin{equation}\label{noisySDP}
\begin{split}
    &\text{minimize } \nucl{X}  \\
    &\text{subject to } \twonorm{\A(X)-y} \leq \tau.
\end{split}
\end{equation}
In the presence of noise, exact recovery of $X_0$ is in general no longer possible. However, it can be established that any minimizer of \eqref{noisySDP} is relatively close to the ground truth:
\begin{theorem}[\cite{ahmedBD,jung2017optimal}]\label{NoisyRecovery}
    Given observations $y = \A(X_0) + e \in \C^L$ where $\twonorm{e} \leq \tau$ and under the same conditions as in Theorem \ref{NoiselessRecovery}, with probability at least $1 - \mathcal{O}(L^{-\omega})$ any minimizer $X^*$ of the convex program \eqref{noisySDP} satisfies 
    \begin{equation}\label{BDoldbound}
        \frob{X^*-X_0} \lesssim \tau \sqrt{ \max \left\{ 1; \frac{K \mu^2_{\max} N}{L} \right\} \log L }.
    \end{equation}
\end{theorem}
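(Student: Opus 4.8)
This is the standard robustness counterpart of Theorem~\ref{NoiselessRecovery}, and the plan is to establish it by the classical \emph{inexact dual certificate} argument for nuclear norm minimization, as developed for matrix completion in \cite{noisyMC,gross11} (see Section~\ref{background:DCandtheGolfingScheme}), specialised to the rank-one measurement ensemble $\{b_\ell c_\ell^*\}_{\ell=1}^L$. I would split the argument into (a) extracting a handful of probabilistic ingredients, all valid on a common event of probability $1-\mathcal O(L^{-\omega})$ under the sampling condition of Theorem~\ref{NoiselessRecovery}, and then (b) a short deterministic computation on that event.

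For (a): write $\hat h = h_0/\twonorm{h_0}$, $\hat m = m_0/\twonorm{m_0}$, let $\tangent{X_0} = \{\hat h y^* + x\hat m^*\}$ be the tangent space to the rank-one manifold at $X_0$, and recall that $\hat h\hat m^* + W$ with $\PT{X_0}W = 0$, $\spec{W}\le 1$ is a subgradient of $\nucl{\cdot}$ at $X_0$. The golfing scheme already used for Theorem~\ref{NoiselessRecovery} (Section~\ref{appendix:golfingscheme}) produces: (i) an inexact dual certificate $Y = \As(\lambda)$, $\lambda\in\C^L$, with $\frob{\PT{X_0}Y - \hat h\hat m^*}\le L^{-\omega}$ and $\spec{\PTperp{X_0}Y}\le \tfrac12$; (ii) a local isometry on the tangent space, $\twonorm{\A(Z)}\ge\tfrac{1}{\sqrt2}\frob{Z}$ for all $Z\in\tangent{X_0}$. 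To these I would add (iii) an operator-norm estimate $\spec{\A}^2\coloneqq\sup_{\frob{Z}\le1}\twonorm{\A(Z)}^2\lesssim\max\{1, K\mu_{\max}^2 N/L\}$, together with the coefficient bound $\twonorm{\lambda}^2\lesssim\log L$ coming out of the golfing construction, so that $\spec{\A}^2\twonorm{\lambda}^2\lesssim\max\{1, K\mu_{\max}^2 N/L\}\log L$ (all up to further logarithmic factors); (iii) is of the same nature as (i)--(ii) and rests on Bernstein-type concentration for the heavy-tailed, non-identically distributed summands $b_\ell c_\ell^*$, exactly as in \cite{ahmedBD,jung2017optimal}.

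For (b): let $X^*$ minimise \eqref{noisySDP} and set $H\coloneqq X^* - X_0$. Feasibility and $\twonorm{e}\le\tau$ give $\twonorm{\A(H)}\le\twonorm{\A(X^*)-y}+\twonorm{e}\le2\tau$. Optimality $\nucl{X_0+H}\le\nucl{X_0}$, combined with the subgradient inequality and a choice of $W$ with $\PT{X_0}W=0$, $\spec{W}\le1$ and $\mathrm{Re}\,\scal{W,\PTperp{X_0}H}=\nucl{\PTperp{X_0}H}$, yields $0\ge\nucl{X_0+H}-\nucl{X_0}\ge\mathrm{Re}\,\scal{\hat h\hat m^*,H}+\nucl{\PTperp{X_0}H}$, i.e. $\nucl{\PTperp{X_0}H}\le-\mathrm{Re}\,\scal{\hat h\hat m^*,H}$. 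Since $\hat h\hat m^*\in\tangent{X_0}$, I would then decompose
\begin{equation*}
 \scal{\hat h\hat m^*,H} = \scal{\hat h\hat m^* - \PT{X_0}Y,\,\PT{X_0}H} + \scal{\lambda,\A(H)} - \scal{\PTperp{X_0}Y,\,\PTperp{X_0}H},
\end{equation*}
and bound the moduli of the three terms by $L^{-\omega}\frob{\PT{X_0}H}$, $2\tau\twonorm{\lambda}$ and $\tfrac12\nucl{\PTperp{X_0}H}$ (Hölder for Schatten norms and the certificate properties), which gives $\tfrac12\nucl{\PTperp{X_0}H}\le 2\tau\twonorm{\lambda}+L^{-\omega}\frob{\PT{X_0}H}$. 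On the other hand, the local isometry and $\frob{\cdot}\le\nucl{\cdot}$ give
\begin{equation*}
 \frob{\PT{X_0}H}\le\sqrt2\,\twonorm{\A(\PT{X_0}H)}\le\sqrt2\big(\twonorm{\A(H)}+\spec{\A}\,\nucl{\PTperp{X_0}H}\big)\le\sqrt2\big(2\tau+\spec{\A}\,\nucl{\PTperp{X_0}H}\big).
\end{equation*}
Substituting the first bound into the second and absorbing the resulting $L^{-\omega}\spec{\A}\,\frob{\PT{X_0}H}$ term on the left (the certificate is built so that this coefficient is $\ll 1$) yields $\frob{\PT{X_0}H}\lesssim\tau(1+\spec{\A}\twonorm{\lambda})$, hence also $\nucl{\PTperp{X_0}H}\lesssim\tau(1+\spec{\A}\twonorm{\lambda})$. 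Finally $\frob{H}\le\frob{\PT{X_0}H}+\frob{\PTperp{X_0}H}\le\frob{\PT{X_0}H}+\nucl{\PTperp{X_0}H}\lesssim\tau(1+\spec{\A}\twonorm{\lambda})$, and inserting the bound from (iii) gives \eqref{BDoldbound}.

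The deterministic step (b) is routine convex geometry; the real work is all in (a), and the main obstacle is that the certificate, the local isometry and the operator-norm bound must be produced \emph{simultaneously} on a single event of probability $1-\mathcal O(L^{-\omega})$ at the near-optimal sample complexity — the golfing construction with its $\twonorm{\lambda}^2\lesssim\log L$ bookkeeping and the estimate on $\spec{\A}$ both require careful matrix-concentration arguments and control of the coherence parameters $\mu_{\max}$ and $\mu_{h_0,\omega}$; fortunately these are exactly the estimates already carried out in \cite{ahmedBD,jung2017optimal}, so one may essentially quote them. It is worth flagging where the (suboptimal) dimension factor is born: it is precisely the step $\twonorm{\A(\PTperp{X_0}H)}\le\spec{\A}\,\frob{\PTperp{X_0}H}$, where — because $\PTperp{X_0}H$ need not be low-rank — no restricted-isometry improvement is available and one must pay the full operator norm $\spec{\A}\approx\sqrt{\max\{1,K\mu_{\max}^2 N/L\}}$; this is exactly the lossy point that the main result of this paper (Section~\ref{mainresult}) circumvents through a finer analysis of the descent cone of the nuclear norm at $X_0$.
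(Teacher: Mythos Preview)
Your proposal is correct and follows exactly the standard approximate-dual-certificate route of \cite{ahmedBD,jung2017optimal}; note, however, that the present paper does \emph{not} itself give a proof of Theorem~\ref{NoisyRecovery} --- it is quoted as background from those references, so there is no ``paper's own proof'' to compare against. That said, the probabilistic ingredients you list in part~(a) are precisely the ones the paper collects in its appendix (Lemma~\ref{lemmaNormofA} for the operator-norm bound, Proposition~\ref{RIPdef} for the local isometry on $\tangent{X_0}$, Lemmas~\ref{lemmadecaybehaviour}--\ref{lemmazbound} for the golfing certificate and the $\twonorm{\lambda}$ bound), albeit for the purpose of Proposition~\ref{exactDCexistence} rather than Theorem~\ref{NoisyRecovery}; your deterministic step~(b) is the standard convex computation. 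One minor bookkeeping remark: the paper's Lemma~\ref{lemmaNormofA} gives $\spec{\A}^2 \lesssim \omega \max\{1,K\mu_{\max}^2 N/L\}\log L$ (with a log factor), and its corrected Lemma~\ref{lemmazbound} gives $\twonorm{\lambda}^2 \lesssim \log(\omega L)$, so combining them yields an extra $\sqrt{\log L}$ relative to the bound \eqref{BDoldbound} as stated --- this is consistent with the footnote in the paper noting that the original $\twonorm{\lambda}$ estimate in \cite{jung2017optimal} was slightly optimistic. Your identification of the lossy step $\twonorm{\A(\PTperp{X_0}H)}\le\spec{\A}\frob{\PTperp{X_0}H}$ as the source of the dimension factor is also exactly right.
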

Thus, ignoring logarithmic factors, existing recovery guarantees amplify the output noise level $\tau$ by at least the factor $\sqrt{\min \{K, N \}}$ for $ L \asymp K \mu^2_{\max} + N \mu_{h_0,\omega}^2$. 
\\[5pt]
This is in stark contrast to results for other low-rank matrix recovery problems. For instance, \cite{ChandrasekaranConvexGeometry} derive dimension-independent reconstruction error bounds for low-rank matrix recovery if the measurement matrices have i.i.d. Gaussian entries:
\begin{theorem}[\cite{ChandrasekaranConvexGeometry}]\label{robustnessforgaussianmeasurements}
    Let $\A: \C^{K \times N} \to \C^L$ be a random map such that $\A(X) = \sum_{j=1}^m \scal{A_j, X}_F$, where the entries of $A_j$ are i.i.d zero-mean Gaussian with variance $1/L$.
    Let $L \gtrsim r(2 \max \{K, N\} - r)$, where $r$ is the rank of the ground truth $X_0 \in \C^{K \times N}$. Let $X^*$ be a solution of the constrained nuclear norm minimization problem. Then with high probability it holds that  
    \begin{equation*}
        \frob{X_0 - X^*} \lesssim \tau.
    \end{equation*}
\end{theorem}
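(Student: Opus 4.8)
The plan is to run the standard descent-cone analysis for convex recovery from Gaussian measurements. Write $\mathcal{D}$ for the descent cone of the nuclear norm at $X_0$, i.e.\ the closure of $\{ Z \in \C^{K\times N} : \nucl{X_0 + tZ} \le \nucl{X_0} \text{ for some } t>0\}$, and let $\mathbb{S}$ be the Frobenius unit sphere in $\C^{K\times N}$. First I reduce the reconstruction bound to a lower bound on the minimum conic singular value of $\A$ over $\mathcal{D}$. Since $\twonorm{\A(X_0)-y}=\twonorm{e}\le\tau$, the ground truth is feasible, so any minimizer $X^*$ obeys $\nucl{X^*}\le\nucl{X_0}$ and hence $H:=X^*-X_0\in\mathcal{D}$. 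Both $X_0$ and $X^*$ are feasible, so the triangle inequality gives $\twonorm{\A(H)}\le\twonorm{\A(X^*)-y}+\twonorm{y-\A(X_0)}\le 2\tau$. Therefore it suffices to show that, with high probability, $\lambda_{\min}(\A;\mathcal{D}):=\inf_{Z\in\mathcal{D}\cap\mathbb{S}}\twonorm{\A(Z)}\ge c$ for an absolute constant $c>0$; homogeneity then yields $\frob{H}\le\twonorm{\A(H)}/\lambda_{\min}(\A;\mathcal{D})\le 2\tau/c\lesssim\tau$.

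For the lower bound on $\lambda_{\min}(\A;\mathcal{D})$ I invoke Gordon's escape-through-the-mesh estimate. With the variance-$1/L$ normalization one has $\E{\twonorm{\A(Z)}^2}=\frob{Z}^2$ for each fixed $Z$, and Gordon's inequality together with Gaussian concentration of Lipschitz functionals yields, for every $t>0$, with probability at least $1-e^{-t^2/2}$,
\[
\lambda_{\min}(\A;\mathcal{D})\ \ge\ 1-\frac{w(\mathcal{D}\cap\mathbb{S})+t}{\sqrt{L}},
\]
where $w(\cdot)$ denotes the Gaussian width (in the complex setting this is applied after the identification $\C^{K\times N}\cong\R^{2KN}$, which only affects constants). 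Choosing $t$ a small constant multiple of $\sqrt{L}$ gives $\lambda_{\min}(\A;\mathcal{D})\ge 1/2$ with probability $1-e^{-\Omega(L)}$ as soon as $L\ge C\,w(\mathcal{D}\cap\mathbb{S})^2$ for a suitable absolute constant $C$.

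It then remains to bound $w(\mathcal{D}\cap\mathbb{S})^2\lesssim r(2\max\{K,N\}-r)$, which is the technical heart. I would use the polar/subdifferential bound $w(\mathcal{D}\cap\mathbb{S})^2\le\inf_{\lambda\ge0}\E{\mathrm{dist}(G,\lambda\,\partial\nucl{\cdot}(X_0))^2}$, where $G$ has i.i.d.\ standard Gaussian entries. Writing $X_0=U\Sigma V^*$ with $U\in\C^{K\times r}$, $V\in\C^{N\times r}$ and letting $\tangent{X_0}$ be the rank-$r$ tangent space, the subdifferential equals $\{UV^*+W:\PT{X_0}W=0,\ \spec{W}\le1\}$. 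Splitting $G=\PT{X_0}G+\PTperp{X_0}G$, the in-tangent part contributes $\E{\frob{\PT{X_0}G-\lambda UV^*}^2}\asymp \dim_{\R}\tangent{X_0}+\lambda^2 r\lesssim r(K+N-r)+\lambda^2 r$, while the orthogonal part contributes $\E{\sum_i(\sigma_i(\PTperp{X_0}G)-\lambda)_+^2}$, which is $O(1)$ once $\lambda$ exceeds the operator norm of $\PTperp{X_0}G$ (concentrated around $2\sqrt{\max\{K,N\}}$, since $\PTperp{X_0}G$ is stochastically dominated by a $(K-r)\times(N-r)$ Gaussian block). Choosing $\lambda\asymp\sqrt{\max\{K,N\}}$ makes the orthogonal term $O(1)$ and yields $w(\mathcal{D}\cap\mathbb{S})^2\lesssim r(K+N-r)+r\max\{K,N\}\lesssim r(2\max\{K,N\}-r)$. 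Combining with the Gordon bound and the hypothesis $L\gtrsim r(2\max\{K,N\}-r)$ completes the argument.

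The main obstacle is this last step — accurately controlling $\E{\mathrm{dist}(G,\lambda\,\partial\nucl{\cdot}(X_0))^2}$, in particular the concentration of $\spec{\PTperp{X_0}G}$ and the tail term $\E{\sum_i(\sigma_i(\PTperp{X_0}G)-\lambda)_+^2}$, where one must deal with the fact that $\PTperp{X_0}G$ is not literally a clean Gaussian matrix of smaller dimensions but only dominated by one in the relevant sense. Reducing the error bound to $\lambda_{\min}(\A;\mathcal{D})$ and applying Gordon's inequality are routine once the width bound is available.
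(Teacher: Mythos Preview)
The paper does not give its own proof of this theorem: it is stated as a background result cited from \cite{ChandrasekaranConvexGeometry}, and only the first step---the reduction of the reconstruction error to the minimum conic singular value, Theorem~\ref{conicvaluetorecoveryguarantees}---is spelled out in Section~\ref{background:DescentConeAnalysis}. Your proposal is precisely the argument of the original reference (descent-cone reduction, Gordon's escape-through-a-mesh bound, and the statistical-dimension/Gaussian-width estimate via the subdifferential), so there is nothing to compare against in the present paper beyond noting that your first paragraph matches the paper's exposition of Theorem~\ref{conicvaluetorecoveryguarantees}.

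One small correction to your ``main obstacle'': the matrix $\PTperp{X_0}G = (I-UU^*)G(I-VV^*)$ is, by rotation invariance of the Gaussian ensemble, \emph{exactly} a $(K-r)\times(N-r)$ standard Gaussian block after a unitary change of basis, not merely stochastically dominated by one. So the concentration of $\spec{\PTperp{X_0}G}$ and the soft-thresholding tail $\E{\sum_i(\sigma_i(\PTperp{X_0}G)-\lambda)_+^2}$ follow directly from the standard estimates for rectangular Gaussian matrices, and there is no genuine obstacle there. With that in mind your sketch is correct.
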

However, in the blind deconvolution scenario, the measurement matrices $b_{\ell} c_{\ell}^*$ are more structured than in the Gaussian scenario in Theorem \ref{robustnessforgaussianmeasurements}, opening the question whether similar dimension-independent robustness guarantees could be established for blind deconvolution with bounded noise. 
\\[5pt]
\cite{krahmer2020} tackled the question whether the descent cone analysis techniques (see Section \ref{background:DescentConeAnalysis}) employed in \cite{ChandrasekaranConvexGeometry} could be modified to deliver dimension-independent recovery guarantees for more structured measurements, such as blind deconvolution and matrix completion. The authors found that for sufficiently small noise levels, blind deconvolution under adversarial noise can be unstable:
\begin{theorem}[\cite{krahmer2020}]\label{instability}
    Assume that
    \begin{equation*}
        C_1 K \leq L \leq \frac{1}{C_2} KN.
    \end{equation*}
Then there exists an isometry $B \in \C^{L \times K}$ satisfying $\mu^2_{\max}=1$, such that for all $h_0 \in \C^K \setminus \{0\}$ and $m_0 \in \C^N \setminus \{0\}$ the following holds:
With probability at least $1 - \mathcal{O}\left( \exp\left( -\frac{K}{C_3 \mu^2_{h_0}} \right) \right)$, for all noise levels $\tau \leq \tau_0$ for some $\tau_0 > 0$ there exists an adversarial noise vector $e \in \C^L$, $\twonorm{e} \leq \tau$, such that the recovery program \eqref{noisySDP} admits an alternative solution $\tilde{X}$ with the following properties:
\begin{enumerate}
    \item $\tilde{X}$ is feasible: $\twonorm{\A(\tilde{X}) - y} \leq \tau$;
    \item $\tilde{X}$ is preferred to $X_0$: $\nucl{\tilde{X}} \leq \nucl{X_0}$;
    \item $\tilde{X}$ is far from the true solution in Frobenius norm: 
    \begin{equation}\label{instabilityequation}
        \frob{\tilde{X}-X_0} \geq \frac{\tau}{C_4} \sqrt{\frac{K N}{L}}.
    \end{equation}
\end{enumerate}
\end{theorem}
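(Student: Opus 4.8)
\emph{The plan is to reduce the statement to the construction of a single ``bad'' descent direction of the nuclear norm at $X_0$, and then to build it using a block structure of $B$ that makes $\A$ split into many heavily underdetermined pieces.}

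\emph{Reduction.} Write $u := h_0/\twonorm{h_0}$, $v := m_0/\twonorm{m_0}$, so that $X_0 = \twonorm{h_0}\twonorm{m_0}\,uv^*$, and use the standard formula that the one-sided derivative of $t\mapsto\nucl{X_0+tZ_0}$ at $t=0$ equals $\mathrm{Re}\,\scal{uv^*,Z_0}_F+\nucl{\PTperp{X_0}Z_0}$ (the nuclear-norm descent-cone formula at a rank-one point). Suppose we can produce $Z_0$ with $\frob{Z_0}=1$ which is a \emph{strict} descent direction (this quantity is $<0$) and which is almost annihilated by $\A$, say $\twonorm{\A(Z_0)}\le C_4^{-1}\sqrt{L/(KN)}$. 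Because $Z_0$ is a strict descent direction and $t\mapsto\nucl{X_0+tZ_0}$ is convex with limit $+\infty$, there is a largest $t_{\max}>0$ with $\nucl{X_0+tZ_0}\le\nucl{X_0}$ on $[0,t_{\max}]$; set $\tau_0:=t_{\max}\twonorm{\A(Z_0)}>0$. Given $\tau\le\tau_0$, put $t:=\tau/\twonorm{\A(Z_0)}\le t_{\max}$, $\tilde X:=X_0+tZ_0$ and $e:=\A(\tilde X-X_0)=t\,\A(Z_0)$. Then $\twonorm{e}=\tau$, $\A(\tilde X)-y=\A(\tilde X-X_0)-e=0$, so $\tilde X$ is feasible; $\nucl{\tilde X}\le\nucl{X_0}$ since $t\le t_{\max}$, so $\tilde X$ is preferred; and
\[
\frob{\tilde X-X_0}=t=\frac{\tau}{\twonorm{\A(Z_0)}}\ge C_4\,\tau\sqrt{\frac{KN}{L}},
\]
which after adjusting constants is \eqref{instabilityequation}. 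So everything reduces to constructing such a $Z_0$ with the claimed probability.

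\emph{Construction of $B$ and $Z_0$.} Partition $[L]=\bigsqcup_{j=1}^K S_j$ into $K$ groups of size $L/K$ and let the $\ell$-th row $b_\ell$ of $\overline{FB}$ be $\sqrt{K/L}\,e_j$ whenever $\ell\in S_j$; this defines an isometry with $\mu_{\max}^2=1$, and $(\A(Z))_\ell=\sqrt{K/L}\,e_{j(\ell)}^* Z c_\ell$, so $\A$ sees the $j$-th row of $Z$ only through the $L/K$ Gaussian vectors $\{c_\ell:\ell\in S_j\}$, which because $L/K<N$ span only a proper subspace $V_j\subsetneq\C^N$. We take $Z_0=-uv^*+W$ with $W\in\tangentperp{X_0}$ built row by row: in the $j$-th row one puts a suitable multiple of the projection of the row $u_jv^*$ of $uv^*$ onto the orthogonal complement of $V_j$, which makes $\A(W)$ cancel $\A(uv^*)$ exactly, and then one subtracts a rank-one correction restoring the two constraints $u^*W=0$ and $Wv=0$ that define $\tangentperp{X_0}$. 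The residual $\twonorm{\A(Z_0)}$ after the correction, and also $\frob{Z_0}$, are then governed by the squared norm of the projection of the fixed unit vector $v$ onto the random $(L/K)$-dimensional subspace $V_j$, which concentrates around $(L/K)/N=L/(KN)$ with sub-exponential tails; a union bound over the $K$ blocks, weighted by $|u_j|^2$, produces a failure probability of the form $\exp(-cK/\mu_{h_0}^2)$.

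\emph{Main obstacle.} The crux is to verify that $Z_0$ really is a strict descent direction, i.e. $\nucl{\PTperp{X_0}Z_0}=\nucl{W}<1$. Estimating $\nucl{W}$ crudely by $\sqrt{\operatorname{rank}(W)}\,\frob{W}$ gives only $O(\sqrt{L/N})$, which is worthless once $L\gtrsim N$; one must instead split $W$ into its genuinely rank-one dominant part --- which, thanks to the correction step, largely cancels against $-uv^*$ inside the tangent space and has nuclear norm bounded by a constant strictly below $1$, an estimate that must account for the complex phases of the entries of $h_0$ --- and the ``spread-out'' remainder, whose Frobenius norm is $O(\sqrt{L/(KN)})$ and whose nuclear norm is therefore negligible. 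Making this decomposition work is exactly where the hypothesis $L\le KN/C_2$ enters, and it is the heart of the argument; in the range $N\lesssim L\lesssim KN$ it may require refining the block structure of $B$ above. Uniformity over $h_0,m_0$ is not an issue, since $\tilde X$, $e$ and $\tau_0$ may depend on $(h_0,m_0)$ and the randomness sits only in $C$, so it suffices to prove the concentration statements for each fixed pair — which is precisely what the $\mu_{h_0}$-dependence of the probability records.
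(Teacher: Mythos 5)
First, a remark on scope: the paper quotes Theorem \ref{instability} from \cite{krahmer2020} without reproducing its proof, so you are being compared against the structure of that argument and against the constraints imposed by the present paper's own lemmas. Your reduction step is correct and is indeed how the counterexample is deployed: given a strict descent direction $Z_0$ with $\twonorm{\A(Z_0)}/\frob{Z_0}\lesssim\sqrt{L/(KN)}$, setting $e=t\,\A(Z_0)$ and $\tilde X=X_0+tZ_0$ yields feasibility, preference and the distance bound, and uniformity over $(h_0,m_0)$ is handled exactly as you say.

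The construction of $Z_0$, however, has a structural flaw that is fatal rather than a technicality. You take $Z_0=-uv^*+W$ with $W\in\tangentperp{X_0}$. Membership in $\cone{X_0}$ forces $\nucl{W}\le 1$, hence $\frob{Z_0}\le\sqrt{2}$ and, after normalization, $\beta=-\scal{Z_0/\frob{Z_0},uv^*}_F\ge 1/\sqrt{2}$. But Lemma \ref{lowerboundforAZ} of this very paper (the pairing with the exact dual certificate, which holds with high probability for incoherent isometries $B$ once $L\gtrsim(K+N)\log^3L$ — a regime compatible with $L\le KN/C_2$) then gives $\twonorm{\A(Z_0)}/\frob{Z_0}\gtrsim 1/\sqrt{\log L}$, which is far larger than the required $\sqrt{L/(KN)}$. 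In other words, \emph{no} direction of the form $-uv^*+\tangentperp{X_0}$ can witness the instability; the Frobenius mass of the pathological direction must sit in the tangential off-diagonal components $\gamma h_0 {m_0^\perp}^*+\eta h_0^\perp m_0^*$ of \eqref{descentconeelementdecomposition}, with only a tiny coefficient $\beta$ on $-h_0m_0^*$ and a correction $M\in\tangentperp{X_0}$ of tiny \emph{nuclear} norm whose image under $\A$ nearly cancels that of the tangential part — precisely the ``small $\beta$, large $\gamma,\eta$'' directions described in Section \ref{proofofmaintheorem}. A secondary inconsistency: rows supported on $V_j^{\perp}$ are annihilated by $\A$ under your block construction, so a $W$ built from projections onto $V_j^{\perp}$ satisfies $\A(W)=0$ and cannot cancel $\A(uv^*)$; using projections onto $V_j$ instead does produce exact cancellation, but then the rows of $W$ point in essentially independent random directions, so $\nucl{W}\approx\sqrt{\min(K,N)}\,\frob{W}$ and the descent-cone budget $\nucl{W}\le\beta$ is blown. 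Your own ``main obstacle'' paragraph senses this tension, but the fix is not a sharper estimate of $\nucl{W}$ for your ansatz — the ansatz itself must be abandoned.
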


Theorem \ref{instability} suggests that the dimension factor in the recovery guarantees \eqref{BDoldbound} is not merely a proof artifact. 
Although $\tilde{X}$ constructed in the proof of Theorem \ref{instability} is preferred to the true solution, it is not a minimizer of the SDP \eqref{noisySDP}, see Remark 3.4. in \cite{krahmer2020}. Moreover, the instability result is not expected to hold if one picks a generic isometry $B$, thus randomizing over both $B$ and $C$ \cite{krahmer2020}. Still, Theorem \ref{instability} provides a lower bound for possible recovery guarantees which hold for all preferred solutions.
\\[5pt]
However, Theorem \ref{instability} gives no information on the noise level $\tau_0$ from which the instability is to be expected.
In particular, it does not exclude the possibility that for larger noise levels $\tau$, stronger recovery guarantees can be shown.
Indeed, \cite{krahmer2020} made a first step in this direction by showing near-optimal error scaling whenever the adversarial noise magnitude is not too small: 
\begin{theorem}[\cite{krahmer2020}]\label{StabilityLargeLevels}
    Let $\omega > 0$, $\mu \geq 1$. For
    \begin{equation*}
        L \gtrsim \frac{\mu^2}{\omega^2} (K + N) \log^2 L,
    \end{equation*}
    with probability at least $1 - \mathcal{O}\left( \exp \left( - \frac{L \omega^{4/3}}{C \log^{4/3} (e L) \mu^{4/3}}\right) \right)$ it holds for all $h_0 \in \C^L \setminus \{0\}$ with $\mu_{h_0} \leq \mu$ and all $m_0 \in \C^N \setminus \{0 \}$ that the minimizer $X^*$ of the convex program \eqref{noisySDP} satisfies
\begin{equation*}
    \frob{X^* - X_0} \lesssim \frac{\mu^{2/3} \log^{2/3} L }{\omega^{2/3}} \max \{ \tau, \omega \frob{X_0} \}.
\end{equation*}
\end{theorem}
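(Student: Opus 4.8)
\emph{Proof strategy.} The natural route, following \cite{krahmer2020}, is a refined descent-cone analysis rather than the dual-certificate machinery behind Theorems~\ref{NoiselessRecovery} and~\ref{NoisyRecovery}. Write $H := X^* - X_0$. Since $\twonorm{\A(X_0)-y}=\twonorm{e}\le\tau$, the ground truth is feasible for \eqref{noisySDP}, so optimality of $X^*$ gives $\nucl{X^*}\le\nucl{X_0}$; by convexity this places $H$ in the descent cone $\cone{X_0}$ of $\nucl{\cdot}$ at $X_0$, and feasibility of both points gives $\twonorm{\A(H)}\le 2\tau$. Two observations delimit the task: first, the crude bound $\frob{H}\le\nucl{H}\le\nucl{X^*}+\nucl{X_0}\le 2\frob{X_0}$ shows the error is always $O(\frob{X_0})$; second, Theorem~\ref{instability} rules out any clean uniform lower bound $\twonorm{\A(H)}\gtrsim\frob{H}$ on all of $\cone{X_0}$. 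Hence what must be proved is an ``injectivity of $\A$ modulo a small Frobenius ball''.

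The plan reduces the theorem to the following key lemma, to be established for the stated sample size and with the stated probability. Set $\lambda := \mu^{2/3}\,\omega^{-2/3}\,(\log L)^{2/3}$. With high probability, simultaneously for every rank-one $X_0$ with $\mu_{h_0}\le\mu$ and every $H\in\cone{X_0}$, \emph{either} $\twonorm{\A(H)}\gtrsim\lambda^{-1}\frob{H}$ \emph{or} $\frob{H}\lesssim\lambda\,\omega\,\frob{X_0}$. Granting this, the theorem is immediate: in the first case $\lambda^{-1}\frob{H}\lesssim\twonorm{\A(H)}\le 2\tau$ gives $\frob{H}\lesssim\lambda\tau$; in the second, $\frob{H}\lesssim\lambda\,\omega\,\frob{X_0}$; so in all cases $\frob{X^*-X_0}\lesssim\lambda\max\{\tau,\omega\frob{X_0}\}$, which is the claim.

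For the key lemma I would decompose $H=\PT{X_0}H+\PTperp{X_0}H$ and use the descent-cone constraint $\nucl{\PTperp{X_0}H}\le\frob{\PT{X_0}H}$ (valid since $\PT{X_0}H$ has rank at most two). On the tangent space $\A$ is bounded below, $\twonorm{\A(G)}\asymp\frob{G}$ for $G\in\tangent{X_0}$, which is the restricted-isometry statement already underlying the noiseless analysis of \cite{ahmedBD} (Theorem~\ref{NoiselessRecovery}). The obstruction is the transverse term: the naive estimate $\twonorm{\A(\PTperp{X_0}H)}\le\big(\sup_{\nucl{Z}\le1}\twonorm{\A(Z)}\big)\nucl{\PTperp{X_0}H}$ is useless, because that supremum is of order $\mu\sqrt{KN/L}$ — attained by a rank-one $Z=uv^*$ with $v$ aligned to a single measurement vector $c_\ell$ — and this spiky mechanism is precisely what produces the instability of Theorem~\ref{instability}. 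The remedy is to separate ``spread'' from ``spiky'' transverse directions at a threshold $t$: for spread directions a generic-chaining (Dudley/Bernstein) bound for the empirical process $Z\mapsto\twonorm{\A(Z)}^2-\frob{Z}^2$ over the relevant nuclear- and Frobenius-bounded set yields a deviation of order $\delta\frob{H}^2$ with $\delta\lesssim\mu^2(K+N)(\log L)^2/L$; for spiky directions one argues geometrically that the descent-cone constraint together with feasibility of $X^*$ forces $\frob{H}\gtrsim t\,\frob{X_0}$. Balancing these contributions over $t$ — the optimum sitting at $t\asymp(\mu^2(\log L)^2/\omega^2)^{1/3}$ — produces the $2/3$ exponents and the factor $\lambda$, and the hypothesis $L\gtrsim\mu^2\omega^{-2}(K+N)\log^2L$ is exactly what makes $\delta$ negligible at that threshold. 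Uniformity over all admissible $X_0$ is obtained by an $\varepsilon$-net over $(h_0,m_0)$ — the coherence bound $\mu_{h_0}\le\mu$ being stable under small perturbations — and a union bound; since $c_\ell$ is Gaussian the variables $b_\ell^*Zc_\ell$ are sub-exponential rather than sub-Gaussian, forcing Bernstein-type tails (hence the extra logarithms), and optimizing the deviation radius against the net cardinality yields the failure probability $\exp\!\big(-cL\omega^{4/3}/(\log^{4/3}(eL)\,\mu^{4/3})\big)$.

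The main obstacle is this last bundle: a sufficiently sharp, uniform-in-$X_0$ chaining bound for the transverse operator norm of $\A$ on nuclear balls of prescribed Frobenius radius — sharp enough that, after balancing against the ``spiky directions are Frobenius-large'' estimate, it reproduces exactly the $2/3$ scaling with the correct dependence on $\mu$, $\omega$ and $\log L$ — while keeping the net cardinality small enough that the probability degrades only as stated. The feasibility/optimality reduction, the descent-cone algebra, and the tangent-space lower bound are by contrast routine.
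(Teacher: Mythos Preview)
This theorem is not proved in the present paper: it is quoted from \cite{krahmer2020} as background in Section~\ref{background:nuclearnormminimization}, and no argument for it appears anywhere in the text or appendix. There is therefore no ``paper's own proof'' against which to compare your proposal.

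For what it is worth, your high-level strategy --- a refined descent-cone analysis in which one isolates the ill-conditioned directions and shows that those are precisely the ones admitting only small decrements of the nuclear norm --- is consistent with the paper's one-sentence summary of the \cite{krahmer2020} argument (see the paragraph following Theorem~\ref{conicvaluetorecoveryguarantees}: ``the key insight was that this bad conditioning does not hold for all descent directions, but solely for directions along which only small decrements are allowed''). Your reduction to a dichotomy lemma and the deduction of the theorem from it are clean and correct. Whether the specific technical route you sketch for the lemma itself --- splitting transverse directions into ``spread'' and ``spiky'' at a threshold, a chaining bound for the former, a geometric descent-cone argument for the latter, then balancing to extract the $2/3$ exponents and the stated failure probability --- matches the actual proof in \cite{krahmer2020}, and whether the implicit steps (in particular the claim that spiky transverse directions force $\frob{H}\gtrsim t\,\frob{X_0}$, and the precise empirical-process bound yielding the exponent) can be filled in as you describe, cannot be assessed from this paper alone; you would need to consult \cite{krahmer2020} directly.
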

Although Theorem \ref{StabilityLargeLevels} establishes near-dimension independent scaling in the noise, this only holds for sufficiently large noise levels $\tau > \omega \frob{X_0}$. 

\subsection{Proof methods for robust low-rank matrix recovery}
Before we present the main result, we would like to review two common proof techniques that have been used to establish recovery guarantees for blind deconvolution and related low-rank matrix recovery problems: descent cone analysis and dual certificates (for further details, see, e.g., \cite{Fuchs2021}).
 Our proof in Section \ref{proofofmaintheorem} will use an exact dual certificate (introduced in Section \ref{background:DCandtheGolfingScheme}) to conduct a refined descent cone analysis (reviewed in Section \ref{background:DescentConeAnalysis}). 
\subsubsection{Descent cone analysis}\label{background:DescentConeAnalysis}
Descent cone analysis aims to quantify the intersection of the feasible set $\left\{ X \in \C^{K \times N}: \twonorm{\A(X) - y} \leq \tau \right\}$ with the cone of all descent directions of the nuclear norm at the point $X_0$. 
To describe the connection between descent cone analysis and robustness guarantees for blind deconvolution, we start with the following definition:
\begin{definition}
Let $X_0 \in \C^{K \times N}$. The descent cone at $X_0$ is defined by
\begin{equation*}
    \cone{X_0} \coloneqq \{ Z \in \C^{K \times N}: \text{there exists } \varepsilon > 0 \text{ such that } \nucl{X_0 + \varepsilon Z} \leq \nucl{X_0} \}.
\end{equation*}
\end{definition}
Any preferred solution $\tilde{X}$ with $\nucl{\tilde{X}} \leq \nucl{X_0}$ can be written as $\tilde{X} = X_0 + Z$ for some $Z \in \cone{X_0}$. The reconstruction error is then equal to $\frob{X_0 - \tilde{X}} = \frob{Z}$. On the other hand, any feasible solution satisfies 
\begin{equation*}
    \twonorm{\A(\tilde{X}) - \A(X_0) } \leq \twonorm{\A(\tilde{X}) - y} + \twonorm{\A(X_0) - y} \leq 2 \tau,
\end{equation*}
therefore, a preferred and feasible solution must satisfy
\begin{equation*}
    \twonorm{\A(\tilde{X}) - \A(X_0) } = \twonorm{\A(Z)} \leq 2\tau.
\end{equation*}
This translates into the reconstruction bound
\begin{equation}\label{conicbound}
    \frob{\tilde{X}-X_0} = \frac{\frob{Z}}{\twonorm{\A(Z)}} \twonorm{\A(Z)}  \leq 2 \tau \frac{\frob{Z}}{\twonorm{\A(Z)}} \leq 2 \tau \left( \inf_{Z \in \cone{X_0} \setminus \{0 \} } \frac{\twonorm{\A(Z)}}{\frob{Z}} \right)^{-1}.
\end{equation}
This motivates the following definition.
\begin{definition}[Minimum conic singular value]\label{def:minimumconicsingularvalue}
Let $X_0 \in \C^{K \times N}$. The minimum conic singular value of $\A$ at $\cone{X_0}$ is defined by
\begin{equation*}
    \lambda_{\min}(\A, \cone{X_0}) \coloneqq \inf_{Z \in \cone{X_0}  \setminus \{0 \}} \frac{\twonorm{\A(Z)}}{\frob{Z}}.
\end{equation*}
\end{definition}
Inserting the definition in \eqref{conicbound}, we obtain the following relation between the minimum conic singular value and the reconstruction error.
\begin{theorem}[\cite{ChandrasekaranConvexGeometry}]\label{conicvaluetorecoveryguarantees}
    Let $X_0 \in \C^{K \times N}$ and $\A: \C^{K \times N} \to \C^L$ be a linear operator. Let the measurements be given by $y = \A(X_0) + e$ where $e \in \C^L$, $\twonorm{e} \leq \tau$. Then any minimizer $\hat{X}$ of the convex program \eqref{noisySDP} satisfies
    \begin{equation*}
        \frob{\hat{X}-X_0} \leq \frac{2\tau}{\lambda_{\min}(\A, \cone{X_0})}.
    \end{equation*}
\end{theorem}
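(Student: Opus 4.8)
The plan is to make rigorous the chain of inequalities already previewed informally in \eqref{conicbound}. The argument rests on three ingredients: feasibility of the ground truth, optimality of the minimizer, and the triangle inequality, glued together by the definition of the minimum conic singular value.

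First I would observe that $X_0$ is itself feasible for \eqref{noisySDP}: since $y = \A(X_0) + e$ with $\twonorm{e} \leq \tau$, we have $\twonorm{\A(X_0) - y} = \twonorm{e} \leq \tau$. Consequently, because $\hat{X}$ minimizes the nuclear norm over the feasible set, $\nucl{\hat{X}} \leq \nucl{X_0}$. Writing $Z \coloneqq \hat{X} - X_0$, this reads $\nucl{X_0 + Z} \leq \nucl{X_0}$, so choosing $\varepsilon = 1$ in the definition of $\cone{X_0}$ shows $Z \in \cone{X_0}$.

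Next I would dispose of the degenerate cases: if $Z = 0$ the claimed bound is trivial, and if $\lambda_{\min}(\A, \cone{X_0}) = 0$ the right-hand side is $+\infty$ and there is nothing to prove. Otherwise $Z \in \cone{X_0} \setminus \{0\}$, and Definition \ref{def:minimumconicsingularvalue} gives
\[
\frob{Z} \leq \frac{\twonorm{\A(Z)}}{\lambda_{\min}(\A, \cone{X_0})}.
\]
It then remains to bound $\twonorm{\A(Z)}$. Since both $\hat{X}$ and $X_0$ are feasible, the triangle inequality yields
\[
\twonorm{\A(Z)} = \twonorm{\A(\hat{X}) - \A(X_0)} \leq \twonorm{\A(\hat{X}) - y} + \twonorm{y - \A(X_0)} \leq \tau + \twonorm{e} \leq 2\tau,
\]
and combining the last two displays with $\frob{\hat{X} - X_0} = \frob{Z}$ produces the claim.

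As for obstacles, there are none of substance here: the statement is a direct consequence of the convexity of the feasible region, the fact that $X_0$ lies in it, and the triangle inequality, and it essentially formalizes the derivation \eqref{conicbound}. The only points requiring a modicum of care are (i) verifying that $Z$ genuinely lies in the descent cone, which is precisely where feasibility of $X_0$ (hence $\nucl{\hat{X}} \leq \nucl{X_0}$) enters, and (ii) handling the edge cases $Z = 0$ and $\lambda_{\min} = 0$ so that the division in \eqref{conicbound} is legitimate. The genuine difficulty in the paper lies not in this reduction but in subsequently lower-bounding $\lambda_{\min}(\A, \cone{X_0})$ via the refined descent-cone analysis announced in the introduction.
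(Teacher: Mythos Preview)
Your proposal is correct and follows essentially the same argument as the paper, which derives the bound via the chain \eqref{conicbound} in the paragraph immediately preceding the theorem statement. You have simply made explicit the feasibility of $X_0$ and the edge cases $Z=0$ and $\lambda_{\min}=0$, which the paper leaves implicit.
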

In the absence of noise, exact recovery of a matrix $X_0 \in \C^{K \times N}$ is equivalent to the condition $\ker \A \cap \cone{X_0} = \{ 0 \}$ \cite{ChandrasekaranConvexGeometry}. However, for the case of blind deconvolution, in\cite[Proposition 3.3]{krahmer2020} it has been shown that in the scenario described in Theorem \ref{instability}, with high probability it holds that 
\begin{equation*}
  \lambda_{\min}(\A, \cone{X_0}) \lesssim   \sqrt{ \frac{L}{KN}}.
\end{equation*}
Therefore, we cannot utilize Theorem \ref{conicvaluetorecoveryguarantees} to provide dimension-independent reconstruction error bounds, and we need to conduct a much more refined analysis. 
A first refined analysis was presented in \cite{krahmer2020} (see Theorem \ref{StabilityLargeLevels}).
The key insight was that this bad conditioning does not hold for all descent directions, but solely for directions along which only small decrements are allowed. However, excluding these directions leads to a reconstruction error bound only valid for large noise levels. In our proof, we build upon this idea to provide a noise-level-dependent reconstruction error bound.
 The key insight will be a lower bound on $\twonorm{\A (Z) }$ in terms of an actual descent parameter, see Lemma \ref{lowerboundforbeta} below.

\subsubsection{Dual certificate}\label{background:DCandtheGolfingScheme}

Convex duality techniques, introduced in \cite{candes2006uncertainty}, have been employed to prove exact and robust recovery guarantees for compressed sensing \cite{candes2006uncertainty}, matrix completion \cite{RechtSimpler, gross11}, and  blind deconvolution \cite{ahmedBD}. According to classical convex optimization theory (see, e.g., Proposition 5.4.7 in \cite{bertsekas2003convex}), $X^*$ is a minimizer of the SDP \eqref{noiselessSDP} if and only if there exists a matrix $Y \in \partial \nucl{\cdot}(X^*)$ in the subdifferential of the nuclear norm, such that $-Y$ belongs to the normal cone $N_S(X^*)$, where $S = X_0 + \ker \A$ is the set of all feasible solutions. Since $S$ is an affine subspace, it is sufficient to find $Y \in \delta \nucl{X^*}$ orthogonal to $\ker \A$, i.e. $Y \in \text{Range\,}\As$, or $Y = \As(z)$ for some $z \in \C^L$. We will call such pair $(z,Y)$ an \emph{exact dual certificate} \cite{CandesRechtExact2009}. 
\\[5pt]
The subdifferential of the nuclear norm at a point $X$ can be characterized in terms of the singular value decomposition of $X$. In the following, we  will denote the SVD of a matrix $X$ by $X = U \Sigma V^*$, where $U \in \C^{K \times r}$ and $V \in \C^{N \times r}$ are unitary matrices and $\Sigma = \text{diag}\,(\sigma_1,...,\sigma_r)$, where $(\sigma_1,...,\sigma_r)$ are the singular values of $X$ in decreasing order. If $X$ has rank $r$, one can define the \emph{tangent space of the variety of rank-$r$ matrices} at $X$ by
\begin{equation*}
    \tangent{X} \coloneqq \left\{ U A^* + B V^*: \; A \in \C^{N \times r}, \, B \in \C^{K \times r} \right\}.
\end{equation*}
In the following, we will write $\tangent{}$ instead of $\tangent{X}$ if the base point $X$ is clear. 
\\[5pt]
The subdifferential of the nuclear norm at a point $X$ can be characterized (see, e.g., \cite{watson_subdifferential}) by 
\begin{equation*}
    \partial \nucl{\cdot}(X) = \left\{ W \in \C^{K \times N}: \: \PT{} W = UV^*, \, \spec{\PTperp{} W} \, \leq 1  \right\},
\end{equation*}
where $\PT{}$ denotes the orthogonal projection onto $\tangent{X}$, $\PTperp{}$ denotes the projection onto its orthogonal complement $\tangentperp{X}$, and $\spec{\cdot}$ is the spectral norm.
\\[5pt]
Now let $X_0 = \nu h_0  m_0^*$ be the singular value decomposition of the rank-1 matrix $X_0$, where $\nu \geq 0$ is the nuclear norm of $X_0$. According to the considerations above, the ground truth $X_0$ is a minimizer of the SDP \eqref{noiselessSDP} if and only if one can find an exact dual certificate pair $(z, Y)$ with $Y = \As(z)$ satisfying
\begin{align}
     &\PT{X_0} Y = h_0 m_0^*,\label{property1exact} \\
     &\spec{\PTperp{X_0} Y} < 1.\label{property2exact}
\end{align}
Such an object was constructed, e.g., in \cite{CandesRechtExact2009} to show exact recovery in the noiseless case for the related problem of matrix completion.
\\[5pt]
In \cite{ahmedBD}, it has been shown via the construction of the so-called \textit{approximate dual certificate} (see also Section \ref{appendix:dualcertificate}) that in the absence of noise, $X_0$ is the unique minimizer of the nuclear norm minimization problem \eqref{noiselessSDP} with high probability. Thus, the existence of an exact dual certificate follows. In the following result we state that, given a sufficient number of measurements, there exists an instance of an exact dual certificate which is bounded in the norm. 
\begin{proposition}[Exact dual certificate]\label{exactDCexistence}
    Let $\omega \geq 1$. Assume that the number of measurements satisfies
    \begin{equation}\label{LnumforexistenceofDC}
        L \geq C \omega \left( K \mu_{\max}^2 \log(K \mu_{\max}^2) + N \mu_{h_0,\omega}^2 \right) \log^3 L,
    \end{equation}
    where $C > 0$ is a universal constant and $\mu_{h_0,\omega}$ is a technical coherence parameter defined in \ref{muhomegadefinition}.
    Then with probability at least $1 - \mathcal{O}(L^{-\omega})$, there exists an exact dual certificate pair $(z,Y)$ with $Y = \As(z)$ such that
\begin{equation*}
     \PT{X_0} Y = h_0 m_0^*, \quad \spec{\PTperp{X_0} Y} \leq \frac{3}{4}.
\end{equation*}
and $z$ satisfies
\begin{equation*}
    \twonorm{z} \lesssim \sqrt{\omega \log L}.
\end{equation*}
\end{proposition}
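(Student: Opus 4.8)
The plan is to promote the \emph{approximate} dual certificate furnished by the golfing scheme of \cite{ahmedBD} to an \emph{exact} one by adding a single tangent-space correction, while controlling both the spectral norm on $\tangentperp{X_0}$ and the $\ell_2$-norm of the coefficient vector.

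\textbf{Step 1 (approximate certificate).} First I would run the golfing scheme as in \cite{ahmedBD}, recalled in Section~\ref{appendix:dualcertificate}: partition the $L$ indices into $P \asymp \log L$ disjoint blocks $\Omega_1,\dots,\Omega_P$, let $\A_k$ be the measurement operator of block $k$, rescaled so that $\E{\PT{X_0}\As_k\A_k\PT{X_0}} = \mathrm{Id}_{\tangent{X_0}}$, and iterate $Y_0 = 0$, $Y_k = Y_{k-1} + \As_k\A_k(h_0 m_0^* - \PT{X_0}Y_{k-1})$. Writing $R_k := h_0 m_0^* - \PT{X_0}Y_k \in \tangent{X_0}$, one has $R_k = (\mathrm{Id}_{\tangent{X_0}} - \PT{X_0}\As_k\A_k\PT{X_0})R_{k-1}$; under \eqref{LnumforexistenceofDC} the per-block local isometry $\spec{\mathrm{Id}_{\tangent{X_0}} - \PT{X_0}\As_k\A_k\PT{X_0}} \le \tfrac12$ and the per-block off-diagonal bound $\spec{\PTperp{X_0}\As_k\A_k\PT{X_0}} \le c$ for a small absolute constant $c$ hold, w.p. $1 - \mathcal{O}(L^{-\omega})$ after a union bound over the $P$ blocks, with the required confidence level forcing an $\omega$-dependence in the block sizes. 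These give geometric decay $\frob{R_k} \le 2^{-k}$ and, writing $Y_P = \As(z_{\mathrm{app}})$: $\frob{R_P} \le L^{-c'}$ (any prescribed inverse polynomial, by the choice of $P$), $\spec{\PTperp{X_0}Y_P} \le \sum_k c\, 2^{-(k-1)} \le \tfrac12$, and — since the blocks are disjoint and rescaling each block to unit mean inflates its coefficient vector by a factor $\asymp\sqrt P$ — $\twonorm{z_{\mathrm{app}}}^2 \lesssim \omega\, P \sum_k \frob{R_{k-1}}^2 \lesssim \omega\, P \lesssim \omega\log L$. This reproduces the estimates behind Theorem~\ref{NoiselessRecovery}.

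\textbf{Step 2 (exact correction and verification).} On the same event the global versions of the two auxiliary bounds also hold: $\spec{\PT{X_0}(\AsA - \mathrm{Id})\PT{X_0}} \le \tfrac12$ on $\tangent{X_0}$, so $\PT{X_0}\AsA\PT{X_0}$ is invertible on $\tangent{X_0}$ with inverse of operator norm at most $2$, and $\spec{\PTperp{X_0}\AsA\PT{X_0}} \le C_0$ for an absolute constant $C_0$. Let $W \in \tangent{X_0}$ be the unique solution of $\PT{X_0}\AsA(W) = R_P$, so $\frob{W} \le 2\frob{R_P} \le 2L^{-c'}$, and put $z := z_{\mathrm{app}} + \A(W)$ and $Y := \As(z) = Y_{\mathrm{app}} + \AsA(W) \in \operatorname{Range}(\As)$. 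Then $\PT{X_0}Y = \PT{X_0}Y_{\mathrm{app}} + R_P = h_0 m_0^*$ (this is \eqref{property1exact}); the dual-norm bound follows from $\spec{\PTperp{X_0}Y} \le \spec{\PTperp{X_0}Y_{\mathrm{app}}} + \spec{\PTperp{X_0}\AsA(W)} \le \tfrac12 + C_0\frob{W} \le \tfrac12 + 2C_0 L^{-c'} \le \tfrac34$ for $L$ past an absolute constant (this is \eqref{property2exact}, with room to spare); and from $\twonorm{\A(W)}^2 = \scal{\PT{X_0}\AsA\PT{X_0}W, W}_F \le \tfrac32\frob{W}^2$ we get $\twonorm{z} \le \twonorm{z_{\mathrm{app}}} + \twonorm{\A(W)} \lesssim \sqrt{\omega\log L} + L^{-c'} \lesssim \sqrt{\omega\log L}$. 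The total failure probability stays $\mathcal{O}(L^{-\omega})$, possibly after enlarging the constant $C$ in \eqref{LnumforexistenceofDC}.

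\textbf{Main obstacle.} There is little conceptual content here beyond what is already in \cite{ahmedBD}; the real work is bookkeeping — transporting the concentration lemmas of \cite{ahmedBD} (local isometry on $\tangent{X_0}$, off-diagonal block bound, and the $\ell_2$-control of the golfing vector) to the precise hypothesis \eqref{LnumforexistenceofDC}, choosing the number $P$ of golfing rounds and tuning the intermediate constants so that the final dual-norm bound comes out as $\tfrac34$ rather than merely $<1$, and absorbing the $\mathrm{polylog}(L)$ loss from the union bound over blocks into the constant $C$. The one genuinely delicate point is that the corrector $W$ depends on the realization of $\A$, so $\spec{\PTperp{X_0}\AsA(W)}$ must be bounded via the \emph{uniform} operator-norm estimate $\spec{\PTperp{X_0}\AsA\PT{X_0}} \le C_0$ on $\tangent{X_0}$ and not a fixed-matrix concentration bound; for the same reason the correction is built solely from events that hold simultaneously on all of $\tangent{X_0}$, so that it does not disturb the block-independence structure exploited by the golfing scheme.
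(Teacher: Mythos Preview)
Your proposal is correct and follows essentially the same route as the paper: build an approximate dual certificate via the golfing scheme, then promote it to an exact one by adding the tangent-space correction $W=(\PT{X_0}\AsA\PT{X_0})^{-1}R_P$ and setting $z=z_{\mathrm{app}}+\A(W)$. The paper packages Step~2 as a ``putting proposition'' (Proposition~\ref{puttingproposition}) and otherwise argues exactly as you do.

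One small point of divergence worth flagging: you control $\spec{\PTperp{X_0}\AsA(W)}$ via a uniform bound $\spec{\PTperp{X_0}\AsA\PT{X_0}}\le C_0$ with $C_0$ an \emph{absolute} constant. That specific operator-norm bound is not established in the paper (nor in \cite{ahmedBD,jung2017optimal}), and it is not obvious it holds with a dimension-free constant for these structured measurements. The paper sidesteps this by using the cruder estimate $\spec{\PTperp{X_0}\As(x)}\le\spec{\A}\,\twonorm{x}$ together with the calibration $\frob{R_P}\le 1/(8\spec{\A})$ built into the definition of the approximate certificate, so that the product $\spec{\A}\cdot\twonorm{x}$ is bounded by $1/(8\sqrt{1-\delta})$ regardless of how large $\spec{\A}$ is. Your argument is easily repaired along the same lines: since you drive $\frob{R_P}$ down to any inverse polynomial $L^{-c'}$ and $\spec{\A}$ is at most polynomial in $L$ (Lemma~\ref{lemmaNormofA}), the product $\spec{\A}\cdot\frob{W}$ is still $o(1)$, so the conclusion stands.
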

\begin{remark}
The definition of a dual certificate only requires that $\spec{\PTperp{X_0} Y} < 1$.
However, from the constructive proof of Proposition \ref{puttingproposition}, the stronger property $\spec{\PTperp{X_0} Y} \leq \frac{3}{4}$ follows. 
Additionally, we see that Proposition \ref{exactDCexistence} also yields the existence of a vector $z \in \C^L$ with $\twonorm{z} \lesssim \sqrt{\omega \log L} $. Both of these properties will be crucial in our proof.
\end{remark}
The proof of Proposition \ref{exactDCexistence} is a combination of the construction of an approximate dual certificate via the Golfing scheme as presented in \cite{jung2017optimal} for blind demixing, as well as the construction of an exact dual certificate from the approximate one via the Putting proposition \cite{Fuchs2021}. 
For completeness, we include the proof of Proposition \ref{exactDCexistence} in the Appendix (see Sections \ref{appendix:dualcertificate}, \ref{appendix:golfingscheme} for the construction of the approximate dual certificate and the norm estimates, and Section \ref{appendix:puttingproposition} for the exact dual certificate).

\subsection{Notation}\label{notation}
Before we proceed with the formulation and the proof of the main result, we would like to introduce some common notation used in our paper. $A^*$ will denote the adjoint of a matrix $A \in \C^{K \times N}$. $\overline{a}$ will denote the complex conjugate of $a \in \C$. $\text{Re}\,a$ and $\text{Im}\,a$ will denote the real and the imaginary part of $a$, respectively. By $\log$ we will denote the natural logarithm with base $e$. $\nucl{A}$ will denote the nuclear norm of a matrix $A \in \C^{K \times N}$. It is equal to the sum of the singular values of the matrix $\sum_{i=1}^{r} \sigma_i$, where $(\sigma_1, ... , \sigma_r)$ are the singular values of $A$. By $\frob{A}$ we will denote the Frobenius norm, and by $\spec{A}$ the spectral norm of a matrix $A$. By $\twonorm{v}$ we will denote the Euclidean norm of a vector $v$. $\scal{v,w} = v^* w$ will denote the Euclidean scalar product between two vectors $v$ and $w$. $\scal{X,Z}_F = \Tr(X^* Z)$ will denote the Frobenius scalar product between two matrices. $v \odot w$ will denote the Hadamard (i.e. elementwise) product of two vectors. $\text{diag}\,v$ will denote the diagonal matrix whose diagonal is given by $v$. 
\\[5pt]
By $[n]$, we will understand the index set $\{ 1, ... , n \}$ for $n \in \mathbb{N}$. We will say $a \lesssim b$ if there exists a universal constant $C > 0$ such that $a \leq C b$. We will write $a \sim b$ if $a \lesssim b$ and $b \lesssim a$. $\text{Id}_d$ will denote the identity map on $\mathbb{R}^d$.
\\[5pt]
We will denote the cardinality of a set $S$ by $|S|$. 

\section{Main result}\label{mainresult}
Our main result establishes a noise level-dependent robustness guarantee for the blind deconvolution model via nuclear norm minimization. 
\begin{theorem}\label{mainresulttheorem}
    Let $\omega \geq 1$. Let $\A: \C^{K \times N} \to \C^L$ be given by \eqref{LinearOperator},  $X_0 = h_0 m_0^*$, where $h_0 \in \C^K$, $m_0 \in \C^N$. Let $y \in \C^L$ be given by $y = \A(X_0) + e$, where $e \in \C^L$ satisfies $\twonorm{e} \leq \tau$. Assume that the number of measurements satisfies
    \begin{equation*}
        L \geq C \omega \left( K \mu_{\max}^2 \log(K \mu_{\max}^2) + N \mu_{h_0,\omega}^2 \right) \log^3 L,
    \end{equation*}
   where $\mu_{h_0,\omega}$ is an incoherence parameter defined in \eqref{muhomegadefinition}.
    Then, with probability at least $1 - \mathcal{O}(L^{-\omega})$, it holds for any minimizer $X^*$ of the semidefinite program \eqref{noisySDP} that 
    \begin{equation}\label{mainresultequation}
\frob{X^* - X_0} \lesssim \max \left\{ (\log(\omega L))^{1/4} \sqrt{\frob{X_0} \tau }, \sqrt{\log(\omega L)} \tau \right\}.
    \end{equation}
    Here, $C > 0$ is a universal constant. 
\end{theorem}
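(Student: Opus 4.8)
The proof combines the descent-cone reduction of Section~\ref{background:DescentConeAnalysis} with the exact dual certificate of Proposition~\ref{exactDCexistence} and a refined decomposition of the reconstruction error adapted to rank-one ground truths. Since $X^*$ is feasible and nuclear-norm optimal, write $X^* = X_0 + Z$ with $Z \in \cone{X_0}$ and $\twonorm{\A(Z)} \le 2\tau$; the goal is to bound $\frob{Z}$. Let $X_0 = \nu\,uv^*$ be the singular value decomposition, so that $\nu = \nucl{X_0} = \frob{X_0}$, write $Z = \PT{X_0}Z + \PTperp{X_0}Z$, and split the tangent part as $\PT{X_0}Z = \beta\,uv^* + (u\tilde a^* + \tilde b\,v^*)$ with $\tilde a \perp v$ and $\tilde b \perp u$. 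The four matrices $\beta uv^*$, $u\tilde a^*$, $\tilde b v^*$ and $\PTperp{X_0}Z$ are pairwise orthogonal in the Frobenius inner product, whence $\frob{Z}^2 = |\beta|^2 + \twonorm{\tilde a}^2 + \twonorm{\tilde b}^2 + \frob{\PTperp{X_0}Z}^2$, and it suffices to bound each summand.

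We exploit the optimality of $X^*$ twice. First, $\nucl{X^*} \le \nucl{X_0}$ and the subgradient inequality for $\nucl{\cdot}$ at $X_0$, applied to the subgradient $uv^* + W$ with $W \in \tangentperp{X_0}$, $\spec{W} \le 1$, chosen to align with $\PTperp{X_0}Z$, give $0 \ge \mathrm{Re}\,\beta + \nucl{\PTperp{X_0}Z}$. Pairing $Z$ with the exact dual certificate $Y = \As(z)$ of Proposition~\ref{exactDCexistence}, using $\scal{Y,Z}_F = \beta + \scal{\PTperp{X_0}Y, \PTperp{X_0}Z}_F$, the bound $|\scal{Y,Z}_F| = |\scal{z,\A(Z)}| \le 2\tau\twonorm{z}$ and $\spec{\PTperp{X_0}Y} \le \tfrac34$, gives $\mathrm{Re}\,\beta \ge -2\tau\twonorm{z} - \tfrac34\nucl{\PTperp{X_0}Z}$. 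Combining the two inequalities yields $\nucl{\PTperp{X_0}Z} \lesssim \tau\twonorm{z}$ and $|\mathrm{Re}\,\beta| \lesssim \tau\twonorm{z}$, hence also $\frob{\PTperp{X_0}Z} \le \nucl{\PTperp{X_0}Z} \lesssim \tau\twonorm{z}$, with $\twonorm{z} \lesssim \sqrt{\log(\omega L)}$ by Proposition~\ref{exactDCexistence}. Second — and this is the key refined observation for rank-one $X_0$ — optimality also forces $\sigma_1(X^*) \le \nucl{X^*} \le \nu$, while $\sigma_1(X^*) = \spec{X^*} \ge |u^*X^*v| = |\nu + \beta|$ because the contributions of $u\tilde a^*$, $\tilde b v^*$ and $\PTperp{X_0}Z$ to $u^*X^*v$ all vanish; thus $|\nu + \beta| \le \nu$, which forces $|\beta|^2 \le -2\nu\,\mathrm{Re}\,\beta \le 2\nu\,|\mathrm{Re}\,\beta|$, and with the previous bound $|\beta| \lesssim (\log(\omega L))^{1/4}\sqrt{\frob{X_0}\,\tau}$. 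This is precisely the first term of \eqref{mainresultequation} and is the mechanism by which small noise levels produce only square-root-size errors, consistently with the instability of Theorem~\ref{instability}.

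It remains to bound the non-aligned tangent component $\twonorm{\tilde a}^2 + \twonorm{\tilde b}^2 = \frob{u\tilde a^* + \tilde b v^*}^2$, which is the crux. Since $\lambda_{\min}(\A, \cone{X_0})$ can be as small as $\sqrt{L/(KN)}$ (\cite[Proposition~3.3]{krahmer2020}), no dimension-free restricted injectivity of $\A$ on the descent cone is available; instead Lemma~\ref{lowerboundforbeta} provides a lower bound on $\twonorm{\A(Z)}$ which, after removing the contributions of $\A(\beta uv^*)$ and $\A(\PTperp{X_0}Z)$ — estimated via $|\beta|$, $\nucl{\PTperp{X_0}Z}$, the bound $\twonorm{\A(uv^*)} \lesssim \sqrt{\log(\omega L)}$, and a boundedness estimate for $\A$ — and combined with $\twonorm{\A(Z)} \le 2\tau$, yields $\twonorm{\tilde a} + \twonorm{\tilde b} \lesssim \max\{(\log(\omega L))^{1/4}\sqrt{\frob{X_0}\,\tau},\ \sqrt{\log(\omega L)}\,\tau\}$. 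Substituting the bounds on all four summands into $\frob{Z}^2 = |\beta|^2 + \twonorm{\tilde a}^2 + \twonorm{\tilde b}^2 + \frob{\PTperp{X_0}Z}^2$ and recalling $\nucl{X_0} = \frob{X_0}$ yields \eqref{mainresultequation}.

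The difficulty is essentially contained in Lemma~\ref{lowerboundforbeta}. One must analyze the structured random operator $\A$ on the tangent directions $u\tilde a^*$ and $\tilde b v^*$: for fixed $\tilde a,\tilde b$ the measurement entries decouple into Gaussian quantities $\tilde a^*c_\ell$ and $v^*c_\ell$ — independent, since $\tilde a \perp v$ — modulated by the deterministic weights $b_\ell^*u$ and $b_\ell^*\tilde b$, so pointwise concentration is straightforward; the work lies in making it uniform over the $(K+N)$-dimensional tangent space by a covering argument while keeping the coherence parameters $\mu_{\max}$ and $\mu_{h_0,\omega}$ absorbed by the measurement count $L \gtrsim (K\mu_{\max}^2 + N\mu_{h_0,\omega}^2)\log^3 L$. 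The delicate point is that this lower bound must be allowed to degrade near the aligned direction $uv^*$ — otherwise it would contradict Theorem~\ref{instability} — so its correction terms are not of lower order but must be matched exactly against the bounds on $|\beta|$ and $\nucl{\PTperp{X_0}Z}$ from the previous step, and arranging the logarithmic exponents to line up with \eqref{mainresultequation} is what makes the estimate subtle.
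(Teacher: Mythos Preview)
Your first two paragraphs are correct and match the paper's argument: the dual-certificate pairing yields $|\mathrm{Re}\,\beta|,\ \nucl{\PTperp{X_0}Z} \lesssim \tau\sqrt{\log(\omega L)}$ (this is the unnormalized form of Lemma~\ref{lowerboundforAZ} combined with $\twonorm{\A(Z)}\le 2\tau$), and the rank-one observation $|\nu+\beta|\le\nu$ then controls $|\beta|$. The gap is in the third paragraph, where you try to bound the non-aligned tangent part $\twonorm{\tilde a}+\twonorm{\tilde b}$ by ``removing the contributions of $\A(\beta uv^*)$ and $\A(\PTperp{X_0}Z)$'' from $\twonorm{\A(Z)}\le 2\tau$ and then invoking a restricted injectivity of $\A$ on tangent directions. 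This does not work: one would need $\twonorm{\A(\PTperp{X_0}Z)}$ to be small, but only $\nucl{\PTperp{X_0}Z}\lesssim\tau\sqrt{\log(\omega L)}$ is available, and by Lemma~\ref{lemmaNormofA} the operator norm satisfies $\spec{\A}\asymp\sqrt{KN\mu_{\max}^2/L\cdot\log L}$ in the relevant regime, so $\twonorm{\A(\PTperp{X_0}Z)}\le\spec{\A}\frob{\PTperp{X_0}Z}$ reintroduces exactly the dimension factor $\sqrt{KN/L}$ you are trying to avoid. (You also misidentify the lemma: Lemma~\ref{lowerboundforbeta} is not a lower bound on $\twonorm{\A(Z)}$ --- that is Lemma~\ref{lowerboundforAZ}.)

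The fix is that no further use of $\A$ is needed at all; the bound on $\twonorm{\tilde a}+\twonorm{\tilde b}$ is purely deterministic and comes from the \emph{same} nuclear-norm inequality you already exploited. Indeed, $\nucl{\PT{X_0}X^*}\le\nucl{X^*}\le\nu$, and writing $\PT{X_0}X^*$ in the orthonormal frames $(u,\tilde b/\twonorm{\tilde b})$ and $(v,\tilde a/\twonorm{\tilde a})$ gives a $2\times 2$ matrix whose nuclear norm equals $\sqrt{|\nu+\beta|^2+(\twonorm{\tilde a}+\twonorm{\tilde b})^2}$. Thus $(\twonorm{\tilde a}+\twonorm{\tilde b})^2\le\nu^2-|\nu+\beta|^2= -2\nu\,\mathrm{Re}\,\beta-|\beta|^2\le 2\nu\,|\mathrm{Re}\,\beta|\lesssim\nu\tau\sqrt{\log(\omega L)}$, which is exactly the missing bound. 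This is precisely the content of Lemma~\ref{lowerboundforbeta} (there stated in the normalized parameterization $Z\mapsto Z/\frob{Z}$, $\varepsilon=\frob{Z}$), so the lemma you cite is in fact the right one --- it just says something different from what you describe, and it replaces rather than supplements your covering argument.
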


A few comments regarding Theorem \ref{mainresulttheorem} are in order.
    Note that we are interested in the noise level regime where $ \tau \leq \frob{X_0} \approx \twonorm{\A (X_0)}$. Only in this regime we can expect nontrivial reconstruction guarantees, since, if $\tau \geq \frob{X_0}$, one could choose a noise vector $e$ such that the trivial zero solution is feasible. 
    In particular, in the relevant regime $ \tau \leq \frob{X_0} $, the bound \eqref{mainresultequation} becomes
    \begin{equation}\label{equ:errorbound}
        \frob{X^* - X_0} \lesssim  (\log(\omega L))^{1/4} \sqrt{\frob{X_0} \tau }.
    \end{equation}
    To see how this compares to the the existing dimension-dependent recovery guarantee \eqref{BDoldbound} in \cite{ahmedBD,strohmer2017demixing,jung2017optimal} we first reformulate \eqref{equ:errorbound} as
    \begin{equation*}
     \frob{X^* - X_0} \lesssim \sqrt{\frac{   (\log(\omega L))^{1/2} \frob{X_0}}{\tau}} \cdot  \tau .   
    \end{equation*}
    Ignoring logarithmic factors, for noise levels $  \frac{L}{KN\mu_{\max}^2} \frob{X_0}  \ll  \tau \leq \frob{X_0} $, this significantly improves over the dimension-dependent recovery guarantee \eqref{BDoldbound}.
    \\[5pt]
Compared to the stability result in \cite{krahmer2020} (see Theorem \ref{StabilityLargeLevels}), in our result we observe a square-root dependence of the reconstruction error bound on the noise level $\tau$ for small noise levels.
In contrast, the reconstruction error bound in Theorem \ref{StabilityLargeLevels} becomes constant whenever the noise level $\tau$ is smaller than a certain threshold. 
\\[5pt]
The bound \eqref{mainresultequation} in Theorem \ref{mainresulttheorem} becomes worse when the noise level $\tau$ becomes smaller. This reflects the instability result in \cite{krahmer2020}, see Theorem \ref{instability}, which shows the existence of an alternative solution which amplifies the output error by a dimension factor. 
\subsection{Proof of Theorem \ref{mainresulttheorem}}\label{proofofmaintheorem}
Without loss of generality, we assume $\twonorm{h_0} = \twonorm{m_0} = 1$ and write $X_0$ as 
\begin{equation}\label{SVDX0}
    X_0 = \nu h_0  m_0^*,
\end{equation}
where we have $\nu = \nucl{X_0} $. 
\\[5pt]
Let $X^*$ be a minimizer of the convex program \eqref{noisySDP}. Then it holds that $\nucl{X^*} \leq \nucl{X_0}$ and hence, there exist an $\varepsilon > 0$ and a descent cone element $Z \in \cone{X_0}$ satisfying $\frob{Z} = 1$, such that $X^*$ can be written as
\begin{equation}\label{XstarX0andZ}
    X^* = X_0 + \varepsilon Z.
\end{equation}
Recall from Section \ref{background:DescentConeAnalysis} that lower bounds on the minimum conic singular value (see Definition \ref{def:minimumconicsingularvalue})
translate into recovery guarantees for $X^*$ by means of Theorem \ref{conicvaluetorecoveryguarantees}. However, as shown in \cite{krahmer2020}, the minimum conic singular value for the blind deconvolution problem is ill-behaved. This motivates us to provide a more refined analysis for $\twonorm{\A(Z)}$ which takes into account the geometry of $Z$. First, we will introduce some notation.
\\[5pt]
We can write the descent cone element $Z$ using the orthogonal decomposition
\begin{equation*}
    Z = \PT{} Z + \PTperp{}Z,
\end{equation*}
where $\tangent{} = \{h_0 m^* + h m_0^*: \; m \in \C^N, h \in \C^K \}$ and $\tangentperp{}$ denotes the orthogonal complement of $\tangent{}$. 
\\[5pt]
We denote
\begin{equation*}
    M \coloneqq \PTperp{}Z.
\end{equation*}
Then, we can write $Z$ in the following form:
\begin{equation}\label{descentconeelementdecomposition}
    Z = - \beta h_0 m_0^* + \gamma h_0 {m_0^{\perp}}^* + \eta h_0^{\perp} m_0^* + M,
\end{equation}
where $\beta, \gamma, \eta \in \R$, $h_0^{\perp} \perp h_0$, $m_0^{\perp} \perp m_0$ and $\twonorm{{m_0^{\perp}}} = \twonorm{{h_0^{\perp}}} = 1$.
\\[5pt]
For the descent cone element $Z$, the parameter $\beta$ corresponds to the "actual decrease" of the nuclear norm, whereas the orthogonal direction $M$ can only increase the nuclear norm and is smaller than $\beta$.
This fact is captured by the following lemma.
\begin{lemma}\label{MandBeta}
   Let $Z \in \cone{X_0}$ be given by \eqref{descentconeelementdecomposition}, $\beta = - \scal{Z, h_0 m_0^*}_F $, and $M = \PTperp{} Z$. Then it holds that
    \begin{equation*}
    \nucl{M} \leq \beta.
    \end{equation*}
\end{lemma}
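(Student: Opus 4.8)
The plan is to prove the bound by producing an explicit subgradient of the nuclear norm at $X_0$ tailored to the matrix $M$, and then reading off the inequality from the subgradient characterization of $\partial\nucl{\cdot}(X_0)$ recalled in Section~\ref{background:DCandtheGolfingScheme}. Write $X_0 = \nu h_0 m_0^*$ as in \eqref{SVDX0}, so $\nu = \nucl{X_0}$, and recall that $Z \in \cone{X_0}$ means there is some $\varepsilon > 0$ with $\nucl{X_0 + \varepsilon Z} \le \nucl{X_0}$. If $M = 0$ the claim $\nucl{M} = 0 \le \beta$ drops out of the computation below, so assume $M \neq 0$ and let $M = U_M \Sigma_M V_M^*$ be a reduced SVD. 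The object to use is
\[
    W := h_0 m_0^* + U_M V_M^* .
\]

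First I would verify that $W \in \partial \nucl{\cdot}(X_0)$, i.e.\ that $\PT{} W = h_0 m_0^*$ and $\spec{\PTperp{} W} \le 1$. Since $M \in \tangentperp{}$, its column space is orthogonal to $h_0$ and its row space is orthogonal to $m_0$; hence the columns of $U_M$ are orthogonal to $h_0$, the columns of $V_M$ are orthogonal to $m_0$, and therefore $U_M V_M^* \in \tangentperp{}$. As $h_0 m_0^* \in \tangent{}$, this gives $\PT{} W = h_0 m_0^*$ and $\PTperp{} W = U_M V_M^*$, which is a partial isometry and hence has spectral norm $1$. Then the subgradient inequality for the convex function $\nucl{\cdot}$ yields $\nucl{X_0 + \varepsilon Z} \ge \nucl{X_0} + \varepsilon\,\text{Re}\,\scal{W, Z}_F$; combined with $\nucl{X_0 + \varepsilon Z} \le \nucl{X_0}$ and $\varepsilon > 0$ this forces $\text{Re}\,\scal{W, Z}_F \le 0$.

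It then remains to evaluate $\scal{W, Z}_F$. Splitting $Z = \PT{} Z + M$ and using that $\tangent{}$ and $\tangentperp{}$ are orthogonal, one gets $\scal{U_M V_M^*, Z}_F = \scal{U_M V_M^*, M}_F = \Tr(\Sigma_M) = \nucl{M}$, while $\scal{h_0 m_0^*, Z}_F = \overline{\scal{Z, h_0 m_0^*}_F} = -\overline{\beta}$ by the convention $\scal{X,Z}_F = \Tr(X^*Z)$ and the definition $\beta = -\scal{Z, h_0 m_0^*}_F$. Hence $\text{Re}\,\scal{W, Z}_F = -\text{Re}(\beta) + \nucl{M} = -\beta + \nucl{M} \le 0$, which is exactly $\nucl{M} \le \beta$. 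The argument is short, and I do not anticipate a real obstacle; the only steps that need care are checking that $U_M V_M^*$ is a partial isometry lying in $\tangentperp{}$ (so that $W$ is a bona fide subgradient with $\spec{\PTperp{} W} \le 1$) and the real-part/conjugation bookkeeping identifying $\text{Re}\,\scal{h_0 m_0^*, Z}_F$ with $-\beta$. A conceptual point worth stressing is that one should \emph{not} try to route this through the minimum conic singular value $\lambda_{\min}(\A, \cone{X_0})$, which is precisely the ill-behaved quantity here; instead one exploits the convex geometry of the nuclear-norm ball directly via the adapted dual element $W$.
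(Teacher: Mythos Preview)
Your proof is correct and is essentially the same argument as the paper's: the paper invokes the descent-cone characterization $\overline{\cone{X_0}} = \{Z:\ -\text{Re}\,\scal{h_0 m_0^*,Z}_F \ge \nucl{\PTperp{}Z}\}$ (Lemma~\ref{descentconechar}, quoted from \cite{krahmer2020}) and reads off $\beta \ge \nucl{M}$ directly, while you have effectively reproved the needed direction of that lemma inline by constructing the adapted subgradient $W = h_0 m_0^* + U_M V_M^*$ and applying the subgradient inequality. The careful handling of the real part and the verification that $U_M V_M^*\in\tangentperp{}$ with unit spectral norm are exactly the ingredients behind Lemma~\ref{descentconechar}, so there is no substantive difference in approach.
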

This lemma is a consequence of the characterization of the descent cone provided in \cite{krahmer2020}.
For the proof of Lemma \ref{MandBeta} we refer to Section \ref{proofofMandBeta}.
\\[5pt]
The importance of the parameter $\beta$ becomes apparent as the most "pathological" descent directions correspond to elements $Z$ with small $\beta$, but larger tangential components $\eta$ and $\gamma$. This motivates establishing a lower bound for $\twonorm{\A(Z)}$ in terms of the "actual descent" parameter $\beta$:
\begin{lemma}[Lower bound for $\twonorm{\A(Z)}$]\label{lowerboundforAZ}
Let $\omega \geq 1$. Assume that the number of measurements satisfies 
\begin{equation*}
    L \geq C \omega \left( K \mu_{\max}^2 \log(K \mu_{\max}^2) + N \mu_{h_0,\omega}^2 \right) \log^3 L.
\end{equation*}
Then, with probability at least $1-\mathcal{O}(L^{-\omega})$, it holds for all $Z \in \cone{X_0}$ with $\frob{Z} = 1$ that

\begin{equation*}
    \twonorm{\A(Z)} \gtrsim \frac{1}{\sqrt{\log(\omega L)}} \beta,
\end{equation*}
where $\beta = -\scal{Z, h_0 m_0^*}_F$.
\end{lemma}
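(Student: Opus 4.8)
The plan is to pair the descent direction $Z$ against the \emph{exact} dual certificate of Proposition~\ref{exactDCexistence} and estimate the resulting Frobenius inner product in two different ways — one ``analytic'' (using that the certificate lies in the range of $\As$) and one ``geometric'' (using the tangent/normal decomposition and Lemma~\ref{MandBeta}). Throughout I would work on the event, of probability $1-\mathcal{O}(L^{-\omega})$ under the stated sampling condition (which is exactly \eqref{LnumforexistenceofDC}), on which there exists a pair $(z,Y)$ with $Y=\As(z)$, $\PT{X_0}Y=h_0m_0^*$, $\spec{\PTperp{X_0}Y}\le \tfrac34$, and $\twonorm{z}\lesssim\sqrt{\omega\log L}$; on this single event the claimed bound will then hold simultaneously for all $Z\in\cone{X_0}$, so no further randomness enters. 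One may assume $\beta>0$, the claim being vacuous otherwise, and recall $\beta\ge 0$ always, since $M=\PTperp{}Z$ satisfies $0\le\nucl{M}\le\beta$ by Lemma~\ref{MandBeta}.

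For the analytic upper bound, since $Y=\As(z)$ the adjoint identity gives $\scal{Y,Z}_F=\scal{\As(z),Z}_F=\scal{z,\A(Z)}$, so Cauchy--Schwarz and the norm bound on $z$ yield
\[
  \left|\scal{Y,Z}_F\right|\;\le\;\twonorm{z}\,\twonorm{\A(Z)}\;\lesssim\;\sqrt{\omega\log L}\;\twonorm{\A(Z)}.
\]
For the geometric lower bound, I would split $Z=\PT{}Z+M$ and use that $\PT{},\PTperp{}$ are orthogonal self-adjoint projections and that $h_0m_0^*=\PT{}Y\in\tangent{}$:
\[
  \scal{Y,Z}_F=\scal{\PT{}Y,\PT{}Z}_F+\scal{\PTperp{}Y,M}_F=\scal{h_0m_0^*,Z}_F+\scal{\PTperp{}Y,M}_F=-\beta+\scal{\PTperp{}Y,M}_F,
\]
where the last step uses $\beta=-\scal{Z,h_0m_0^*}_F$. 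By trace duality and Lemma~\ref{MandBeta}, $\left|\scal{\PTperp{}Y,M}_F\right|\le\spec{\PTperp{}Y}\,\nucl{M}\le\tfrac34\nucl{M}\le\tfrac34\beta$, so taking real parts gives $\left|\scal{Y,Z}_F\right|\ge\beta-\tfrac34\beta=\tfrac14\beta$. Combining the two displays, $\tfrac14\beta\lesssim\sqrt{\omega\log L}\,\twonorm{\A(Z)}$, i.e.\ $\twonorm{\A(Z)}\gtrsim\beta/\sqrt{\log(\omega L)}$ after absorbing absolute constants (up to the precise logarithmic dependence inherited from Proposition~\ref{exactDCexistence}), which is the assertion.

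Within this lemma the computation is short; the technical load has been outsourced to Proposition~\ref{exactDCexistence} (the norm-bounded \emph{exact} certificate, built via golfing plus the Putting proposition in the appendix) and to Lemma~\ref{MandBeta} (the descent-cone estimate $\nucl{M}\le\beta$). The one point requiring care — and the reason both refinements are needed simultaneously — is the interaction of these ingredients in the lower-bound step: it is essential that $\spec{\PTperp{}Y}$ be controlled by a constant \emph{strictly below} $1$ (here $\tfrac34$), not merely by $1$, since otherwise the cross term $\scal{\PTperp{}Y,M}_F$ could in principle cancel the entire $-\beta$ contribution and collapse the bound. In other words, the approximate dual certificate alone would not suffice, nor would Lemma~\ref{MandBeta} alone; it is their combination, with a quantitative spectral margin on the normal part, that produces the clean $\tfrac14\beta$ lower bound and hence the $\beta$-dependent lower bound on $\twonorm{\A(Z)}$.
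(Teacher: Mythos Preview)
Your proof is correct and follows essentially the same approach as the paper: both pair $Z$ against the exact dual certificate $Y=\As(z)$ from Proposition~\ref{exactDCexistence}, use Cauchy--Schwarz to bound $\twonorm{\A(Z)}\ge |\scal{Y,Z}_F|/\twonorm{z}$, and then evaluate $\scal{Y,Z}_F$ via the tangent/normal split together with Lemma~\ref{MandBeta} and the strict spectral bound $\spec{\PTperp{}Y}\le\tfrac34$ to obtain $|\scal{Y,Z}_F|\ge\tfrac14\beta$. Your added commentary on why the quantitative margin $\tfrac34<1$ is essential (and why the exact rather than approximate certificate is needed to kill the residual tangential term) is accurate and goes slightly beyond what the paper spells out.
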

The proof of this lemma relies on the fact that $\twonorm{\A(Z)}$ can be lower bounded by its scalar product with the exact dual certificate, which is then decomposed into tangential and orthogonal components. For the full proof of the lemma we refer to Section \ref{proofoflowerboundforAZ}.
\\[5pt]
In the next lemma, we establish a lower bound for the "actual descent parameter" $\beta$ which depends only on the deviation from the ground truth $\varepsilon = \frob{X^* - X_0}$ and the Frobenius norm of the ground truth $\frob{X_0} = \nucl{X_0} = \nu$. 
In other words, Lemma \ref{lowerboundforbeta} states that the size of the step in the descent direction is bounded by the proportion of the descent that points in the opposite direction of the ground truth.

\begin{lemma}[Lower bound for $\beta$]\label{lowerboundforbeta}
Let $\varepsilon > 0$ and $Z \in \C^{K \times N}$ with $\frob{Z} = 1$ such that $\nucl{X_0 + \varepsilon Z} \leq \nucl{X_0}$. Let $\beta = - \scal{Z, h_0 m_0^*}_F$. Then it holds that 
\begin{equation}\label{betafinallowerbound}
    \beta \geq \min \left\{\frac{\varepsilon}{4\nu}, \frac{1}{2} \right\},
\end{equation}
where $\nu = \frob{X_0} = \nucl{X_0}$.
\end{lemma}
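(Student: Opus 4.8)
The plan is to prove \eqref{betafinallowerbound} by a short, entirely elementary argument that bypasses the refined decomposition \eqref{descentconeelementdecomposition} and uses only the trivial bound $\frob{A}\le\nucl{A}$ together with the hypothesis $\nucl{X_0+\varepsilon Z}\le\nucl{X_0}$. In fact this route yields the slightly sharper statement $\beta\ge\frac{\varepsilon}{2\nu}$, and \eqref{betafinallowerbound} is then immediate since $\frac{\varepsilon}{2\nu}\ge\min\left\{\frac{\varepsilon}{4\nu},\frac12\right\}$.

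First I would record the two relevant identities for the rank-one ground truth. Since $X_0=\nu h_0 m_0^*$ with $\twonorm{h_0}=\twonorm{m_0}=1$, we have $\nucl{X_0}=\frob{X_0}=\nu$, and by the definition $\beta=-\scal{Z,h_0 m_0^*}_F$,
\begin{equation*}
    \scal{X_0,Z}_F=\nu\,\scal{h_0 m_0^*,Z}_F=\nu\,\overline{\scal{Z,h_0 m_0^*}_F}=-\nu\beta,
\end{equation*}
where we use that $\beta$ is real in the setting of \eqref{descentconeelementdecomposition}; if one insists on keeping $\beta\in\C$, every $\beta$ below should be read as $\text{Re}\,\beta$, which is the only quantity that enters.

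Next I would expand the squared Frobenius norm of the perturbed matrix, using $\frob{Z}=1$ and $\varepsilon\in\R$:
\begin{equation*}
    \frob{X_0+\varepsilon Z}^2=\frob{X_0}^2+2\varepsilon\,\text{Re}\,\scal{X_0,Z}_F+\varepsilon^2\frob{Z}^2=\nu^2-2\varepsilon\nu\beta+\varepsilon^2.
\end{equation*}
Combining $\frob{X_0+\varepsilon Z}\le\nucl{X_0+\varepsilon Z}$ with the hypothesis $\nucl{X_0+\varepsilon Z}\le\nucl{X_0}=\nu$ gives $\frob{X_0+\varepsilon Z}\le\nu$; squaring this inequality and inserting the identity above yields $\nu^2-2\varepsilon\nu\beta+\varepsilon^2\le\nu^2$, i.e.\ $\varepsilon^2\le 2\varepsilon\nu\beta$. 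Dividing by $2\varepsilon\nu>0$ gives $\beta\ge\frac{\varepsilon}{2\nu}$, and hence \eqref{betafinallowerbound}.

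I do not expect a genuine obstacle here: this is a two-line computation, and the only mild subtlety is the conjugation in $\scal{X_0,Z}_F$ and the harmless replacement of $\beta$ by $\text{Re}\,\beta$ should one prefer not to invoke the real-coefficient form \eqref{descentconeelementdecomposition}. The substantive content of the descent-cone analysis lies elsewhere, namely in Lemma \ref{MandBeta} (the bound $\nucl{M}\le\beta$) and in the lower bound for $\twonorm{\A(Z)}$ in Lemma \ref{lowerboundforAZ}; Lemma \ref{lowerboundforbeta} is the easy bookkeeping step relating the admissible step size $\varepsilon=\frob{X^*-X_0}$ to the descent parameter $\beta$, and nothing beyond $\frob{\cdot}\le\nucl{\cdot}$ is needed for it.
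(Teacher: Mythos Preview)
Your argument is correct and genuinely different from the paper's. You use only the trivial inequality $\frob{\,\cdot\,}\le\nucl{\,\cdot\,}$ applied to $X_0+\varepsilon Z$, expand the squared Frobenius norm, and read off $\beta\ge\frac{\varepsilon}{2\nu}$. The paper instead projects $X_0+\varepsilon Z$ onto the tangent space $\tangent{X_0}$, writes the projection as a $2\times 2$ block via the orthonormal pairs $(h_0,h_0^\perp)$ and $(m_0,m_0^\perp)$, computes its nuclear norm in closed form, and solves two successive quadratic inequalities (first in $\beta$ with $\zeta=|\gamma|+|\eta|$ fixed, then eliminating $\zeta$ via $\zeta^2\ge 1-2\beta^2$ using Lemma \ref{MandBeta}) to arrive at the weaker bound $\beta\ge\min\{\frac{\varepsilon}{4\nu},\frac12\}$.

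Your route is both shorter and sharper: the bound $\beta\ge\frac{\varepsilon}{2\nu}$ strictly dominates $\min\{\frac{\varepsilon}{4\nu},\frac12\}$ for every $\varepsilon>0$, and if fed into the proof of Theorem \ref{mainresulttheorem} it would in fact remove the second branch of the $\max$ in \eqref{mainresultequation}, yielding simply $\frob{X^*-X_0}\lesssim (\log(\omega L))^{1/4}\sqrt{\frob{X_0}\tau}$. The paper's decomposition-based argument does extract finer structural information about how $\beta$ controls the tangential coefficients $\gamma,\eta$, but none of that information is used downstream, so for the purposes of Lemma \ref{lowerboundforbeta} your two-line proof is preferable. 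Your handling of the real-versus-complex $\beta$ issue is also fine; the paper itself tacitly works with $\text{Re}\,\beta$ throughout.
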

The proof of this result can be found in Section \ref{proofoflowerboundforbeta}.
\\[5pt]
Summarizing the results above, we have obtained a lower bound on $\beta$ which depends only on the magnitude of the reconstruction error $\varepsilon$. Combining this estimate with Lemma \ref{lowerboundforAZ}, we obtain a lower bound for $ \twonorm{\A (Z)}$ for $Z$ as in \eqref{XstarX0andZ} via
\begin{equation}\label{finalAZlowerbound}
    \twonorm{\A(Z)} \gtrsim \frac{1}{\sqrt{\log(\omega L)}} \beta \gtrsim  \frac{1}{\sqrt{\log(\omega L)}} \min \left\{\frac{\varepsilon}{4\nu}, \frac{1}{2} \right\} \gtrsim \frac{1}{\sqrt{\log(\omega L)}} \min \left\{\frac{\varepsilon}{\nu}, 1\right\}.
\end{equation}
Next, we note that
\begin{align*}
    \varepsilon \twonorm{\A(Z)} 
    &=\twonorm{\A(X^*) - \A(X_0)}\\ 
    &\le \twonorm{\A(X^*) - y} + \twonorm{y - \A(X_0)}\\
    &=  \twonorm{y - \A(X^*)} + \twonorm{e}\\
    &\leq 2 \tau,
\end{align*}
where in the last line we use that $X^*$ is feasible as well as $ \twonorm{e} \leq \tau$ by assumption. 
Combining this inequality chain with \eqref{finalAZlowerbound}, we obtain that
\begin{equation*}
    \frac{1}{\sqrt{\log(\omega L)}} \min \left\{\frac{\varepsilon}{\nu}, 1\right\} \varepsilon \lesssim \tau.
\end{equation*}
Thus, it follows that
\begin{equation*}
\varepsilon \lesssim \max \left\{ (\log(\omega L))^{1/4} \sqrt{\nu \tau }, \sqrt{\log(\omega L)} \tau \right\}.
\end{equation*}
Since $\frob{X^* - X_0} = \frob{\varepsilon Z} = \varepsilon$, we obtain the final reconstruction bound 
\begin{equation*}
     \frob{X^* - X_0} \lesssim \max \left\{ (\log(\omega L))^{1/4} \sqrt{\frob{X_0} \tau }, \sqrt{\log(\omega L)} \tau \right\}.
\end{equation*}

\section{Proof of auxiliary results}\label{proofoftheauxiliaryresults}
\subsection{Proof of Lemma \ref{MandBeta} }\label{proofofMandBeta}
For the proof we use the following characterization of the (closure of the) descent cone of the nuclear norm:
\begin{lemma}[See, e.g., \cite{krahmer2020}]\label{descentconechar}
    Let $X \in \C^{K \times N}$ be a rank-$r$ matrix with singular value decomposition $X = U \Sigma V^*$. Then
    \begin{equation*}
        \overline{\cone{X}} = \left\{Z \in \C^{K \times N}: \: - Re ( \scal{UV^*, Z}_F ) \geq \nucl{\PTperp{X}(Z)} \right\},
    \end{equation*}
where $\overline{\cone{X}}$ denotes the closure of the descent cone at $X$.
\end{lemma}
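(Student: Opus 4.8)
The plan is to identify the closure of the descent cone with the zero-sublevel set of the one-sided directional derivative of the nuclear norm, and then to evaluate that derivative by maximizing $Z \mapsto \operatorname{Re}\scal{W,Z}_F$ over the subdifferential $\partial\nucl{\cdot}(X)$ recorded in Section~\ref{background:DCandtheGolfingScheme}. I regard $\C^{K\times N}$ as a real inner-product space with $\operatorname{Re}\scal{\cdot,\cdot}_F$, so that $\nucl{\cdot}$ is a finite convex function and the limit $f'(X;Z) := \lim_{t\to 0^+} t^{-1}(\nucl{X+tZ}-\nucl{X})$ exists (the difference quotient is nondecreasing in $t$ by convexity). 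The first step is the abstract sandwich $\{Z : f'(X;Z)<0\} \subseteq \cone{X} \subseteq \{Z : f'(X;Z)\le 0\}$: the left inclusion holds because $f'(X;Z)<0$ forces $\nucl{X+tZ}<\nucl{X}$ for small $t>0$, and the right inclusion follows since $\nucl{X+\varepsilon Z}\le\nucl{X}$ together with monotonicity of the difference quotient gives $f'(X;Z)\le 0$.

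The second step computes $f'(X;\cdot)$ explicitly. Using the standard convex-analysis fact that the directional derivative of a finite convex function is the support function of its subdifferential, $f'(X;Z)=\sup_{W\in\partial\nucl{\cdot}(X)}\operatorname{Re}\scal{W,Z}_F$. For any such $W$ we have $\PT{X}W=UV^*$ and $\spec{\PTperp{X}W}\le 1$, so splitting along the orthogonal decomposition $Z=\PT{X}Z+\PTperp{X}Z$ yields $\operatorname{Re}\scal{W,Z}_F = \operatorname{Re}\scal{UV^*,Z}_F + \operatorname{Re}\scal{\PTperp{X}W,\PTperp{X}Z}_F$. The first summand is independent of $W$, and maximizing the second over $\{S\in\tangentperp{X}:\spec{S}\le1\}$ gives $\nucl{\PTperp{X}Z}$ by nuclear--spectral duality; the restriction to $\tangentperp{X}$ is harmless because the optimal $S$ built from the SVD of $\PTperp{X}Z\in\tangentperp{X}$ again lies in $\tangentperp{X}$. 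Hence $f'(X;Z)=\operatorname{Re}\scal{UV^*,Z}_F+\nucl{\PTperp{X}Z}$, and the condition $f'(X;Z)\le 0$ reads exactly $-\operatorname{Re}\scal{UV^*,Z}_F\ge\nucl{\PTperp{X}Z}$.

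It remains to upgrade the sandwich to the stated equality with the \emph{closure} of $\cone{X}$. Since $f'(X;\cdot)$ is a continuous sublinear function, $\{Z : f'(X;Z)\le 0\}$ is a closed convex cone, so $\overline{\cone{X}}\subseteq\{Z : f'(X;Z)\le 0\}$. For the reverse inclusion I would exploit a strict descent direction: taking $G=-UV^*$ gives $f'(X;G)=-\operatorname{Re}\scal{UV^*,UV^*}_F+\nucl{\PTperp{X}(UV^*)}=-r<0$, using $\scal{UV^*,UV^*}_F=r$ and $\PTperp{X}(UV^*)=0$, which is valid since $X$ has rank $r\ge 1$. Then for any $Z$ with $f'(X;Z)\le 0$, sublinearity gives $f'(X;Z+tG)\le f'(X;Z)+t\,f'(X;G)<0$ for $t>0$, so $Z+tG\in\cone{X}$, and letting $t\to 0^+$ shows $Z\in\overline{\cone{X}}$. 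The main obstacle is this closure step, together with the clean justification of the support-function identity and of the reduction of the spectral-ball maximization to $\tangentperp{X}$; keeping the real-part inner product consistent throughout the complex matrix computations is the remaining point requiring care.
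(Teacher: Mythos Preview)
The paper does not supply its own proof of this lemma; it is stated with the citation ``See, e.g., \cite{krahmer2020}'' and then invoked as a black box in Section~\ref{proofofMandBeta}. So there is no in-paper argument to compare against.

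Your argument is correct and self-contained. The three steps---the sandwich $\{f'(X;\cdot)<0\}\subseteq\cone{X}\subseteq\{f'(X;\cdot)\le 0\}$, the evaluation $f'(X;Z)=\operatorname{Re}\scal{UV^*,Z}_F+\nucl{\PTperp{X}Z}$ via the support-function identity and nuclear--spectral duality, and the closure step using the strict descent direction $G=-UV^*$---all go through as written. The point you flag about restricting the spectral-ball maximization to $\tangentperp{X}$ is fine: if $\PTperp{X}Z=\tilde U\tilde\Sigma\tilde V^*$ is a compact SVD, then the columns of $\tilde U$ (resp.\ $\tilde V$) are orthogonal to those of $U$ (resp.\ $V$), so the maximizer $\tilde U\tilde V^*$ lies in $\tangentperp{X}$. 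The only edge case your perturbation argument excludes is $r=0$, but there both sides of the claimed identity collapse to $\{0\}$ trivially, and the paper's setting has $r\ge 1$ throughout.
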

Since $Z \in \cone{X_0}$, it holds that $Z \in \overline{\cone{X_0}}$ and from Lemma \ref{descentconechar} it follows that 
\begin{equation}\label{MsmallerBeta}
    \beta = - Re ( \scal{h_0 m_0 ^*, Z}_F ) \geq \nucl{\PTperp{X}(Z)} = \nucl{M}.
\end{equation}
In particular, it follows that $\beta \geq 0$.

\subsection{Proof of Lemma \ref{lowerboundforAZ}}\label{proofoflowerboundforAZ}
Recall that by Proposition \ref{exactDCexistence}, with probability at least $1-\mathcal{O}(L^{-\omega})$ there exists an exact dual certificate pair $(z, Y)$, where
\begin{equation*}
    \twonorm{z} \lesssim \sqrt{\omega \log L}
\end{equation*}
and $Y$ satisfies properties \eqref{property1exact} and \eqref{property2exact}.
We recall that $Z \in \cone{X_0}$ and $\frob{Z} = 1$. Thus, we can estimate
\begin{equation}\label{normofAZ}
    \twonorm{\A(Z)} \geq \frac{1}{\twonorm{z}} | \scal{z, \A(Z)} | \gtrsim \frac{1}{\sqrt{\omega \log L}} | \scal{z, \A(Z)} |.
\end{equation}
We now observe that 
\begin{align*}
\scal{z, \A(Z)} = \scal{\As(z), Z} = \scal{Y, Z},
\end{align*}
where we have used that $Y = \As(z)$.
Next, we decompose the exact dual certificate $Y$ into components parallel and orthogonal to the tangent space $\tangent{X_0}$: 
\begin{align}
    \scal{Y, Z} &= \scal{\PT{X_0} Y + \PTperp{X_0} Y, Z} = \scal{\PT{X_0} Y, \PT{X_0} Z} + \scal{\PTperp{X_0} Y, \PTperp{X_0} Z} \nonumber  \\
    &=  \scal{\PT{X_0} Y - h_0 m_0^*, \PT{X_0} Z} + \scal{ h_0 m_0^*, \PT{X_0} Z} + \scal{\PTperp{X_0} Y, \PTperp{X_0} Z} \label{threeTerms}
\end{align}
where in the first line, we have used the idempotence of $\PT{X_0}$ and $\PTperp{X_0}$, and in the second line, we have added and substracted $h_0 m_0^*$. 
\\[5pt]
In \eqref{threeTerms}, the first term is equal to zero since $Y$ satisfies property \eqref{property1exact}. For the second term in \eqref{threeTerms}, we obtain
\begin{equation*}
    \scal{ h_0 m_0^*, \PT{X_0} Z} = \scal{ h_0 m_0^*, - \beta h_0 m_0^* + \gamma h_0 {m_0^{\perp}}^* + \eta h_0^{\perp} m_0^*} = - \beta,
\end{equation*}
since $m_0^{\perp} \perp m_0$ and $h_0^{\perp} \perp h_0$. For the third term in \eqref{threeTerms}, we observe using Hölder's inequality that
\begin{equation*}
    \left| \scal{\PTperp{X_0} Y, \PTperp{X_0} Z} \right| \leq \spec{\PTperp{X_0} Y} \nucl{\PTperp{X_0} Z}.
\end{equation*}
From Proposition \ref{exactDCexistence} it follows that 
\begin{equation*}
    \spec{\PTperp{X_0}Y' }  \leq \frac{3}{4}. 
\end{equation*}
Moreover, from Lemma \ref{MandBeta} we recall that
\begin{equation*}
    \nucl{M} \leq \beta.
\end{equation*}
In total, we obtain that
\begin{equation*}
    \left| \scal{\PTperp{X_0} Y, \PTperp{X_0} Z} \right| \leq \frac{3}{4} \beta.
\end{equation*}
After summation of the terms in \eqref{threeTerms}, it follows that
\begin{equation*}
    \scal{Y, Z} \leq -\frac{\beta}{4},
\end{equation*}
and thus, \eqref{normofAZ} implies
\begin{equation}\label{AZlowerboundwithbeta}
    \twonorm{\A(Z)} \gtrsim \frac{1}{\sqrt{\omega \log L}} \beta.
\end{equation}

\subsection{Proof of Lemma \ref{lowerboundforbeta}}\label{proofoflowerboundforbeta}
Our objective is to provide a lower bound on the quantity $\beta = - \scal{Z, h_0 m_0^*}_F$.
We recall that
\begin{equation}\label{ineq1}
    \nucl{X^*} = \nucl{X_0 + \varepsilon Z} \leq \nucl{X_0} = \nu.
\end{equation}
Utilizing the decomposition \eqref{descentconeelementdecomposition} of $Z \in \cone{X_0}$, we obtain for the projection of $X^*$ onto the tangent space:
\begin{equation*}
    \PT{X_0} X^* = (\nu - \varepsilon \beta) h_0 m_0^* + \varepsilon \gamma h_0 {m_0^{\perp}}^* + \varepsilon \eta h_0^{\perp} m_0^*.
\end{equation*}
$ \PT{X_0} X^*$ is (at most) a rank-$2$ matrix. $(h_0, h_0^{\perp})$ and $(m_0, m_0^{\perp})$ are orthonormal sets. We can thus write $ \PT{X_0} X^*$ in the block matrix form
\begin{equation}\label{blockmatrixdecomposition}
    \PT{X_0} X^* = 
    \begin{pmatrix}
    h_0   & h_0^{\perp} 
\end{pmatrix}
    \begin{pmatrix}
\nu - \varepsilon \beta & \varepsilon \gamma \\
\varepsilon \eta & 0 
\end{pmatrix}
\begin{pmatrix}
m^*_0 \\
 {m_0^{\perp}}^*
\end{pmatrix}.
\end{equation}
We observe that projecting $X^*$ onto the tangent space $\tangent{X_0}$ decreases its nuclear norm, i.e.,
\begin{equation}\label{ineq2}
    \nucl{X^*} \geq \nucl{\PT{X_0} X^*},
\end{equation}
since
\begin{align*}
    \nucl{X^*} &= \sup_{\spec{A} \leq 1} \scal{A, X^*} \geq \sup_{\spec{A} \leq 1, A \in \tangent{X_0}} \scal{A, X^*} \\
    &= \sup_{\spec{A} \leq 1} \scal{ \PT{X_0} A, X^*} = \sup_{\spec{A} \leq 1} \scal{ A, \PT{X_0} X^*} = \nucl{\PT{X_0} X^*}.
\end{align*}
We can compute $\nucl{\PT{X_0} X^*}$ explicitly from its matrix decomposition \eqref{blockmatrixdecomposition}: since $\begin{pmatrix}
    h_0   & h_0^{\perp} 
\end{pmatrix}$ and $\begin{pmatrix}
m^*_0 \\
 {m_0^{\perp}}^*
\end{pmatrix}$ are unitary transformations, it holds that
\begin{equation}\label{ineq3}
    \nucl{\PT{X_0} X^*} = \nucl{\begin{pmatrix}
    \nu - \varepsilon \beta &  \varepsilon \gamma \\
    \varepsilon \eta & 0
    \end{pmatrix}}.
\end{equation}
Computing the nuclear norm of this $2 \times 2$ matrix explicitly yields 
\begin{equation}\label{explicitnuclearnorm}
    \nucl{\PT{X_0} X^*} = \sqrt{(\nu - \varepsilon \beta)^2 + \varepsilon^2 ( | \gamma | + | \eta |^2)^2}.
\end{equation}
Putting the chain of (in)equalities \eqref{ineq1}, \eqref{ineq2}, \eqref{explicitnuclearnorm} together and squaring both sides results in
\begin{equation*}
    \nu^2 \geq (\nu - \varepsilon \beta)^2 + \varepsilon^2 ( | \gamma | + | \eta |^2)^2,
\end{equation*}
which is a quadratic inequality in $\beta$:
\begin{equation}\label{betainequality}
    \varepsilon^2 \beta^2 - 2 \nu \varepsilon \beta + \varepsilon^2(|\gamma| + |\eta|)^2 \leq 0.
\end{equation}
Since the problem is symmetric in the parameters $\gamma$ and $\eta$, we introduce the notation $\zeta \coloneqq |\gamma| + |\eta|$. Solving \eqref{betainequality} provides the following conditions on $\beta$ and $\zeta$:
\begin{align}
    \beta &\in \left[ \frac{\nu}{\varepsilon} -\sqrt{\frac{\nu^2}{\varepsilon^2}-\zeta^2} ; \frac{\nu}{\varepsilon} + \sqrt{\frac{\nu^2}{\varepsilon^2}-\zeta^2} \right],\label{betabound1} \\
    \zeta^2 &\leq \frac{\nu^2}{\varepsilon^2}, \nonumber
\end{align}
where we have used that $\varepsilon \geq 0$, $\zeta \geq 0$, $\nu \geq 0$. 
For the lower bound of the interval \eqref{betabound1}, the following estimate holds:
\begin{equation*}
    \frac{\nu}{\varepsilon} -\sqrt{\frac{\nu^2}{\varepsilon^2}-\zeta^2} \geq \frac{\zeta^2 \varepsilon}{2\nu}.
\end{equation*}
To see this, one can rewrite 
\begin{equation*}
    \frac{\nu}{\varepsilon} -\sqrt{\frac{\nu^2}{\varepsilon^2}-\zeta^2} = \frac{\nu}{\varepsilon}\left(1 -\sqrt{ 1 - \frac{\zeta^2\varepsilon^2}{\nu^2}  }\right)
\end{equation*}
and subsequently utilize the inequality
\begin{equation*}
\sqrt{1-t} \leq 1 - \frac{t}{2}.
\end{equation*}
Thus, we obtain the following linear lower bound for $\beta$:
\begin{equation}\label{c2lowerbound}
    \beta \geq \frac{\zeta^2 \varepsilon}{2\nu}.
\end{equation}
To estimate the right-hand side further, we derive a lower bound for $\zeta$. For that, we first write the orthogonal decomposition 
\begin{equation}\label{FrobNormofZ}
    1 = \frob{Z}^2 = \frob{\PT{X_0} Z}^2 + \frob{\PTperp{X_0} Z}^2.
\end{equation}
Returning to the rank-2 decomposition \eqref{blockmatrixdecomposition}, we note that since
\begin{equation*}
     \PT{X_0} Z = 
    \begin{pmatrix}
    h_0   & h_0^{\perp} 
\end{pmatrix}
    \begin{pmatrix}
- \beta & \gamma \\
 \eta & 0 
\end{pmatrix}
\begin{pmatrix}
m^*_0 \\
 {m_0^{\perp}}^*
\end{pmatrix},
\end{equation*}
it holds that
\begin{equation*}
    \frob{\PT{X_0} Z}^2 = \frob{\begin{pmatrix}
- \beta & \gamma \\
 \eta & 0 
\end{pmatrix}}^2 = \beta^2 + \gamma^2 + \eta^2.
\end{equation*}
For the second term in \eqref{FrobNormofZ}, we have (utilizing Lemma \ref{MandBeta})
\begin{equation*}
    \frob{\PTperp{X_0} Z} = \frob{M} \leq \nucl{M} \leq \beta.
    \end{equation*}
In total, we obtain the following chain of inequalities:
\begin{align*}
    \zeta^2 &= (|\gamma| + |\eta|)^2 \geq \gamma^2 + \eta^2 = \frob{Z}^2 - \beta^2 - \frob{M}^2 \\
    &= 1 - \beta^2 - \frob{M}^2 \geq 1 - 2\beta^2.
\end{align*}
Inserting this lower bound in \eqref{c2lowerbound}, we obtain the condition
\begin{equation*}
    \frac{\varepsilon}{\nu}\beta^2 + \beta -\frac{\varepsilon}{2\nu} \geq 0.
\end{equation*}
Solving this inequality in $\beta$, we obtain that
\begin{equation*}
    \beta \in \left(-\infty, -\frac{\nu}{2\varepsilon} - \frac{1}{2}\sqrt{2+\frac{\nu^2}{\varepsilon^2}} \right]  \cup \left[ -\frac{\nu}{2\varepsilon} + \frac{1}{2}\sqrt{2+\frac{\nu^2}{\varepsilon^2}} , \infty \right).
\end{equation*}
Since $\beta$ is nonnegative, this is equivalent to the condition 
\begin{equation}\label{concavefunction}
    \beta \geq -\frac{\nu}{2\varepsilon} + \frac{1}{2}\sqrt{2+\frac{\nu^2}{\varepsilon^2}}.
\end{equation}
We define $f(\varepsilon) := -\frac{\nu}{2\varepsilon} + \frac{1}{2}\sqrt{2+\frac{\nu^2}{\varepsilon^2}}$.
As can be verified via calculating the first and the second derivative, $f$ is an increasing concave function in $\varepsilon$ for $\varepsilon > 0$. We observe that $\lim_{\varepsilon \searrow 0} f(\varepsilon) = 0$ and that $f(2\nu) = \frac{1}{2}$. 
\\[5pt]
Whenever $\varepsilon \leq 2 \nu$, the function $f$, continuously augmented by $f(0) \coloneqq 0$, can be bounded from below by its secant $\frac{\varepsilon}{4\nu}$ going through the points $f(0) = 0$ and $f(2\nu) = 1/2$, since $f$ is concave. Whenever $\varepsilon > 2\nu$, the function $f$ can be bounded from below by the constant $1/2$, since $f$ is increasing. In total, we obtain the lower bound
\begin{equation}\label{betafinallowerbound}
    \beta \geq \min \left\{\frac{\varepsilon}{4\nu}, \frac{1}{2} \right\}.
\end{equation}

\section{Outlook}\label{outlook}
In this paper, we have analyzed robustness of blind deconvolution against adversarial noise and derived a noise-level-dependent reconstruction bound which is consistent with existing evidence for instability of blind deconvolution and matrix completion for sufficiently small noise levels \cite{krahmer2020}. We believe that our approach opens a series of interesting questions for related low-rank matrix recovery problems: 
\begin{enumerate}
    \item \textbf{Noise-dependent stability of matrix completion:} The proof of Theorem \ref{mainresult} works analogously for rank-1 matrix completion, a related low-rank matrix recovery problem in which one wants to reconstruct a rank-$r$ matrix $X_0 \in \C^{n_1 \times n_2}$ from $m$ entries sampled randomly with replacement \cite{CandesRechtExact2009,CandesTaoPower,RechtSimpler,gross11}. However, matrix completion is mostly interesting in case that $X_0$ is a general rank-$r$ matrix. \cite{krahmer2020} showed that matrix completion, similarly to blind deconvolution, can be unstable against adversarial noise if the noise level is sufficiently small. Thus, it remains an interesting open question whether recovery guarantees similar to Proposition \ref{mainresult} hold for rank-$r$ matrix completion and how the required sampling complexity as well as the error bound depend on the rank $r$ of the ground truth.
    \item \textbf{Blind demixing:} An important extension of the blind deconvolution problem is blind demixing \cite{strohmer2017demixing, jung2017optimal}, where one's goal is to reconstruct the pairs of signals $\left( (w_i,x_i) \right)_{i=1}^{r}$ from a (noisy) sum of their convolutions $\sum_{i=1}^{r} w_i \circledast x_i + e$. Similarly to blind deconvolution, this problem can be reformulated to yield a low-rank matrix recovery problem, albeit now of rank $r$ block matrix. Existing results \cite{jung2017optimal,strohmer2017demixing} yield similar reconstruction bounds with noise level amplification by a factor $ \sqrt{r \max \{ K;N\}}$. It poses the question, whether, firstly, similar instability behaviour occurs for blind demixing with deterministic noise, and, secondly, whether guarantees of form \eqref{mainresultequation} hold. 
    \item \textbf{Extension to random noise:} A number of related works, e.g., \cite{Chen2020nonconvex,chenBDminimaxoptimal2021}, have explored robust recovery for blind deconvolution and matrix completion in the case that the measurements are corrupted by random instead of adversarial noise, most common settings being Gaussian (i.e., $e_i \sim \mathcal{N}(0, \sigma^2)$) and sub-Gaussian (i.e., $\spec{e_i}_{\psi_2} \leq \sigma$) noise. 
    This noise setting is particularly convenient when dealing with nonconvex methods and allows to also establish near-optimal recovery guarantees for the convex heuristic by first showing that the convex and nonconvex solutions are mostly close \cite{Chen2020nonconvex}. One of the main drawbacks of this approach is the suboptimal dependence of the sampling complexity on the rank $r$. For (sub)-Gaussian noise, the instability results of \cite{krahmer2020} do not apply anymore. However, it would be interesting to know whether also in the case of random noise, robustness of the recovery is impacted by the noise magnitude and whether our analysis could help improve the $r$-dependence, since there is no more reliance on nonconvex algorithms. Furthermore, other noise settings such as, e.g., Poisson noise \cite{BDpoissonnoise2020}, could be investigated.  
    \item \textbf{Extension to nonconvex methods:} Nonconvex methods based on matrix factorization are often preferred for solving low-rank matrix recovery problems due to their computational efficiency. Several papers have established robustness guarantees for non-convex algorithms in the random noise setting \cite{LiStrohmerRapidDBNonconvex2016, HuangHandBDRiemannian2018, MaImplicitRegularization2017, chenBDminimaxoptimal2021}.  
    \\[5pt]
    Chen et al. \cite{chenBDminimaxoptimal2021} establish closeness of solutions of the convex and nonconvex formulations for blind deconvolution in the random noise scenario, and can thus transfer robustness guarantees for nonconvex algorithms to nuclear norm minimization.
 For establishing robustness against adversarial noise, it is interesting to ask whether one could go in the opposite direction and transfer our results for the convex problem to nonconvex algorithms based on matrix factorization.
    \\[5pt]
    Additionally, there is a line of work which studies low-rank matrix recovery via Iteratively Reweighted Least Squares (IRLS) algorithms \cite{fornasierIRLS2011,fazelIRLS2012,kuemmerleIRLS2018}. Their theoretical analysis is often based on the null space property, which is connected to the geometric relationship between the kernel of the measurement operator and the descent cone of the ground truth. However, this property does not hold for more structured measurements such as in the blind deconvolution setting. Thus, existing theoretical guarantees for IRLS dox not apply to blind deconvolution and similar problems. It is interesting to ask whether the geometric insights in this paper could be utilized to analyse IRLS in such settings.
\end{enumerate}
\section*{Acknowledgements}

F.K. and J.K. acknowledge support by the German Ministry of Education and Research (BMBF) in the context of the Munich Center for Machine Learning (MCML) and by the German Science Foundation (DFG) in the context of the project \emph{Solving linear inverse problems with end-to-end neural networks: expressivity, generalization, and robustness} project number 464123524 as part of the  Priority Program 2298.\\
F.K. and D.S. acknowledge support by the German Science Foundation (DFG) in the context of the project \emph{Bilinear Compressed Sensing - Efficiency, Structure, and Robustness} project number 273529854 as a part of the Priority Program 1798.

\printbibliography

\appendix

\section{Construction of the Dual Certificate via the Golfing Scheme}
As we have mentioned in Section \ref{background:DCandtheGolfingScheme}, the exact dual certificate for $X_0$ exists if $X_0$ is a minimizer of the noiseless problem \eqref{noiselessSDP}, which has been established with high probability in \cite{ahmedBD,jung2017optimal,strohmer2017demixing}.
However, the explicit construction of an exact dual certificate (done, e.g., in \cite{CandesRechtExact2009} for matrix completion) is rather tedious, and thus, unique and/or robust recovery for blind deconvolution has been typically shown using so-called \emph{approximate dual certificates} instead \cite{gross11,RechtSimpler}. In the following sections, we will outline the explicit construction of an approximate dual certificate by the means of the Golfing scheme and then show how an \emph{exact} dual certificate can be explicitly constructed from an approximate dual certificate \cite{Fuchs2021}. Furthermore, we will show that both the approximate and the exact dual certificates are bounded in $\ell_2$-norm, which is crucial for our proof of Lemma \ref{lowerboundforAZ}.

\subsection{Approximate Dual Certificate}\label{appendix:dualcertificate}
We first introduce the notion of an approximate dual certificate, i.e., an object which satisfies the properties in Proposition \ref{exactDCexistence} up to a small error:
\begin{definition}[\cite{gross11,RechtSimpler}]\label{approximatedualcertificate}
        Let $z \in \C^L$. Let $X_0 =  h_0 \nu  m_0^*$ be the singular value decomposition of the rank-1 matrix $X_0$. $Y \coloneqq \As(z)$ is called an \emph{approximate dual certificate} if it satisfies
    \begin{align}
        \frob{\PT{X_0}Y - h_0 m_0^*} \, &\leq \frac{1}{8 \spec{\A}},\label{property1} \\
        \spec{\PTperp{X_0} Y} \, &< \frac{1}{2}.\label{property2}
    \end{align}
\end{definition}
For the operator norm of $\A$, the following upper bound holds with high probability.
\begin{lemma}[\cite{ahmedBD,jung2017optimal}]\label{lemmaNormofA}
Let $\omega \geq 1$. Then with probability at least $1-2 L^{-\omega}$
\begin{equation}\label{Aoperatornormbound}
    \spec{\A} \, \leq 2 \sqrt{ \omega \max \left\{1;\frac{\mu_{\max} K N}{L}\right\} \log(L + KN)} =: \tilde{\zeta}.
\end{equation}
\end{lemma}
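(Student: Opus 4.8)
Since \eqref{Aoperatornormbound} is due to \cite{ahmedBD,jung2017optimal}, the appendix only needs to reproduce their argument; the natural route is to control $\spec{\A}^2 = \spec{\AsA}$ by showing that the self-adjoint positive semidefinite operator $\AsA$ on $\C^{K\times N}$ concentrates around its mean. First I would record the explicit form
\[
\AsA(X) \;=\; \sum_{\ell=1}^L \scal{b_\ell c_\ell^*, X}_F\, b_\ell c_\ell^* \;=\; \sum_{\ell=1}^L b_\ell b_\ell^* X c_\ell c_\ell^* \;=:\; \sum_{\ell=1}^L \mathcal{S}_\ell(X),
\]
so that $\AsA = \sum_{\ell=1}^L \mathcal{S}_\ell$ is a sum of $L$ independent positive semidefinite operators; the only randomness sits in the rows $c_\ell$, which — after passing to $\sqrt{L}FC$ and using that $F$ is unitary — are i.i.d.\ standard complex Gaussian vectors in $\C^N$. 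A short Wick-type computation gives $\E{\mathcal{S}_\ell} = (X \mapsto b_\ell b_\ell^* X)$ because $\E{c_\ell c_\ell^*} = \mathrm{Id}_N$, and hence, using $\sum_{\ell=1}^L b_\ell b_\ell^* = \mathrm{Id}_K$ (a consequence of $B^*B = \mathrm{Id}_K$ and unitarity of $F$), the operator $\E{\AsA}$ equals the identity on $\C^{K\times N}$. This identifies the target value $1$ and reduces the lemma to a deviation estimate $\spec{\AsA - \mathrm{Id}} \lesssim \omega\max\{1,\mu_{\max}^2 KN/L\}\log(L+KN)$, after which $\spec{\A}^2 \le 1 + \spec{\AsA - \mathrm{Id}}$ gives \eqref{Aoperatornormbound} up to constants, the precise exponent of $\mu_{\max}$, and logarithmic factors.

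The deviation estimate I would obtain from the operator (matrix) Bernstein inequality applied to $\sum_\ell\bigl(\mathcal{S}_\ell - \E{\mathcal{S}_\ell}\bigr)$ on the $KN$-dimensional space $\C^{K\times N}$. Its two inputs are, first, the per-summand bound $\spec{\mathcal{S}_\ell} = \twonorm{b_\ell}^2\twonorm{c_\ell}^2 \le (\mu_{\max}^2 K/L)\,\twonorm{c_\ell}^2$, and second, the matrix variance proxy: using $\mathcal{S}_\ell^2(X) = \twonorm{b_\ell}^2\twonorm{c_\ell}^2\, b_\ell b_\ell^* X c_\ell c_\ell^*$ together with $\E{\twonorm{c_\ell}^2 c_\ell c_\ell^*} = (N+1)\mathrm{Id}_N$ one gets
\[
\Big\| \sum_{\ell=1}^L \E{\mathcal{S}_\ell^2} \Big\| \;=\; (N+1)\,\Big\| \sum_{\ell=1}^L \twonorm{b_\ell}^2\, b_\ell b_\ell^* \Big\|_{\mathrm{op}} \;\le\; (N+1)\max_{\ell}\twonorm{b_\ell}^2 \;\lesssim\; \frac{\mu_{\max}^2 KN}{L},
\]
where the last step uses $\max_\ell\twonorm{b_\ell}^2 = \mu_{\max}^2 K/L$ and $\sum_\ell b_\ell b_\ell^* = \mathrm{Id}_K$. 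Feeding these into matrix Bernstein with failure probability $L^{-\omega}$ (so that the relevant logarithm is $\log(KN) + \omega\log L \lesssim \omega\log(L+KN)$) yields a bound of the shape $\sqrt{\omega(\mu_{\max}^2 KN/L)\log(L+KN)}$ plus a lower-order term coming from the per-summand bound — i.e.\ exactly the $\sqrt{\omega\max\{1,\mu_{\max}^2 KN/L\}\log(L+KN)}$ behaviour, which one can align with \eqref{Aoperatornormbound} by tracking the chaos estimates of \cite{ahmedBD,jung2017optimal} more carefully.

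The one genuine technical point is that $\spec{\mathcal{S}_\ell} = \twonorm{b_\ell}^2\twonorm{c_\ell}^2$ is unbounded, so matrix Bernstein in its vanilla form does not apply directly. I would resolve this by a standard truncation: each $\twonorm{c_\ell}^2$ is a $\mathrm{Gamma}(N,1)$ variable, so a union bound over $\ell \in [L]$ gives $\max_\ell\twonorm{c_\ell}^2 \lesssim N + \omega\log L$ with probability at least $1 - \tfrac12 L^{-\omega}$; on this event $\AsA$ coincides with its truncated version $\sum_\ell \mathcal{S}_\ell\,\mathbf{1}\{\twonorm{c_\ell}^2 \le C(N+\omega\log L)\}$, whose summands are now bounded by $(\mu_{\max}^2 K/L)\,C(N+\omega\log L)$ and whose mean differs from $\mathrm{Id}$ only by an operator of exponentially small norm. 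Applying matrix Bernstein to the truncated sum and intersecting with the truncation event yields \eqref{Aoperatornormbound}. (Alternatively, one can invoke a version of the matrix Bernstein inequality tailored to sub-exponential summands and skip the explicit truncation.) Beyond this, everything is a routine second-moment computation for complex Gaussians, so the main work is bookkeeping rather than a conceptual obstacle.
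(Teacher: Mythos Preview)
The paper does not actually prove Lemma~\ref{lemmaNormofA}; it is merely quoted from \cite{ahmedBD,jung2017optimal} and used as a black box in the construction of the dual certificate. Your sketch---expressing $\AsA$ as a sum of independent rank-one PSD operators, computing $\E{\AsA}=\mathrm{Id}$, bounding the matrix variance via $\E{\twonorm{c_\ell}^2 c_\ell c_\ell^*}=(N+1)\mathrm{Id}_N$, truncating $\twonorm{c_\ell}^2$ to handle unboundedness, and applying the matrix Bernstein inequality---is correct and is precisely the argument used in the cited references (in particular \cite{ahmedBD}, Lemma~1 and its proof). The only cosmetic point is that your computation naturally produces $\mu_{\max}^2 KN/L$ rather than the $\mu_{\max} KN/L$ appearing in \eqref{Aoperatornormbound}, which you already flagged; this discrepancy is present in the paper's statement as written and does not affect anything downstream, since the lemma is only used through $\log\spec{\A}\lesssim\log(\omega L)$.
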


\subsection{The Golfing Scheme}\label{appendix:golfingscheme}
In this section, we will outline the Golfing scheme \cite{gross11,ahmedBD}.
Our presentation will be based on \cite{strohmer2017demixing,jung2017optimal}, which analyze the Golfing Scheme for the more general scenario of blind demixing.
\\[5pt]
\textbf{1. Existence of an admissible partition.} 
The first step in the Golfing scheme is to find a partition $\{\Gamma_p\}_{p=1}^{P}$ of the set of the measurements $[L]$ into $P$ smaller sets and to construct the associated projected operators $\A_p, \, p=1,...,P$, where we define $\A_p \coloneqq \mathcal{P}_{\Gamma_p} \A$. We will denote by $Q = \frac{L}{P}$ the approximate number of measurement in each partition. One of the requirements for the success of the Golfing scheme is that the random operators $\frac{L}{Q} \A_p^* \A_p$ act proportionate to an (approximate) identity in expectation, which translates into the requirement
\begin{equation*}
    T_p \coloneqq \frac{L}{Q} \sum_{k \in \Gamma_p} b_k b_k^* \approx Id_{K},
\end{equation*}
where $Id_{K}$ denotes the identity operator on $\C^K$.  
In general, one can only assure that $\max_{p \in [P]} \spec{Id_K - T_p} \leq \alpha$, for $\alpha > 0$. 
Together with further requirements on the number $P$ and size $Q$ of the individual subsets $\Gamma_p$, this results in the following notion of an $\omega$-admissible partition.
\begin{definition}[\cite{jung2017optimal}]\label{admissiblepartition}
  Let $\omega \geq 1$ and $\{\Gamma_p\}_{p=1}^P$ such that $[L] = \bigcup_{p=1}^P \Gamma_p$. Then, $\{\Gamma_p\}_{p=1}^P$ is called an $\omega$-\emph{admissible partition} if the following conditions hold:
  \begin{enumerate}
      \item $\frac{Q}{2} \leq |\Gamma_p| \leq \frac{3Q}{2}$ for all $p \in [P]$;
      \item For $1 \leq p \leq P$, $\spec{Id_K - T_p}\, \leq \alpha$ for some $\alpha \leq \frac{1}{32}$;
      \item $\log(8 \tilde{\zeta}) \geq P \geq \frac{1}{2} \log{8 \tilde{\zeta}}$,
  \end{enumerate}
  where $\tilde{\zeta} = 2 \sqrt{ \omega \max \left\{1;\frac{\mu_{\max} K N}{L}\right\} \log(L + KN)}$.
\end{definition}
\cite{jung2017optimal} and \cite{strohmer2017demixing} show the existence of such a partition for the more general setting of blind demixing. Here, we present their result applied to the blind deconvolution scenario.
\begin{lemma}[\cite{jung2017optimal,strohmer2017demixing}]
   Let $P \in [L]$ and $\alpha \in (0,1)$. If the number of measurements satisfies 
   \begin{equation}
       L \gtrsim \frac{1}{\alpha^2} \log(8 \tilde{\zeta}) \mu_{\max} K \log(\max\{P;K\}),
   \end{equation}
then there exists a partition  $\{\Gamma_p\}_{p=1}^P$ of $L$ which satisfies conditions 1. and 2. in Definition \ref{admissiblepartition}. 
\end{lemma}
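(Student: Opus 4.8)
The plan is to produce the partition by randomization: draw an assignment $\pi\colon[L]\to[P]$ with the values $\pi(\ell)$ i.i.d.\ uniform on $[P]$, and set $\Gamma_p:=\pi^{-1}(p)$. This is automatically a partition of $[L]$, so it only remains to show that with probability bounded away from zero it satisfies Conditions~1 and~2 of Definition~\ref{admissiblepartition}. The structural input is that $B$ is an isometry and $F$ is unitary, which forces $\sum_{\ell=1}^{L}b_\ell b_\ell^{*}=(\overline{FB})^{*}(\overline{FB})=Id_K$. Since each index lands in $\Gamma_p$ with probability $1/P=Q/L$, this gives $\E{T_p}=\frac{L}{Q}\sum_{\ell}\tfrac1P\,b_\ell b_\ell^{*}=Id_K$, so Condition~2 asks exactly that a sum of independent random matrices concentrates around its mean, while Condition~1 is a binomial tail estimate.

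For Condition~2 I would fix $p$ and write
\[
T_p-Id_K=\sum_{\ell=1}^{L}X_\ell,\qquad X_\ell:=\big(P\,\mathbbm{1}[\ell\in\Gamma_p]-1\big)\,b_\ell b_\ell^{*},
\]
a sum of independent, self-adjoint, mean-zero matrices. From the definition of $\mu_{\max}$ we have $\twonorm{b_\ell}^{2}\le\mu_{\max}^{2}K/L$, so $\spec{X_\ell}\le P\,\twonorm{b_\ell}^{2}\le\mu_{\max}^{2}K/Q$ and, using $(b_\ell b_\ell^{*})^{2}=\twonorm{b_\ell}^{2}\,b_\ell b_\ell^{*}$ together with $\sum_\ell b_\ell b_\ell^{*}=Id_K$, the matrix variance satisfies $\spec{\sum_\ell\E{X_\ell^{2}}}\le\mu_{\max}^{2}K/Q$. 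The matrix Bernstein inequality — equivalently, since $\{P\,\mathbbm{1}[\ell\in\Gamma_p]\,b_\ell b_\ell^{*}\}_\ell$ is a family of independent positive semidefinite matrices whose expectations sum to $Id_K$, Tropp's matrix Chernoff bound — then gives, for every $\alpha\in(0,1)$,
\[
\mathbb{P}\big[\spec{T_p-Id_K}\ge\alpha\big]\le 2K\exp\!\Big(-c\,\tfrac{\alpha^{2}Q}{\mu_{\max}^{2}K}\Big)=2K\exp\!\Big(-c\,\tfrac{\alpha^{2}L}{P\,\mu_{\max}^{2}K}\Big)
\]
for an absolute constant $c>0$. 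A union bound over $p\in[P]$ shows that Condition~2 (with the given threshold $\alpha$) fails with probability at most $2PK\exp(-c\alpha^{2}L/(P\mu_{\max}^{2}K))$, and this is, say, $<\tfrac12$ once $L\gtrsim\alpha^{-2}P\,\mu_{\max}^{2}K\log(PK)$; in the regime of interest one takes $P\asymp\log(8\tilde{\zeta})$ and $\log(PK)\lesssim\log(\max\{P,K\})$, so this is exactly the stated lower bound on $L$.

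Condition~1 I would handle directly: $|\Gamma_p|\sim\mathrm{Binomial}(L,1/P)$ has mean $Q=L/P$, so a scalar Chernoff bound gives $\mathbb{P}[\,|\Gamma_p|\notin[Q/2,3Q/2]\,]\le 2e^{-cQ}$, and a union bound over $p$ makes Condition~1 fail with probability $<\tfrac12$ as soon as $Q\gtrsim\log P$, which is again implied by the assumed lower bound on $L$ (recall $\mu_{\max},K\ge1$). Intersecting the two complementary events, the random partition satisfies both conditions with positive probability, and hence such a partition exists; this is precisely the single-component specialization of the corresponding estimates for blind demixing in \cite{jung2017optimal,strohmer2017demixing}. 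There is no real obstacle here: the argument is a routine application of matrix concentration, and the only point that needs care is the constant bookkeeping — matching the per-group tail bound for $\spec{T_p-Id_K}$ against the union bound over the $P$ groups, and converting the exponent $\alpha^{2}Q/(\mu_{\max}^{2}K)$, via $Q=L/P$ and $P\asymp\log(8\tilde{\zeta})$, into the claimed dependence on $L$, $\alpha$, $\mu_{\max}$, $K$ and $\log(\max\{P,K\})$.
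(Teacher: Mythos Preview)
The paper does not prove this lemma; it merely quotes it from the blind-demixing papers \cite{jung2017optimal,strohmer2017demixing}, so there is no ``paper's own proof'' to compare against. Your randomized-partition argument --- i.i.d.\ uniform assignment $\pi:[L]\to[P]$, matrix Bernstein/Chernoff for each $T_p-Id_K$ using $\sum_\ell b_\ell b_\ell^{*}=Id_K$ and $\twonorm{b_\ell}^2\le\mu_{\max}^2 K/L$, scalar Chernoff for $|\Gamma_p|$, then a union bound over $p$ --- is precisely the standard route taken in those references, and your variance and uniform bounds are computed correctly.

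One bookkeeping remark: what your argument actually produces is the requirement $L\gtrsim\alpha^{-2}\,P\,\mu_{\max}^{2}K\log(\max\{P,K\})$, with $P$ rather than $\log(8\tilde\zeta)$ in front and $\mu_{\max}^{2}$ rather than $\mu_{\max}$. The replacement $P\mapsto\log(8\tilde\zeta)$ is only legitimate once Condition~3 of Definition~\ref{admissiblepartition} is in force, which the lemma as stated does \emph{not} assume (it allows arbitrary $P\in[L]$ and only asserts Conditions~1 and~2). So either treat the appearance of $\log(8\tilde\zeta)$ in the hypothesis as a placeholder for $P$, or add the standing assumption $P\le\log(8\tilde\zeta)$; the paper itself is not consistent about $\mu_{\max}$ versus $\mu_{\max}^{2}$ in neighboring statements, so the second discrepancy looks typographical.
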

In the following, we will choose an $\omega$-admissible partition which minimizes the coherence of the input $h_0$ with respect to the rows $b_{\ell}$ distorted by a set of linear maps related to the partition $\{\Gamma_p\}_{p=1}^{P}$. This partition will define the \emph{minimal coherence parameter} by
\begin{equation}\label{muhomegadefinition}
    \mu_{h_0, \omega}^2 \coloneqq L \min_{\{\Gamma_p\}_{p=1}^{P} \omega-\text{adm.}} \max\left\{ \max_{\ell \in [L]} |b^*_{\ell} h_0|^2, \: \max_{l \in [L], p \in [P]} |b_{\ell}^* S_p h_0|^2 \right\},
\end{equation}
where we have defined $S_p \coloneqq T_p^{-1}$ (note that by Definition \ref{admissiblepartition}, $T_p$ is invertible).
\\[5pt]
\textbf{2. The Golfing Scheme.}
We will now outline the random process by the means of which an approximate dual certificate $Y \in \text{Range}(\As)$ satisfying Definition \ref{approximatedualcertificate} is constructed in \cite{jung2017optimal}. We set
\begin{align*}
    Y_0 &= 0; \\
    Y_p &= Y_{p-1} + \frac{L}{Q} \A_p^* \A_p S_p (h_0 m_0^* - \PT{} Y_{p-1}) \text{ for } p \in [P],
\end{align*}
where $S_p$ is used as a corrector function to ensure that $\mathbb{E}\left[ \frac{L}{Q} \A_p^* \A_p S_p X \right] = X$ for all matrices $ X \in \C^{K \times N} $, since 
\begin{equation*}
    \mathbb{E}\left[ \frac{L}{Q} \A_p^* \A_p (X) \right] = T_p X.
\end{equation*}
Using the short notation $W_p \coloneqq h_0 m_0^* - \PT{} Y_p$, we define
\begin{equation}\label{golfingschemedefinition}
    Y \coloneqq Y_P = \sum_{p=1}^P \frac{L}{Q} \A_p^* \A_p S_p (W_{p-1}).
\end{equation}
Next, we check that $Y$ is indeed in the range of $\As$. For this, define
\begin{equation}\label{zdefinition}
    z = \sum_{p=1}^P \frac{L}{Q} \A_p S_p (W_{p-1}).
    \end{equation}
We recall that since $\Gamma_p$ are disjoint subsets and $\A_p$ sets all components not belonging to $\Gamma_p$ to zero, it actually holds that
\begin{equation}\label{ApisA}
    \A_p^* \A_p S_p (W_{p-1}) = \A^* \A_p S_p (W_{p-1})
\end{equation}
and thus $Y = \As(z)$.
\\[5pt]
\textbf{3. The $\delta$-restricted isometry property.}
To establish that $Y$ indeed satisfies Definition \ref{approximatedualcertificate}, one first needs to show that $\A$ acts as an approximate isometry on the tangent space $\tangent{X_0}$. This so-called \emph{restricted isometry property (RIP)} has been long utilized in the field of sparse recovery to prove exact or stable signal reconstruction \cite{CandesTaoDecoding2005, CandesTaoRIP2006}. 
\begin{definition}[$\delta$-restricted isometry property]\label{RIP-definition}
   Let $\delta > 0$. A linear operator $\A: \C^{K \times N} \to \C^L$ is said to satisfy the $\delta$-restricted isometry property (RIP) on a subset $\mathcal{M} \subset \C^{K \times N}$ if for all $Z \in \mathcal{M}$
   \begin{equation*}
       (1-\delta)\frob{Z}^2 \leq \twonorm{\A(Z)}^2 \leq (1+\delta) \frob{Z}^2.
   \end{equation*}
\end{definition}
Whereas Gaussian measurement operators generally satisfy the RIP on the whole domain with high probability, for more structured measurements, like in the case of blind deconvolution or matrix completion, the restricted isometry property can only be established on small subspaces.
\\[5pt]
The following proposition states that the measurement operator $\A$ satisfies the $\delta$-RIP on $\tangent{X_0}=\tangent{}$ with high probability provided that the number of observations $L$ scales at least inverse quadratically in $\delta$.
Moreover, under the same conditions, the partial operators $\A_p = \mathcal{P}_{\Gamma_p} \A$ satisfy the $\delta$-RIP on slightly larger subspaces $\tangent{}^p \coloneqq \tangent{} + S_p(\tangent{\, })$.
\begin{proposition}[\cite{jung2017optimal}]\label{RIPdef}
    Let $\delta > 0$ and fix $\omega \geq 1$. Let $\A: \C^{K \times N} \to \C^L$ as in \eqref{LinearOperator}. Let 
    \begin{equation}\label{QforRIP}
       Q \geq C \omega \delta^{-2} ( K \mu_{\max} \log{L} \log^2(K \mu_{\max}) + N \mu_{h_0}^2).
    \end{equation}
    Then with probability at least $1 - \mathcal{O}(L^{-\omega})$, all $X \in \tangent{}$ satisfy
    \begin{equation}\label{RIP1}
        (1 - \delta) \frob{X}^2 \, \leq \, \twonorm{\A(X)}^2  \leq \, (1 + \delta) \frob{X}^2,
    \end{equation}
    and for all $p \in [P]$, all $Y \in \tangent{}^p$ satisfy
    \begin{equation}\label{RIP2}
        (1 - \delta) \frob{T^{1/2}_p Y}^2 \leq \frac{L}{Q} \twonorm{\A_p (Y)}^2 \leq (1+ \delta) \frob{T^{1/2}_p Y}^2.
    \end{equation}
\end{proposition}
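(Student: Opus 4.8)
The plan is to deduce both (\ref{RIP1}) and (\ref{RIP2}) from a single matrix concentration statement: in each case the quantity of interest is a sum of independent, centered, self-adjoint random operators on $\C^{K \times N}$, whose operator norm we bound by the operator Bernstein inequality for sums of independent Hermitian random matrices (see, e.g., \cite{gross11}). Writing $\A_\ell(X) \coloneqq \scal{b_\ell c_\ell^*, X}_F$ for the $\ell$-th scalar measurement, we have $\A^*\A = \sum_{\ell=1}^{L} \A_\ell^* \A_\ell$ with $\A_\ell^* \A_\ell(X) = (b_\ell^* X c_\ell)\, b_\ell c_\ell^*$. Since the rows $c_\ell$ of $\sqrt{L}\,FC$ obey $\E{c_\ell c_\ell^*} = \mathrm{Id}_N$ and $\sum_{\ell=1}^{L} b_\ell b_\ell^* = B^* B = \mathrm{Id}_K$, we get $\E{\A^*\A} = \mathrm{Id}$ on $\C^{K\times N}$, hence $\E{\PT{}\A^*\A\PT{}} = \PT{}$. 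Thus (\ref{RIP1}) amounts to showing that
\begin{equation*}
    \mathcal{E} \coloneqq \PT{}\A^*\A\PT{} - \PT{} = \sum_{\ell=1}^{L}\left(\PT{}\A_\ell^*\A_\ell\PT{} - \E{\PT{}\A_\ell^*\A_\ell\PT{}}\right)
\end{equation*}
satisfies $\spec{\mathcal{E}} \le \delta$, since then $\bigl|\,\twonorm{\A(X)}^2 - \frob{X}^2\,\bigr| = \bigl|\scal{\mathcal{E} X, X}_F\bigr| \le \delta\frob{X}^2$ for all $X \in \tangent{}$. Identity (\ref{RIP2}) is entirely analogous: $\E{\tfrac{L}{Q}\A_p^*\A_p} = T_p$ (acting by left multiplication), so $\E{\tfrac{L}{Q}\twonorm{\A_p(Y)}^2} = \frob{T_p^{1/2} Y}^2$, and it suffices to bound the operator norm of $\mathcal{P}_{\tangent{}^p}\bigl(\tfrac{L}{Q}\A_p^*\A_p - T_p\bigr)\mathcal{P}_{\tangent{}^p}$ on $\tangent{}^p = \tangent{} + S_p(\tangent{})$ by $\delta$; admissibility gives $\spec{\mathrm{Id}_K - T_p} \le \alpha \le \tfrac{1}{32}$, so $T_p$ and $S_p = T_p^{-1}$ are uniformly well-conditioned and the $T_p$-weighting together with the $S_p$-distortion of $\tangent{}^p$ only cost constants, using $\twonorm{b_\ell} \le \mu_{\max}\sqrt{K/L}$ to bound the extra inner products $\scal{b_\ell, S_p h_0}$.

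Before applying Bernstein one must truncate, since the summands contain the Gaussian quantities $\twonorm{c_\ell}$ and $|\scal{c_\ell, u}|$ for fixed unit vectors $u$ (namely $u = m_0$ and $u \propto S_p m_0$), which are unbounded. I would intersect with the event on which $\max_{\ell \in [L]} \twonorm{c_\ell}^2 \lesssim N + \omega\log L$ and $\max_{\ell \in [L]} |\scal{c_\ell, u}|^2 \lesssim \omega\log L$ for all the relevant $u$; this holds with probability $1 - \mathcal{O}(L^{-\omega})$ by standard Gaussian tail bounds and a union bound over the $\mathcal{O}(L\,\mathrm{polylog}\,L)$ events involved. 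One then replaces each $\A_\ell^*\A_\ell$ by its restriction to this event, checks that the bias thereby introduced has operator norm $\ll \delta$, and runs Bernstein on the truncated, now uniformly bounded, summands.

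The core of the argument is estimating the two Bernstein parameters, the uniform bound $R \ge \max_\ell \spec{\text{(truncated summand)}}$ and the variance proxy $\sigma^2 \ge \spec{\sum_\ell \E{\text{(truncated summand)}^2}}$, and this is exactly where the coherence parameters enter. Using $\PT{}(W) = h_0 h_0^* W + W m_0 m_0^* - h_0 h_0^* W m_0 m_0^*$, one bounds $\frob{\PT{}(b_\ell c_\ell^*)}$ by three pieces whose norms are controlled by $|b_\ell^* h_0|\,\twonorm{c_\ell} \le \tfrac{\mu_{h_0}}{\sqrt L}\sqrt{N + \omega\log L}$, by $\twonorm{b_\ell}\,|\scal{c_\ell, m_0}| \le \mu_{\max}\sqrt{\tfrac{K}{L}}\sqrt{\omega\log L}$, and by $|b_\ell^* h_0|\,|\scal{c_\ell, m_0}|$; combining this with the rank-one structure of $\PT{}\A_\ell^*\A_\ell\PT{}$ gives a uniform bound $R \lesssim \tfrac{1}{L}(K\mu_{\max} + N\mu_{h_0}^2)$, and summing the second moments (which collapse back onto $\PT{}$ up to constants because $\sum_\ell b_\ell b_\ell^* = \mathrm{Id}_K$) gives $\sigma^2 \lesssim \tfrac{1}{L}(K\mu_{\max} + N\mu_{h_0}^2)$, both up to logarithmic factors I am deliberately suppressing here. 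Operator Bernstein then yields $\spec{\mathcal{E}} \le \delta$ with probability $1 - \mathcal{O}(L^{-\omega})$ once $L \gtrsim \omega\delta^{-2}(K\mu_{\max}\log L\log^2(K\mu_{\max}) + N\mu_{h_0}^2)$; repeating the computation with $|\Gamma_p| \asymp Q$ in place of $L$, plus a union bound over the $P = \mathcal{O}(\log(8\tilde{\zeta}))$ pieces (costing only a logarithmic factor in the failure probability), gives (\ref{RIP2}) under exactly the stated condition (\ref{QforRIP}) on $Q$; and since $L \ge Q$, that same condition a fortiori gives (\ref{RIP1}).

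The main obstacle I anticipate is not the concentration step itself but the bookkeeping around the Gaussian rows: choosing the truncation level, controlling the induced bias, and summing the second moments simultaneously sharply enough that no spurious dimension factor appears and the near-optimal complexity (\ref{QforRIP}) — in particular the single power of $\mu_{h_0}^2$ on the $N$-term — is genuinely attained (getting the exact log-powers on the $K$-term may additionally require a chaining/symmetrization refinement of Bernstein over $\tangent{}$, as in the original analyses \cite{ahmedBD,jung2017optimal}). The restricted nature of the statement is essential throughout: the sandwiching by $\PT{}$ (resp.\ $\mathcal{P}_{\tangent{}^p}$) is precisely what keeps the per-summand operator norm small enough for the concentration bound to bite, since $\A^*\A$ is far from $\mathrm{Id}$ as a global operator.
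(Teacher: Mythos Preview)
The paper does not supply its own proof of this proposition; it is quoted from \cite{jung2017optimal} (and, in its original form, from \cite{ahmedBD}) and used as a black box in the construction of the dual certificate. Your sketch follows the standard route taken in those references --- reduce the RIP to a bound on $\spec{\PT{}\A^*\A\PT{}-\PT{}}$ (resp.\ the $T_p$-weighted analogue on $\tangent{}^p$), truncate the Gaussian rows $c_\ell$ to a high-probability event, and apply a matrix Bernstein/chaining argument to the resulting bounded summands --- and is correct in its overall structure.

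Two small points. First, for the partial-operator statement (\ref{RIP2}) on $\tangent{}^p = \tangent{}+S_p(\tangent{})$ you cannot control $|b_\ell^* S_p h_0|$ merely through $\twonorm{b_\ell}\le \mu_{\max}\sqrt{K/L}$ as you suggest; this is precisely where the refined coherence parameter $\mu_{h_0,\omega}$ (which by definition dominates $\max_{\ell,p}|b_\ell^* S_p h_0|$) enters in \cite{jung2017optimal}, and the constant-condition-number of $S_p$ alone does not give it. Second, your closing caveat is on target: a plain matrix Bernstein on the truncated summands yields the right $\delta^{-2}$ and the right coherence structure but slightly looser logarithmic factors; the exact form $K\mu_{\max}\log L\log^2(K\mu_{\max})$ in (\ref{QforRIP}) comes from the chaining refinement in \cite{jung2017optimal}. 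With those adjustments your plan matches the cited argument.
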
 \textbf{4. Dual certificate properties.} To establish that the constructed matrix $Y$ indeed satisfies Definition \ref{approximatedualcertificate} with high probability, \cite{jung2017optimal} first establish the following exponential decay behaviour: 
\begin{lemma}[\cite{jung2017optimal}]\label{lemmadecaybehaviour}
    Assume that for all $p \in [P]$, the partial operators $\A^p$ satisfy the $\delta$-RIP on $\tangent{}^p$ with $\delta = \frac{1}{32}$. Then, for all $p \in [P]$,
    \begin{equation*}
        \frob{W_p} \leq 4^{-p}.
    \end{equation*}
    Now, set the number of subsets in the partition to $P = \frac{1}{2} \log(8 \spec{\A})$. Then
    \begin{equation*}
        \frob{h_0 m_0^* - Y} \leq \frac{1}{8 \spec{\A}},
    \end{equation*}
i.e., $Y$ satisfies property \eqref{property1}. 
\end{lemma}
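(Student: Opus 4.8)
The plan is to establish the geometric decay $\frob{W_p}\le 4^{-p}$ by induction on $p$, where the inductive step is a per-iteration contraction by a factor strictly below $\tfrac14$ that is powered by the restricted isometry property \eqref{RIP2} for the partial operators $\A_p$. First I would record the recursion for $W_p = h_0 m_0^* - \PT{} Y_p$: substituting the Golfing update $Y_p = Y_{p-1} + \tfrac{L}{Q}\A_p^*\A_p S_p(W_{p-1})$ into this definition and using idempotence of $\PT{}$ gives
\begin{equation*}
    W_p = \left( \mathrm{Id} - \PT{}\,\tfrac{L}{Q}\A_p^*\A_p S_p \right) W_{p-1}.
\end{equation*}
Since $W_0 = h_0 m_0^* \in \tangent{}$ and the operator on the right maps $\tangent{}$ into $\tangent{}$, an immediate induction shows $W_p \in \tangent{}$ for all $p$; this membership is what lets us later test $W_p$ only against elements of $\tangent{}$.

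The heart of the proof is the bound $\frob{W_p}\le\tfrac14\frob{W_{p-1}}$. The key algebraic step is the corrector identity $W_{p-1} = \PT{} T_p S_p W_{p-1}$, which holds because $T_p S_p = \mathrm{Id}_K$ and $W_{p-1}\in\tangent{}$. Setting $U \coloneqq S_p W_{p-1}$, which lies in $\tangent{}^p = \tangent{} + S_p(\tangent{})$, this turns the recursion into $W_p = \PT{}\bigl( T_p U - \tfrac{L}{Q}\A_p^*\A_p U \bigr)$. As $W_p\in\tangent{}$, its Frobenius norm equals $\sup\{\,|\scal{W_p, Z}_F| : Z\in\tangent{},\ \frob{Z}\le 1\,\}$, and for such $Z$ — using $\PT{} Z = Z$, the adjoint relation for $\A_p$, and self-adjointness of $T_p = T_p^{1/2}T_p^{1/2}$ —
\begin{equation*}
    \scal{W_p, Z}_F = \scal{T_p^{1/2} U,\, T_p^{1/2} Z}_F - \tfrac{L}{Q}\scal{\A_p U,\, \A_p Z}.
\end{equation*}
Since both $U$ and $Z$ lie in $\tangent{}^p$, the polarized version of \eqref{RIP2} bounds the right-hand side by $2\delta\,\frob{T_p^{1/2} U}\,\frob{T_p^{1/2} Z}$, the factor $2$ being the complex-polarization constant. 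Using $\spec{\mathrm{Id}_K - T_p}\le\alpha$ we get $\frob{T_p^{1/2} Z}\le\sqrt{1+\alpha}$ and $\frob{T_p^{1/2} U} = \frob{T_p^{-1/2} W_{p-1}}\le(1-\alpha)^{-1/2}\frob{W_{p-1}}$, so $\frob{W_p}\le 2\delta\sqrt{(1+\alpha)/(1-\alpha)}\,\frob{W_{p-1}}$; for $\delta = \alpha = \tfrac1{32}$ this prefactor is strictly below $\tfrac14$.

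Iterating from $\frob{W_0} = \frob{h_0 m_0^*} = 1$ gives $\frob{W_p}\le 4^{-p}$. For the final claim, observe that $\PT{X_0} Y - h_0 m_0^* = \PT{} Y_P - h_0 m_0^* = -W_P$, so with $P = \tfrac12\log(8\spec{\A})$ chosen as in condition~3 of Definition~\ref{admissiblepartition} one obtains $\frob{\PT{X_0} Y - h_0 m_0^*} = \frob{W_P}\le 4^{-P}\le\tfrac1{8\spec{\A}}$, which is precisely property \eqref{property1}. I expect the main obstacle to be the contraction estimate, and within it the bookkeeping around the enlarged subspace $\tangent{}^p$: one has to verify that every matrix inserted into \eqref{RIP2} genuinely lies in $\tangent{}^p$ — this is exactly why the corrector $S_p$ cannot be omitted and why \eqref{RIP2} is stated on $\tangent{}^p$ rather than on $\tangent{}$ — and one must track the complex-polarization constant together with the conditioning factors coming from $\spec{\mathrm{Id}_K - T_p}\le\alpha$ accurately enough that their product stays below $\tfrac14$, which is what forces the admissibility choices $\delta = \alpha = \tfrac1{32}$.
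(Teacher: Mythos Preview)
The paper does not supply its own proof of this lemma; it simply cites \cite{jung2017optimal} and states the result. Your argument is a faithful reconstruction of the standard Golfing-scheme contraction proof and is correct: the recursion $W_p=\PT{}\bigl(T_pU-\tfrac{L}{Q}\A_p^*\A_p U\bigr)$ with $U=S_pW_{p-1}\in\tangent{}^p$, followed by the polarized form of \eqref{RIP2} tested against $Z\in\tangent{}\subset\tangent{}^p$, is exactly the mechanism used in the cited reference, and your bookkeeping of the membership in $\tangent{}^p$ and of the conditioning factors $\sqrt{(1+\alpha)/(1-\alpha)}$ is accurate.

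Two small remarks. First, the complex polarization constant $2$ is not actually needed: since $\frob{W_p}=\sup_{Z\in\tangent{},\,\frob{Z}\le 1}\mathrm{Re}\,\scal{W_p,Z}_F$, it suffices to bound the real part, which gives the sharper factor $\delta$; either way the product is comfortably below $\tfrac14$. Second, the final inequality $4^{-P}\le\tfrac1{8\spec{\A}}$ with $P=\tfrac12\log(8\spec{\A})$ only balances if $\log$ is taken base $2$, whereas the paper's notation section declares $\log$ to be natural. This is an inconsistency in the lemma statement inherited from the cited work, not a flaw in your argument; the actual contraction rate you derive is well below $e^{-2}$, so the conclusion holds once the intended base is used.
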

The following lemma establishes an upper bound for the spectral norm $\spec{\PTperp{} Y}$:
\begin{lemma}[\cite{jung2017optimal}]\label{PTperpYlemma}
    Under the assumptions of Lemma \ref{lemmadecaybehaviour} and if 
    \begin{equation*}
        Q \geq C \omega (K \mu_{\max} + N \mu^2_{h_0}) (\log L)^2,
    \end{equation*}
 with probability at least $1-\mathcal{O}(L^{-\omega})$ it holds that
    \begin{equation*}
        \spec{\PTperp{} Y_P} < \frac{1}{4},
    \end{equation*}
    where $C$ is a universal constant.
\end{lemma}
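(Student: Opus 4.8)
The plan is to bound the off-tangent part of each golfing increment separately and then sum a geometric series. From \eqref{golfingschemedefinition},
\[
\PTperp{} Y_P \;=\; \sum_{p=1}^{P} \PTperp{}\!\left( \tfrac{L}{Q}\, \A_p^* \A_p\, S_p(W_{p-1}) \right).
\]
Since $W_{p-1}\in\tangent{}$ we have $\PTperp{}W_{p-1}=0$, and since $T_pS_p=\mathrm{Id}_K$ we have $\mathbb{E}\bigl[\tfrac{L}{Q}\A_p^*\A_p(S_pW_{p-1})\bigr]=T_pS_pW_{p-1}=W_{p-1}$; hence each summand equals $\PTperp{}\bigl[(\tfrac{L}{Q}\A_p^*\A_p-T_p)(S_pW_{p-1})\bigr]$. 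As $X_0$ has rank one, $\PTperp{}M=(\mathrm{Id}_K-h_0h_0^*)\,M\,(\mathrm{Id}_N-m_0m_0^*)$ is a two-sided orthogonal projection and thus contracts the spectral norm, so by the triangle inequality
\[
\spec{\PTperp{}Y_P}\;\le\;\sum_{p=1}^{P}\spec{\bigl(\tfrac{L}{Q}\A_p^*\A_p-T_p\bigr)(S_pW_{p-1})}.
\]

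For a fixed matrix $X$, expand $(\tfrac{L}{Q}\A_p^*\A_p-T_p)(X)=\tfrac{L}{Q}\sum_{k\in\Gamma_p}Z_k$ with $Z_k:=(b_k^*Xc_k)\,b_kc_k^*-b_kb_k^*X$. Since the rows $c_k$ are i.i.d.\ standard complex Gaussian vectors (and the $b_k$ are deterministic), the $Z_k$ with $k$ in a fixed block $\Gamma_p$ are independent and, because $\mathbb{E}[c_kc_k^*]=\mathrm{Id}_N$, mean zero. I would condition on $\A_1,\dots,\A_{p-1}$ so that $X:=S_pW_{p-1}$ becomes a fixed matrix independent of the fresh randomness in $\A_p$, and then apply a matrix Bernstein inequality for sums of independent (sub-exponential, after truncating the Gaussian rows at scale $\sqrt{\omega\log L}$) random matrices. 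The resulting bound has the schematic form $\spec{(\tfrac{L}{Q}\A_p^*\A_p-T_p)(X)}\lesssim \sqrt{\tfrac{\omega\log L}{Q}}\,V(X)+\tfrac{\omega\log^2 L}{Q}\,R(X)$, where the variance proxy $V(X)$ and the uniform summand bound $R(X)$ are controlled via $\mu_{\max}$ (using $\twonorm{b_\ell}^2\le\mu_{\max}^2K/L$), $\frob{X}$, $\spec{X}$, and the mixed-norm coherence $\max_\ell\twonorm{b_\ell^*X}$; $\mu_{h_0}$ enters here through the coherence of $X=S_pW_{p-1}$.

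This forces a second golfing estimate running in parallel with Lemma \ref{lemmadecaybehaviour}: besides $\frob{W_p}\le 4^{-p}$ one must also establish geometric decay of the coherence quantities $\max_\ell\twonorm{b_\ell^*W_p}$ and $\max_\ell\twonorm{b_\ell^*S_{p+1}W_p}$ (suitably normalized). This is obtained by an analogous conditional concentration bound on $\max_\ell\twonorm{b_\ell^*(\tfrac{L}{Q}\A_p^*\A_p-T_p)S_pW_{p-1}}$, together with the fact that $\PT{}$ distorts the coherence of a matrix in $\mathrm{span}(h_0)\otimes\C^N+\C^K\otimes\mathrm{span}(m_0)$ only by a constant factor; this is exactly where the $N\mu_{h_0}^2$ term and the second power of $\log L$ in the hypothesis $Q\ge C\omega(K\mu_{\max}+N\mu_{h_0}^2)(\log L)^2$ are needed. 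Feeding these decay estimates back gives, for each $p$, a contribution $\lesssim \rho\,4^{-(p-1)}$ with $\rho\le c/\sqrt{\log L}$, which can be made arbitrarily small by enlarging $C$; summing $\sum_{p\ge1}4^{-(p-1)}=4/3$ and taking $C$ large enough yields $\spec{\PTperp{}Y_P}<1/4$, and a union bound over the $P=\tfrac12\log(8\spec{\A})=\mathcal{O}(\log L)$ increments keeps the failure probability at $\mathcal{O}(L^{-\omega})$.

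The main obstacle is not any individual matrix Bernstein application but the coherence bookkeeping: one must set up, alongside the Frobenius-norm golfing recursion, a coupled recursion for the incoherence parameters of $W_p$ and verify that they contract at the same geometric rate, since these quantities feed into the variance terms of the spectral-norm concentration. Pinning down the precise dependence on $K\mu_{\max}$, $N\mu_{h_0}^2$ and the logarithmic factors — rather than merely exhibiting \emph{some} decay — is the delicate part, and it is what determines the stated sample-complexity condition on $Q$.
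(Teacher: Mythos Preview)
The paper does not give its own proof of this lemma; it is quoted verbatim from \cite{jung2017optimal} (building on \cite{ahmedBD}) without argument, so there is nothing in the present paper to compare against. Your outline is essentially the argument in those references: split $\PTperp{}Y_P$ into the $P$ golfing increments, bound each increment $\spec{(\tfrac{L}{Q}\A_p^*\A_p-T_p)(S_pW_{p-1})}$ via a conditional matrix Bernstein inequality using the fresh randomness in $\A_p$, and control the variance and range parameters by maintaining, in parallel with the Frobenius decay $\frob{W_p}\le 4^{-p}$ of Lemma \ref{lemmadecaybehaviour}, a geometric decay of the row-coherence quantities $\sqrt{L}\max_\ell\twonorm{b_\ell^*W_p}$. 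Your diagnosis that this coupled coherence recursion is the technical heart of the proof, and the source of the $N\mu_{h_0}^2$ and $(\log L)^2$ factors in the condition on $Q$, is accurate.
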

\textbf{5. Norm upper bound for the dual certificate.} We can now show that $z$ defined in \eqref{zdefinition} is bounded with respect to the $\ell_2$-norm. The following lemma is based on Lemma 5.18 in \cite{jung2017optimal}:\footnote{We thank Dana Weitzner for pointing out an inaccuracy in the original proof in \cite{jung2017optimal}. The corrected proof introduces an additional logarithmic factor to the upper bound of $\twonorm{z}$.}
\begin{lemma}\label{lemmazbound}
    Let $z \in \C^L$ be given by \eqref{zdefinition}. Assume that the measurement operator $\A$ satisfies the operator norm bound \eqref{Aoperatornormbound} for some $\omega \geq 1$. Under the assumptions of Lemma \ref{lemmadecaybehaviour}, it holds that
\begin{equation*}
    \twonorm{z} \lesssim  \sqrt{\log(\omega L)}.
\end{equation*}
\end{lemma}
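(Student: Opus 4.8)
The plan is to estimate the summands in the definition \eqref{zdefinition} of $z$ one at a time, using the exponential decay $\frob{W_p} \le 4^{-p}$ from Lemma \ref{lemmadecaybehaviour} together with the restricted isometry property of the partial operators from Proposition \ref{RIPdef}, and then to add up. Write $z_p \coloneqq \frac{L}{Q}\A_p S_p(W_{p-1})$, so $z = \sum_{p=1}^P z_p$. Since the sets $\Gamma_p$ are pairwise disjoint and $\A_p = \mathcal{P}_{\Gamma_p}\A$, the vectors $z_p$ live on pairwise disjoint coordinate blocks, hence $\twonorm{z}^2 = \sum_{p=1}^P \twonorm{z_p}^2$; the triangle inequality would work just as well, only with a slightly larger constant.

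The core step is the bound on $\twonorm{z_p}$. First I would note that $W_{p-1} = h_0 m_0^* - \PT{} Y_{p-1} \in \tangent{}$, so $S_p(W_{p-1}) \in S_p(\tangent{}) \subseteq \tangent{}^p$. Applying the second RIP estimate \eqref{RIP2} (with $\delta = \tfrac{1}{32}$) to $S_p(W_{p-1}) \in \tangent{}^p$ gives $\frac{L}{Q}\twonorm{\A_p S_p(W_{p-1})}^2 \le (1+\delta)\frob{T_p^{1/2} S_p(W_{p-1})}^2$. Since $S_p = T_p^{-1}$ one has $T_p^{1/2} S_p = T_p^{-1/2}$, and the admissibility condition $\spec{Id_K - T_p} \le \alpha \le \tfrac{1}{32}$ from Definition \ref{admissiblepartition} gives $\spec{T_p^{-1/2}}^2 = \spec{T_p^{-1}} \le (1-\alpha)^{-1}$, so $\frob{T_p^{-1/2} W_{p-1}}^2 \le (1-\alpha)^{-1}\frob{W_{p-1}}^2$. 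Combining, $\frac{L}{Q}\twonorm{\A_p S_p(W_{p-1})}^2 \le \frac{1+\delta}{1-\alpha}\frob{W_{p-1}}^2$, and multiplying through by $\frac{L}{Q}$ yields $\twonorm{z_p}^2 = \frac{L^2}{Q^2}\twonorm{\A_p S_p(W_{p-1})}^2 \le \frac{L}{Q}\cdot\frac{1+\delta}{1-\alpha}\frob{W_{p-1}}^2$. The leftover factor $L/Q = P$ is precisely the extra logarithmic factor relative to Lemma 5.18 of \cite{jung2017optimal}, and it cannot be absorbed: the RIP forces $\twonorm{z_p}$ to be of order $\sqrt{P}\,\frob{W_{p-1}}$, not $\frob{W_{p-1}}$.

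Summing over $p$ and inserting $\frob{W_0} = \frob{h_0 m_0^*} = 1$ together with $\frob{W_p} \le 4^{-p}$, one gets $\sum_{p=1}^P \frob{W_{p-1}}^2 \le 1 + \sum_{j \ge 1} 16^{-j} = \tfrac{16}{15}$, hence $\twonorm{z}^2 \le \tfrac{16}{15}\cdot\frac{1+\delta}{1-\alpha}\cdot\frac{L}{Q} \lesssim \frac{L}{Q} = P$ (note $\frac{1+\delta}{1-\alpha} < 2$). It then remains to control $P$. From the choice $P = \tfrac{1}{2}\log(8\spec{\A})$ in Lemma \ref{lemmadecaybehaviour} and the operator norm bound \eqref{Aoperatornormbound}, $P \le \tfrac{1}{2}\log(8\tilde{\zeta})$ with $\tilde{\zeta} = 2\sqrt{\omega \max\{1, \mu_{\max} KN/L\}\log(L+KN)}$. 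Using $\mu_{\max}^2 \le L/K$ one has $\mu_{\max} KN/L \le N\sqrt{K/L} \le N$, and since $K \le L$ (because $B$ is an isometry) and $N \le L$ in the regime of interest, it follows that $\max\{1,\mu_{\max}KN/L\} \le L$ and $\log(L+KN) \lesssim \log L$; therefore $\tilde{\zeta}^2 \lesssim \omega L \log L$, so $P \lesssim \log(\omega L)$ and $\twonorm{z} \lesssim \sqrt{\log(\omega L)}$, as claimed.

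I expect the only genuine subtlety to be the passage through $\tangent{}^p$ and the weighting by $T_p^{1/2}$ in \eqref{RIP2}: one has to check that $S_p$ maps $\tangent{}$ into $\tangent{}^p$ and that $T_p^{1/2} S_p = T_p^{-1/2}$ is uniformly well-conditioned in $p$ — this is exactly where the admissibility of the partition is used. Everything else is elementary bookkeeping; the conceptual point worth flagging is the unavoidable factor $\sqrt{P}$ per summand, which is the corrected logarithmic factor mentioned in the footnote.
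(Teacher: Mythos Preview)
Your proof is correct and follows essentially the same route as the paper: bound each $\twonorm{\A_p S_p(W_{p-1})}$ via the partial RIP \eqref{RIP2} and the conditioning of $T_p$, insert the geometric decay $\frob{W_{p-1}}\le 4^{-(p-1)}$, and then bound $P \lesssim \log(\omega L)$ using the operator-norm estimate for $\A$. The only cosmetic differences are that you exploit the orthogonality of the $z_p$ (Pythagoras instead of the triangle inequality) and simplify $T_p^{1/2}S_p = T_p^{-1/2}$ directly rather than bounding the two factors separately; both lead to the same $\twonorm{z}\lesssim\sqrt{P}$.
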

\begin{proof}
Since $z$ is given by 
\begin{equation*}
    z = \sum_{p=1}^P \frac{L}{Q} \A_p S_p (W_{p-1}),
\end{equation*}
it holds that
\begin{equation}\label{ztriangleinequality}
    \twonorm{z} \leq \frac{L}{Q} \sum_{p=1}^P \twonorm{\A_p S_p (W_{p-1})} = P  \sum_{p=1}^P \twonorm{\A_p S_p (W_{p-1})}.
\end{equation}
Utilizing the isometry property \eqref{RIP2}, we observe that
\begin{align*}
    \twonorm{\A_p S_p (W_{p-1})} \lesssim \sqrt{\frac{Q}{L}} \frob{T^{1/2}_p S_p (W_{p-1})} \lesssim \sqrt{\frac{Q}{L}} \frob{S_p (W_{p-1})} \lesssim \sqrt{\frac{Q}{L}} \frob{W_{p-1}},
\end{align*}
where we have also used that $\spec{Id - T^{1/2}_p} \leq \frac{1}{32}$ and $\frob{S_p (W_{p-1})} \leq \frac{32}{31} \frob{W_{p-1}}$, see Lemma 5.13 in \cite{jung2017optimal}. Inserting this into \eqref{ztriangleinequality}, we obtain
\begin{align*}
    \twonorm{z} \lesssim \sqrt{\frac{L}{Q}} \sum_{p=1}^P \twonorm{W_{p-1}}. 
\end{align*}
Together with the decay property of $\frob{W_p}$ from Lemma \ref{lemmadecaybehaviour} this implies
\begin{equation*}
    \twonorm{z} \lesssim \sqrt{\frac{L}{Q}} \sum_{p=1}^P 4^{-p} \lesssim  \sqrt{\frac{L}{Q}} = \sqrt{P}.
\end{equation*}
Recall that our choice for the size of the partition was $P = \frac{1}{2} \log(8 \spec{\A})$ and according to Lemma \ref{lemmaNormofA}, with probability at least $1 - 2 L^{-\omega}$ it holds that
\begin{equation*}
    \spec{\A} \, \leq 2 \sqrt{ \omega \max \left\{1;\frac{\mu_{\max} K N}{L}\right\} \log(L + KN)}.
\end{equation*}
Therefore, with probability at least $1 - 2 L^{-\omega}$ 
\begin{align*}
    \twonorm{z} \lesssim \sqrt{ \log(\spec{\A})} \lesssim \sqrt{\log(\omega L \log L)} \lesssim \sqrt{\log(\omega L)}.
\end{align*}
\end{proof}

\subsection{From approximate to exact dual certificate}\label{appendix:puttingproposition}
The following proposition states that constructing an approximate dual certificate which satisfies Definition \ref{approximatedualcertificate} via the Golfing scheme automatically yields an exact dual certificate. Furthermore, we will see that if the original approximate dual certificate is bounded with respect to the $\ell_2$-norm, the same asymptotic bound also holds for the exact dual certificate derived from it. We present the result following \cite{Fuchs2021} and add a derivation of the norm bound. 

\begin{proposition}[\cite{Fuchs2021}]\label{puttingproposition}
    Assume that there exists an approximate dual certificate pair $(z, Y)$, where $Y = \As(z)$, satisfying properties \ref{property1} and \ref{property2}, and assume that $\A$ satisfies the $\delta$-restricted isometry property on $\tangent{X_0}$ for $\delta < 3/4$. Then there exists an exact dual certificate pair $(z',Y')$ with $Y' = 
    \A(z')$ satisfying properties \eqref{property1exact} and \eqref{property2exact}. Furthermore, if it holds that 
    \begin{equation*}
        \twonorm{z'} \leq \twonorm{z} + 1.
    \end{equation*}
\end{proposition}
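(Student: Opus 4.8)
The plan is to construct $Y'$ by adding to $Y$ a correction that lives in $\mathrm{Range}(\As)$, makes the tangential component exactly $h_0 m_0^*$, and is small enough in spectral norm that property \eqref{property2exact} survives. The $\delta$-RIP is precisely what makes the tangential correction available: regarded as a self-adjoint map $\tangent{X_0}\to\tangent{X_0}$, the operator $\mathcal{B}\coloneqq\PT{X_0}\AsA\PT{X_0}$ satisfies $\scal{\mathcal{B}W,W}_F=\twonorm{\A(W)}^2\in\big[(1-\delta)\frob{W}^2,(1+\delta)\frob{W}^2\big]$ for all $W\in\tangent{X_0}$, hence is invertible on $\tangent{X_0}$ with $\spec{\mathcal{B}^{-1}}\le(1-\delta)^{-1}$. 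Set $R\coloneqq h_0 m_0^*-\PT{X_0}Y\in\tangent{X_0}$; property \eqref{property1} gives $\frob{R}\le\tfrac{1}{8\spec{\A}}$. Then define $W_1\coloneqq\mathcal{B}^{-1}R\in\tangent{X_0}$, $z_1\coloneqq\A(W_1)\in\C^L$, and finally $z'\coloneqq z+z_1$, $Y'\coloneqq\As(z')=Y+\As(z_1)$.

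Next I would verify the two certificate properties. For \eqref{property1exact}: since $W_1\in\tangent{X_0}$ we have $\PT{X_0}\As(z_1)=\PT{X_0}\AsA\PT{X_0}W_1=\mathcal{B}W_1=R$, so $\PT{X_0}Y'=\PT{X_0}Y+R=h_0 m_0^*$. For \eqref{property2exact}, the triangle inequality together with property \eqref{property2} gives $\spec{\PTperp{X_0}Y'}\le\spec{\PTperp{X_0}Y}+\spec{\As(z_1)}<\tfrac12+\spec{\As(z_1)}$, so it remains to show $\spec{\As(z_1)}<\tfrac12$. The crucial step is to estimate this through $\twonorm{z_1}$ rather than through $\spec{\A}^2\frob{W_1}$:
\begin{align*}
\spec{\As(z_1)} &\le \frob{\As(z_1)} \le \spec{\A}\,\twonorm{z_1}, \\
\twonorm{z_1} &= \twonorm{\A(W_1)} \le \sqrt{1+\delta}\,\frob{W_1} \le \frac{\sqrt{1+\delta}}{1-\delta}\,\frob{R} \le \frac{\sqrt{1+\delta}}{8(1-\delta)\,\spec{\A}},
\end{align*}
so the factor $\spec{\A}$ cancels and $\spec{\As(z_1)}\le\frac{\sqrt{1+\delta}}{8(1-\delta)}$, which is below $\tfrac12$ in the admissible range of $\delta$; hence $\spec{\PTperp{X_0}Y'}<1$.

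The norm bound then follows from the same computation: $\twonorm{z'}\le\twonorm{z}+\twonorm{z_1}\le\twonorm{z}+\frac{\sqrt{1+\delta}}{8(1-\delta)\,\spec{\A}}$, and since applying the lower RIP bound on $\tangent{X_0}$ to $h_0 m_0^*$ (which has Frobenius norm $1$) yields $\spec{\A}\ge\sqrt{1-\delta}$, the correction has norm at most $\frac{\sqrt{1+\delta}}{8(1-\delta)^{3/2}}\le1$, so $\twonorm{z'}\le\twonorm{z}+1$.

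The main obstacle is the spectral-norm control of the correction $\As(z_1)$: one must route the estimate through $\twonorm{z_1}=\twonorm{\A(W_1)}$ so that the possibly large operator norm $\spec{\A}$ cancels against the $\tfrac{1}{8\spec{\A}}$ tolerance in property \eqref{property1} — a naive bound $\frob{\AsA W_1}\le\spec{\A}^2\frob{W_1}$ would be useless when $\spec{\A}$ is large. Once that is arranged, closing all numerical inequalities under the hypotheses $\delta<3/4$ and $\spec{\PTperp{X_0}Y}<\tfrac12$ is purely arithmetic; in the application of this proposition the Golfing scheme supplies $\delta=\tfrac{1}{32}$, for which all of the above estimates hold with a wide margin.
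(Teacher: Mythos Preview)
Your construction is exactly the paper's: the correction $z_1=\A\,\mathcal{B}^{-1}R$ with $\mathcal{B}=\PT{X_0}\AsA\PT{X_0}$ and $R=h_0m_0^*-\PT{X_0}Y$, and the verification of \eqref{property1exact} is identical. The one genuine gap is numerical. Your bound
\[
\twonorm{z_1}\le \sqrt{1+\delta}\,\frob{W_1}\le \frac{\sqrt{1+\delta}}{1-\delta}\,\frob{R}
\]
is looser than what the paper obtains, and with it your two arithmetic claims fail near the boundary $\delta\uparrow 3/4$: one has $\frac{\sqrt{1+\delta}}{8(1-\delta)}\to \frac{\sqrt{7}}{4}\approx 0.66>\tfrac12$ and $\frac{\sqrt{1+\delta}}{8(1-\delta)^{3/2}}\to \frac{\sqrt{7}}{2}\approx 1.32>1$. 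So as stated (for all $\delta<3/4$) neither property \eqref{property2exact} nor the bound $\twonorm{z'}\le\twonorm{z}+1$ actually closes with your constants; your remark that $\delta=\tfrac{1}{32}$ suffices in the application is true but does not prove the proposition.

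The fix, which is precisely what the paper does, is to estimate $\twonorm{z_1}$ in one step rather than two: since $W_1\in\tangent{X_0}$,
\[
\twonorm{z_1}^2=\twonorm{\A W_1}^2=\scal{\mathcal{B}W_1,W_1}_F=\scal{R,\mathcal{B}^{-1}R}_F\le \frac{1}{1-\delta}\,\frob{R}^2,
\]
so $\twonorm{z_1}\le \frac{1}{\sqrt{1-\delta}}\,\frob{R}\le \frac{1}{8\spec{\A}\sqrt{1-\delta}}$. This yields $\spec{\PTperp{X_0}Y'}\le \tfrac12+\frac{1}{8\sqrt{1-\delta}}\le \tfrac34<1$ and $\twonorm{z_1}\le \frac{1}{8(1-\delta)}\le \tfrac12$ for all $\delta<3/4$, recovering both conclusions (and in fact the stronger bound $\spec{\PTperp{X_0}Y'}\le 3/4$ used in Proposition \ref{exactDCexistence}).
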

\begin{proof}

    First, we observe that the $\delta$-restricted isometry property is equivalent to the fact that 
    
\begin{align*}
        \spec{\PT{X_0} \As \A \PT{X_0} - \PT{X_0}} &= \spec{ \PT{X_0} ( \As \A - Id ) \PT{X_0} } \\
        &= \sup_{Z \in \tangent{X_0}, \frob{Z}=1} \left| \scal{Z, \As \A (Z)} - 1 \right| \\
        &=  \sup_{Z \in \tangent{X_0}, \frob{Z}=1} \left| \frob{\A (Z)}^2 - 1 \right|  \leq \delta.
\end{align*}
Thus, $\PT{X_0} \As \A \PT{X_0}: \tangent{X_0} \to \tangent{X_0}$ is invertible (e.g., via the Neumann series) and satisfies
\begin{equation}\label{normofpaap}
    \spec{ (\PT{X_0} \As \A \PT{X_0})^{-1}} \leq \frac{1}{1-\delta}.
\end{equation}
We define
\begin{equation*}
    x = \A \PT{X_0} (\PT{X_0} \As \A \PT{X_0})^{-1} (h_0 m_0^*- \PT{X_0} \As(z))
\end{equation*}
and 
\begin{align*}
    z' &= z + x, \\
    Y' &= \As(z') = Y + \As(x).
\end{align*}
First, we observe that 
\begin{align*}
    \spec{ \A \PT{X_0} (\PT{X_0} \As \A \PT{X_0})^{-1}  }^2 = \sup_{X \in \C^{K \times N} } \twonorm{\A \PT{X_0} (\PT{X_0} \As \A \PT{X_0})^{-1} X}^2 \\
    = \sup_{X \in \C^{K \times N} } \scal{ \A \PT{X_0} (\PT{X_0} \As \A \PT{X_0})^{-1} X, \, \A \PT{X_0} (\PT{X_0} \As \A \PT{X_0})^{-1} X } \\
    = \sup_{X \in \C^{K \times N} } \scal{ X, (\PT{X_0} \As \A \PT{X_0})^{-1} X } \leq \frac{1}{1-\delta}.
\end{align*}
Thus, 
\begin{equation*}
     \spec{ \A \PT{X_0} (\PT{X_0} \As \A \PT{X_0})^{-1}  } \leq \frac{1}{\sqrt{1-\delta}},
\end{equation*}
and
\begin{equation*}
    \twonorm{x} \leq \frac{1}{\sqrt{1-\delta}} \frob{h_0 m_0^* - \PT{X_0} \As(z)} \leq \frac{1}{8 \spec{\A} \sqrt{1-\delta} }
\end{equation*}
using property \ref{property1} of the approximate dual certificate. Furthermore, we obtain
\begin{equation*}
    \PT{X_0} Y' = h_0 m_0^*,
\end{equation*}
since
\begin{align*}
    \PT{X_0} Y' &= \PT{X_0} Y + \PT{X_0}\As(x) \\
    &= \PT{X_0}\As(z)+\PT{X_0}\As\A\PT{X_0}(\PT{X_0}\As\A\PT{X_0})^{-1}(h_0 m_0^* - \PT{X_0} \As(z)) \\
    & = \PT{X_0}\As(z) + h_0 m_0^* - \PT{X_0} \As(z)) = h_0 m_0^*.
\end{align*}
Hence, $Y'$ fulfills property \eqref{property1exact}. Furthermore,
\begin{align*}
    \spec{\PTperp{X_0}Y' } &\leq \spec{\PTperp{X_0} Y   } + \spec{ \PTperp{X_0} \As (x)  } \nonumber \\
    &\leq \frac{1}{2} + \frob{ \PTperp{X_0} \As (x)  } \nonumber \\
    &\leq \frac{1}{2} + \spec{\A} \twonorm{x} \nonumber \\
    &\leq  \frac{1}{2} + \frac{1}{8 \sqrt{1-\delta}}  < 1, \label{PTperpwith34}
\end{align*}
i.e., $Y'$ fulfills property \eqref{property2exact}. Next, we estimate the $\ell_2$-norm of $z'$ via
\begin{align*}
    \twonorm{z'} = \twonorm{z + x} \leq \twonorm{z} + \twonorm{x} \leq \twonorm{z} + \frac{1}{8 \sqrt{1-\delta} \spec{\A}}.
\end{align*}
Now, since we have assumed that $\A$ fullfills the $\delta$-RIP on $\tangent{}$ with $\delta < 3/4$, it holds that 
\begin{equation*}
    \spec{\A} \geq 
\sqrt{1-\delta} \geq 1/2.
\end{equation*}
Therefore, we conclude
\begin{equation*}
    \twonorm{z'} \leq  \twonorm{z}  + \frac{1}{8 (1-\delta)} \leq \twonorm{z}  + 1.
\end{equation*}
\end{proof}
\subsection{Proof of Proposition \ref{exactDCexistence}}

Now we have all ingredients in place to give a proof of Proposition \ref{exactDCexistence}.

\begin{proof}[Proof of Proposition \ref{exactDCexistence}]
Assume that the number of observations $L$ satisfies \eqref{LnumforexistenceofDC}. We choose an $\omega$-admissible partition $\{ \Gamma_p \}_{p = 1}^{P}$ which minimizes the minimal coherence parameter \eqref{muhomegadefinition}. We construct $Y$ according to \eqref{golfingschemedefinition} and $z$ according to \eqref{zdefinition}. Then it holds that $Y = \As(z)$ (see \eqref{ApisA}). 
\\[5pt]
For now, we assume that $\A$ satisfies $\delta$-RIP with $\delta \leq \frac{1}{32}$ on $\tangent{X_0}$, operators $\A_p$ satisfy the $\delta$-RIP with $\delta \leq \frac{1}{32}$ on spaces $\tangent{}^p$ for all $p \in [P]$, and that the bound on the operator norm of $\A$ as in Lemma \ref{lemmaNormofA} holds. Furthermore, we assume that the conclusion of Lemma \ref{PTperpYlemma} holds. Then, according to Lemma \ref{lemmadecaybehaviour} and Lemma \ref{PTperpYlemma}, $Y$ satisfies both approximate dual certificate properties \eqref{property1} and \eqref{property2}, and thus $(z,Y)$ constitute a dual certificate pair. Moreover, according to Lemma \ref{lemmazbound}, $z$ satisfies 
\begin{equation*}
    \twonorm{z} \lesssim \sqrt{\log(\omega L) }.
\end{equation*}
Finally, according to Proposition \ref{puttingproposition}, there exists an exact dual certificate pair $(z', Y')$ for which it holds that $\twonorm{z'} \leq \twonorm{z} + 1$. Thus, it follows that the norm of the exact dual certificate is also bounded:
\begin{equation*}
    \twonorm{z'} \lesssim \sqrt{\log(\omega L) }.
\end{equation*}
We have conditioned on the following events: $\delta$-RIP for the operators $\A$ and $\A_p$, $p \in [P]$, the operator norm bound for $\A$, and the bound on the spectral norm of $\PTperp{} Y$. We now choose the absolute constant in the number of observations large enough such that $\delta$-RIP for $\A$ and $\A_p$, $p \in [P]$ with $\delta \leq \frac{1}{32}$ holds with probability at least $1-\mathcal{O}(L^{-\omega})$ according to Proposition \ref{RIPdef}, and such that the upper bound for the operator norm of $\A$ holds with probability at least $1-2 L^{-\omega}$ according to Lemma \ref{lemmaNormofA}. Moreover, according to Lemma \ref{PTperpYlemma}, the upper bound on the spectral norm of $\PTperp{} Y$ holds with probability at least $1 - \mathcal{O}(L^{-\omega})$. Adjusting the constant in the number of observations $L$ and taking a union bound then finalizes the proof. 
\end{proof}

\end{document}